\tikzset{quantum/.style={decorate, decoration=snake}}
\newcommand{\ket}[1]{\left|#1\right\rangle}
\newcommand{\ketbra}[2]{\left|#1\rangle\langle#2\right|}
\newcommand{\abs}[1]{\lvert #1\rvert}
\newcommand{\norm}[1]{\| #1\|}
\newcommand{\QPVBBf}{$\mathrm{QPV}_{\mathrm{BB84}}^{f}$}
\newcommand{\tr}[1]{\mathrm{Tr}\left[#1\right]}
\newcommand{\ptr}[2]{\mathrm{Tr}_{#1}\left[#2\right]}
\theoremstyle{plain}
\newtheorem{theorem}{Theorem}[section]
\newtheorem*{theorem*}{Theorem}
\newtheorem{remark}[theorem]{Remark}
\newtheorem{prop}[theorem]{Proposition}
\newtheorem{definition}[theorem]{Definition}
\newtheorem{lemma}[theorem]{Lemma}
\newtheorem{corollary}[theorem]{Corollary}
\definecolor{darkred}{RGB}{179, 16, 32}
\definecolor{pine}{RGB}{1,121,111}
\newcommand{\diagdots}[3][-25]{%
  \rotatebox{#1}{\makebox[0pt]{\makebox[#2]{\xleaders\hbox{$\cdot$\hskip#3}\hfill\kern0pt}}}%
}
\begin{document}

\title{Making Existing Quantum Position Verification Protocols Secure Against Arbitrary Transmission Loss
}

\author[1,2]{Rene Allerstorfer}
\author[3]{Andreas Bluhm}
\author[2,4,5]{Harry Buhrman}
\author[6]{Matthias Christandl}
\author[2,5]{Lloren\c{c} Escol\`a-Farr\`as}
\author[2,5]{Florian Speelman}
\author[1,2,7]{Philip Verduyn Lunel}

\affil[1]{CWI Amsterdam, CWI, Amsterdam, The Netherlands}
\affil[2]{QuSoft, CWI Amsterdam, The Netherlands}

\affil[3]{Univ. Grenoble Alpes, CNRS, Grenoble INP, LIG, France}
\affil[4]{Quantinuum, Partnership House, Carlisle Place, London, United Kingdom}
\affil[5]{University of Amsterdam, Amsterdam, The Netherlands}
\affil[6]{University of Copenhagen, Copenhagen, Denmark}
\affil[7]{Sorbonne Universit\'e, CNRS, LIP6, Paris, France}

\renewcommand\Affilfont{\itshape\small}
\maketitle
\begin{abstract}
\noindent Signal loss poses a significant threat to the security of quantum cryptography when the chosen protocol lacks loss-tolerance. In quantum position verification (QPV) protocols, even relatively small loss rates can compromise security. The goal is thus to find protocols that remain secure under practically achievable loss rates. In this work, we modify the usual structure of QPV protocols and prove that this modification makes the potentially high transmission loss between the verifiers and the prover security-irrelevant for a class of protocols that includes a practically-interesting candidate protocol inspired by the BB84 protocol (\QPVBBf). This modification, which involves photon presence detection, a small time delay at the prover, and a commitment to play before proceeding, reduces the overall loss rate to just the prover's laboratory. The adapted protocol \textsf{c-}\QPVBBf~then becomes a practically feasible QPV protocol with strong security guarantees, even against attackers using adaptive strategies. As the loss rate between the verifiers and prover is mainly dictated by the distance between them, secure QPV over longer distances becomes possible. We also show possible implementations of the required photon presence detection, making \textsf{c-}\QPVBBf~a protocol that solves all major practical issues in QPV. Finally, we discuss experimental aspects and give parameter estimations.
\end{abstract}

\tableofcontents

\section{Introduction}

Imagine the following situation: You are sitting in front of your computer screen, looking at a website that looks like the website of your bank. But how can you make sure it is authentic? One way would be to verify that the server is indeed placed in the basement of your bank. Or, say, you see a photo or video online. One way to verify that it is most likely authentic, rather than fake or created by powerful AI, would be to verify \textit{where} it was created. That is the idea behind position-based cryptography, in which the geographic location of a party is used to authenticate it, without further cryptographic assumptions. 

The fundamental building block for this is, as in the example above, secure \emph{position verification}. For simplicity, we will focus in this article on the one-dimensional case, in which two verifiers ($V_0$ and $V_1$) want to securely verify the position $z$ of a prover ($P$) located between them. In particular, they need to be able to distinguish the honest situation from the case in which no-one is at the location to be verified, but two attackers (Alice and Bob) try to fool the verifiers while Alice is placed between $V_0$ and $z$ and Bob between $z$ and $V_1$.

Unfortunately, secure position verification with classical resources is impossible without further assumptions as shown in \cite{chandran_position_2009}, since classical information can be copied and therefore easily be distributed among the attackers. Quantum information, however, cannot be copied perfectly. This motivated the study of quantum information protocols for secure position verification, or quantum position verification (QPV) for short. The first proposals to this end resulted in a patent by Beausoleil, Kent, Munro, and Spiller published in 2006 \cite{KentPatent2006}. More proposals that were claimed to be secure followed in the academic literature in 2010~\cite{Malaney2010, MalaneyNoisy2010}. However, first ad-hoc attacks were found to compromise the security of these protocols~\cite{kent_quantum_2011,lau_insecurity_2011}, before a general attack on any QPV protocol was put forward by Buhrman, Chandran, Fehr, Gelles, Goyal, Ostrovsky, and Schaffner~\cite{buhrman_position-based_2011}. The attack makes ingenious use of quantum teleportation and requires a doubly exponential amount of pre-shared entangled pairs. This amount was later reduced to exponential by Beigi and K\"onig~\cite{beigi_simplified_2011} with the help of port-based teleportation \cite{grinko2023gelfand, fei2023efficient, grinko2023efficient}. This idea was subsequently generalized to other settings in~\cite{gao2013,gao2016quantum,dolev2019}.

While these results have proved that unconditionally secure protocols for QPV are impossible, the aim shifted to proving practical security of QPV protocols. Since it is hard to generate and maintain entanglement, it would be enough to find protocols which need an unrealistically large amount of entanglement to attack them to have information-theoretic security in practice. Therefore, the main interest at present is to consider security against bounded attackers. For example, the QPV$_{\text{BB84}}$ protocol \cite{kent_quantum_2011}, inspired by the BB84 quantum key-distribution protocol, involves only a single qubit sent by $V_0$ in one of the four BB84 states $\ket{0}$, $\ket{1}$, $\ket{+}$, or $\ket{-}$. This protocol is secure against unentangled attackers \cite{buhrman_position-based_2011}, but can be broken by attackers sharing a single entangled pair~\cite{lau_insecurity_2011}. However, this protocol allows for parallel repetition, such that $\Theta({n})$ entangled pairs are required to break its $n$-fold parallel repetition \cite{tomamichel_monogamy--entanglement_2013,ribeiro_tight_2015}. In practice, the fact that the entanglement needed scales with the amount of rounds played in parallel is not a very strong security guarantee, since the honest prover also needs to manipulate an equal amount of qubits as there are rounds. Ideally, we would like to find protocols where the honest prover has to manipulate a small quantum system, while the attackers need to pre-share a very large entangled state, i.e., many EPR pairs. Significant progress to this problem was made in \cite{bluhm2022single}, with a different version of the protocol, QPV$^f_{\text{BB84}}$. Here the basis in which the honest prover needs to apply his measurement is determined by a classical function $f$ depending on two $n$-bit input strings $x,y$. In the paper the authors prove security against $\Omega({n})$ entangled pairs pre-shared by the attackers for a random function $f$. Note that in this protocol there is only a single qubit, but the required quantum resources for an attack scale at least linearly in the classical information sent. For an honest prover it is much easier to do some computation on classical inputs than on quantum inputs. It has the additional advantage of being secure even with slowly traveling qubits, as for example qubits sent over optical fiber, where transmission speed is typically $2/3$ the speed of light. Moreover, in a future quantum network it will likely often be the case that there is no direct link between the verifiers and the prover wanting to run a QPV protocol, further emphasizing the need for protocols that can deal with slow quantum information. Other protocols combining classical and quantum information can be found in~\cite{kent_quantum_2011,chakraborty_practical_2015, unruh_quantum_2014,junge2021geometry,qi_loss-tolerant_2015,allerstorfer2023security}. Attacks for such protocols have also been analyzed in~\cite{buhrman2013garden, speelman2016,olivo_breaking_2020}. In particular, in a recent breakthrough by some of the present authors, subexponential upper bounds have been proved for attacks on the qubit routing protocol based on conditional disclosure of secrets schemes \cite{allerstorfer2023relating}.  Alternative models of security use oracles \cite{unruh_quantum_2014} or computational assumptions \cite{liu2022beating}.

Although the protocol QPV$^f_{\text{BB84}}$ is also resistant against small amounts of noise and loss as shown in \cite{bluhm2022single,escolafarras2022singlequbit}, none of the above protocols is proved secure under conditions consistent with current technologies, where the main source of error is photon loss. Using optical fibre, photon transmission decays exponentially in the distance and at some point almost all photons will be lost. This can compromise security in QPV protocols that are not loss tolerant, and immediately makes QPV$^f_{\text{BB84}}$ insecure in basically any practical setting. This is a major downside of QPV$^f_{\text{BB84}}$, since apart from this issue it has the most desirable properties of all known proposed protocols.

A common approach to deal with photon loss is to disregard rounds in which the prover claims that a photon was lost during transmission. Regrettably, this approach renders these protocols vulnerable to attackers since the attacks can take advantage of the photon loss by claiming the photon was lost if they risk being detected. Recent progress towards addressing this major obstacle to protocols that can be implemented on current devices has been made in \cite{allerstorfer2021towards, allerstorfer2022role}, where fully loss-tolerant protocols were studied. However, those protocols were found to be vulnerable against simple entanglement-based attacks. And even though loss is not an issue in \cite{liu2022beating} as all the communication is classical, their protocol requires a large quantum computer at the prover to prepare the states used in it and therefore is not viable in the near-term. So far, however, a protocol has been lacking that is both provably secure against realistic attacks while still being implementable with current technologies.

\subsection{Results}

In our contribution, we focus on the design of such a practically feasible and secure QPV protocol. We introduce a structural modification to QPV where, instead of the verifiers sending the information to the prover such that all information arrives at the same time, the quantum information shall arrive slightly before the classical information. The prover confirms the reception of the quantum information, and \emph{commits} to playing, after which he receives the classical information to complete the task. In this way, for a QPV protocol $\mathsf P$, we define its \emph{committing} version \textsf{c-}$\mathsf{P}$.

Consider a secure QPV protocol $\mathsf{P}$ with classical prover responses, which remains secure when played in sequential repetition and in which the honest quantum information is allowed to travel slowly (like \QPVBBf). This implies that the protocol is \emph{state-independent}, in the sense that the attackers can replace the input state with any other quantum state. Then our main result states that for every such QPV protocol $\mathsf{P}$, its committing version \textsf{c-}$\mathsf{P}$ inherits the security of $\mathsf{P}$, while becoming fully loss-tolerant against transmission loss. Denoting by $\eta_V$ the transmission rate from the verifiers to the prover and by $\eta_P$ the one within the prover's laboratory (between committing and receiving the classical information), we informally state our main result, Theorem~\ref{thm: main theorem upper bounding commit attack}, as follows:

\begin{theorem*}[Informal]\label{thm:theorem_committing_intro}
The success probability of successfully attacking \emph{\textsf{c-}}$\mathsf{P}$ (with both $\eta_V$ and $\eta_P$) reduces to the probability of attacking $\mathsf{P}$ (with only $\eta_P)$:
\begin{align}
        \mathbb P[\mathrm{attack } \, \emph{\textsf{c-}}\mathsf {P}_{\eta_{V},\eta_P}] \leq \mathbb P[\mathrm{attack } \, \mathsf P_{\eta_{P}}]+\frac{1}{k},
    \end{align}
where $k$ is a parameter defined by the number of rounds $N$ in which both attackers commit and grows roughly as $O(N^{\frac{1}{7}})$.
\end{theorem*}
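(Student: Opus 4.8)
The plan is to analyze the committing version $\textsf{c-}\mathsf{P}$ by exploiting the fact that the adversaries must commit to playing \emph{before} they receive the classical information that determines their task. The key structural observation is that the commitment step forces the attackers, after having received and processed the quantum information, to produce a single bit indicating ``we will answer this round.'' Since in $\textsf{c-}\mathsf{P}$ the quantum information arrives first, the only loss the commitment cannot shield against is the loss $\eta_P$ inside the prover's laboratory; transmission loss $\eta_V$ occurs strictly before the commitment. So I would first set up the round structure formally: Alice and Bob each hold part of whatever was sent (quantum state plus whatever classical side information is available pre-commitment, which for $\QPVBBf$-type protocols is nothing), they run an arbitrary LOCC protocol, jointly decide to commit (this is a shared random variable, or more precisely a classically-correlated decision after one round of simultaneous communication), and only then receive $x$ and $y$ and must answer.

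Second, I would split the analysis according to the commitment bit. Conditioned on \emph{not} committing, the round is discarded and contributes nothing — but crucially, because the protocol demands a commitment \emph{rate} bounded below (the honest prover commits with probability essentially $\eta_V \eta_P$, and the verifiers will reject if the observed commit rate is too low), the attackers cannot simply refuse to commit on ``bad'' rounds without being caught. This is where the parameter $\tilde c$ enters: it quantifies the slack the verifiers allow in the commit rate, i.e. how much the attackers can bias \emph{which} rounds they commit to. I would use a statistical/combinatorial argument over the sequential repetition: if the attackers' committed rounds deviate from a uniformly random subset by more than $\tilde c$ in total variation, this is detectable, so up to a $2\tilde c$ correction we may treat the committed rounds as essentially input-independent. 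The factor $8\sqrt{\varepsilon}$ should come from a ``gentle measurement''-type or fidelity argument: the commitment decision is a measurement on the received quantum state, and because the protocol is state-independent, the post-commitment state (conditioned on committing) is $\varepsilon$-close to what it would have been had the attackers committed to a fixed, input-independent round — so swapping in that idealized state costs $8\sqrt{\varepsilon}$ via the standard $\|\rho-\sigma\|_1 \le 2\sqrt{1-F}$ and trace-distance-to-probability bounds.

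Third, having reduced to the case where the attackers commit on an (essentially) input-independent set of rounds with the correct quantum state in hand, I would observe that what remains is \emph{exactly} an attack on $\mathsf{P}$ itself, except that now the only loss present is $\eta_P$: the attackers have already ``received'' the quantum information successfully (they committed), and the residual in-lab loss $\eta_P$ is precisely the loss that $\mathsf{P}_{\eta_P}$ already accounts for. Hence $\mathbb P[\text{attack } \textsf{c-}\mathsf{P}_{\eta_V,\eta_P}] \le \mathbb P[\text{attack } \mathsf{P}_{\eta_P}] + (1-2\tilde c)\,8\sqrt\varepsilon + 2\tilde c$, where the $(1-2\tilde c)$ prefactor reflects that the $8\sqrt\varepsilon$ error only needs to be paid on the committed-and-not-flagged fraction. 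To make this rigorous I would want to invoke the assumed sequential-repetition security of $\mathsf{P}$ and its state-independence to handle adaptivity across rounds — the commitment in round $i$ may depend on outcomes of earlier rounds, so I'd phrase the reduction round-by-round, feeding the attacker's post-commitment state of round $i$ into the single-round $\mathsf{P}_{\eta_P}$ bound and summing.

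The main obstacle I expect is the middle step: rigorously controlling the correlation between \emph{which} rounds the attackers choose to commit to and the (later-revealed) inputs, while simultaneously arguing that the committed quantum state is close to input-independent. These two have to be handled together — the attackers could try to correlate the commit decision with a guess about the input, or pre-process the state adaptively — and pinning down $\varepsilon$ and $\tilde c$ as functions of the number of rounds and the verifiers' acceptance thresholds, with the right dependence, is the delicate part. The rest (state-independence giving the swap, sequential repetition giving the round-by-round reduction, loss bookkeeping) should be comparatively routine.
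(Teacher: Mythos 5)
Your high-level picture---commit after the quantum information but before the classical information, post-select on committed rounds, then invoke state-independence to reduce to $\mathsf{P}_{\eta_P}$---matches the paper's strategy, but your middle step rests on a mistaken model that removes the actual difficulty. You assume the attackers hold no classical side information at commit time (``which for \QPVBBf-type protocols is nothing'') and that the commit is a \emph{joint} decision reached after a round of simultaneous communication. In the paper's threat model neither is true: Alice has already intercepted $x$ when she must commit and Bob has $y$, but relativistic constraints force each to commit \emph{locally}, before hearing anything from the other. The commitment is therefore a pair of local instruments $\{\mathcal{I}^A_{c_A|x}\}$ and $\{\mathcal{I}^B_{c_B|y}\}$, and the danger is precisely that the post-commit state $\rho^{xy}$ could depend on $(x,y)$---an input-dependent resource that state-independence of $\mathsf{P}$ does not cover. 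Your proposal attributes the closeness of $\rho^{xy}$ to an input-independent state to ``state-independence,'' which is circular: state-independence is what you get to use \emph{once} input-independence of $\rho^{xy}$ has been established; it does not establish it. (And if a joint commit decision were allowed, the theorem would simply be false: the attackers could commit exactly when a basis guess succeeds, recreating the lossy attack.)

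The mechanism missing from your proposal is the consistency requirement $c_A=c_B$. Since Alice's commit depends only on $x$ and Bob's only on $y$, near-perfect agreement forces---via the gentle measurement lemma, after first decomposing each instrument into a POVM followed by a channel (Lemma~\ref{lemma: instrument and channel})---each commit measurement to barely disturb the state, so $\rho^{xy}$ can depend on neither $x$ nor $y$. Quantitatively the paper chains four gentle-measurement steps along $(x,y)\to(\cdot,y)\to(x',y)\to(x',\cdot)\to(x',y')$, each costing $2\sqrt{\varepsilon}$, which is where $8\sqrt{\varepsilon}$ comes from. Your reading of $\tilde c$ is also off: it is not slack in the commit \emph{rate} over rounds, but the fraction of the $2^{2n}$ input pairs on which the agreement probability is allowed to exceed $\varepsilon$; an edge-removal argument on the complete bipartite graph of inputs then produces a reference pair $(x_*,y_*)$ whose post-commit state is $8\sqrt{\varepsilon}$-close to those of at least $(1-2\tilde c)2^{2n}$ pairs, with the remaining fraction bounded trivially by $1$, yielding the $(1-2\tilde c)8\sqrt{\varepsilon}+2\tilde c$ terms. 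The round-counting and adaptivity considerations you describe belong to the later parameter-estimation corollaries, not to this single-round bound.
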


This means that the potentially very high loss between the verifiers and the prover, $1-\eta_V$, becomes irrelevant to security in \textsf{c-}$\mathsf{P}_{\eta_{V},\eta_P}$ and only the much smaller loss at the prover's laboratory, $1-\eta_P$, matters. And for sufficiently high values of $\eta_P$ we often have security guarantees, e.g.\ for \QPVBBf~\cite{bluhm2022single, escolafarras2022singlequbit}. In theory, for an ideal prover, \textsf{c-}$\mathsf{P}_{\eta_{V},\eta_P}$ becomes fully loss-tolerant.

If we demand perfect coordination in commitments for all possible inputs, which is expected from the honest prover, then our result reduces to
\begin{align}\label{thm:pattackequal_intro}
    \mathbb P[\mathrm{attack } \, \textsf{c-}\mathsf{P}_{\eta_{V},\eta_P}] = \mathbb P[\mathrm{attack } \, \mathsf P_{\eta_{P}}],
\end{align}
as the other direction $\mathbb P[\mathrm{attack } \, \mathsf P_{\eta_{P}}] \leq \mathbb P[\mathrm{attack } \, \textsf{c-}\mathsf {P}_{\eta_{V},\eta_P}]$ is simple to see\footnote{The attackers can just pre-agree to commit with a rate $\eta_V$ and use the strategy of $\mathsf P_{\eta_{P}}$ to produce the answers for \textsf{c-}$\mathsf{P}_{\eta_{V},\eta_P}$.}. The above theorem allows us to make our argument robust, as very small values of incorrect coordination (relative to the $2^{2n}$ input pairs $x,y$) could, in principle, help attackers, while leaving them undetected with high probability.

We further prove that the success probability for attacking our protocol decays exponentially with the number of (sequentially repeated) rounds run, even if attackers are allowed to use adaptive strategies.

Applying our results to \QPVBBf, we show that quantum position verification is possible even if the loss is arbitrarily high, the (constant-sized) quantum information is arbitrarily slow, and attackers pre-share some entanglement (bounded in the classical message length $n$). The question of a super-linear lower bound on the required resources for a quantum attack still remains open.

Finally, we study two possible ways of implementing the non-demolition photon presence detection step of our protocol: true photon presence detection as demonstrated in \cite{niemietz2021nondestructive} as a potential long-term solution, and a simplified photon presence detection based on a partial Bell measurement \cite{michler1996interferometric} at the prover that is technologically feasible today. In the latter, the honest prover essentially teleports the input state of the protocol to himself and concludes the presence of that state based on a conclusive click pattern in the partial Bell measurement, in which case the quantum state got teleported and can be further acted on by the prover (e.g.~by a polarization measurement). We note that for the committing version of \QPVBBf,  \textsf{c-}\QPVBBf, no active feed-forward for the teleportation corrections is required, as they predictably alter the subsequent measurement outcome and thus can be classically corrected by the prover post-measurement. We identify the experimental requirements at the prover as: being able to generate an EPR pair, to do a partial Bell measurement, to store the teleported quantum state in a short delay loop until the classical input information $(x,y)$ arrives, and the ability to perform the protocol measurement based on $(x,y)$. The latter shall be possible fast enough such that the protocol rounds can be run with high frequency (say, MHz or ideally GHz). To that end, we argue that with top equipment, MHz rates are currently possible and GHz rates feasible in principle. We estimate from our results that approximately $10^9$ rounds are required to decisively verify a prover's location. Consequently, at a clock rate of 100 MHz, it would take around 10 seconds to run the protocol. Such a duration could still be sufficient for position verification, as a physical object cannot move significantly in a few seconds. We further note that our bounds could potentially be improved to require fewer rounds. Practically, also the signal-to-noise ratio of the photon presence detection is an important figure of merit that is relevant for the security of the protocol, which we discuss further in the experimental section of the paper. We argue that with state-of-the-art equipment our protocol can remain within its secure regime, even in practice.\footnote{As the numbers will strongly depend on the actual experimental setup of a demonstration, we only give estimations.}

To summarize, our main result holds more generally, but applied to \QPVBBf~we provide a new QPV protocol, \textsf{c-}\QPVBBf, that is a practically feasible QPV protocol with decent security guarantees in the most general setting, even in practice. This opens up the road for a first experimental demonstration of quantum position verification.

\section{Preliminaries}
Let $\mathcal{H}$, $\mathcal{H'}$ be finite-dimensional Hilbert spaces. We denote by $\mathcal{B}(\mathcal{H},\mathcal{H'})$ the set of bounded operators from $\mathcal{H}$ to $\mathcal{H'}$ and $\mathcal{B}(\mathcal{H})=\mathcal{B}(\mathcal{H},\mathcal{H})$.  Denote by $\mathcal{S}(\mathcal{H})$ the set of quantum states on $\mathcal{H}$,~i.e.\ $\mathcal{S}(\mathcal{H})=\{\rho\in\mathcal{B}(\mathcal{H})\mid \rho\geq0, \tr{\rho}=1)\}$. For $\rho,\sigma\in\mathcal{B}(\mathcal{H})$, a measure of distance between them is
\begin{equation}
    || \rho-\sigma||_1:=\tr{\sqrt{(\rho-\sigma)(\rho-\sigma)^{\dagger}}}.
\end{equation}
A linear map $\mathcal{E}:\mathcal{B}(\mathcal H)\rightarrow \mathcal{B}(\mathcal H')$ is a \emph{quantum channel} if it is completely positive and trace preserving (CPTP). 

\begin{lemma} \label{lemma thm Kraus decomposition} \emph{(Kraus representation \cite{Kraus1971311})}. A linear map $\Phi$ is completely positive and trace non-increasing if and only if there exist bounded operators $\{K_i\}_{i=1}^r$ such that for all density operators $\rho$,
\begin{equation}
    \Phi(\rho)=\sum_{i=1}^rK_i\rho K_i^{\dagger},
\end{equation}
with $\sum_{i=1}^rK_i^{\dagger} K_i\leq \mathbb{I}$, where $r$ is the Kraus rank. Moreover, $\Phi$ is trace-preserving,~i.e.\ a quantum channel, if and only if $\sum_{i=1}^rK_i^{\dagger} K_i = \mathbbm{1}$.
\end{lemma}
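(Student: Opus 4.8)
The plan is to prove the two implications separately, handling the straightforward direction first and then the structural one via the Choi--Jamio\l kowski correspondence.

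\textbf{Sufficiency.} Suppose $\Phi(\rho)=\sum_{i=1}^r K_i\rho K_i^\dagger$ with $\sum_i K_i^\dagger K_i\le\mathbb I$. For any auxiliary system and any $X\ge 0$ on $\mathcal H\otimes\mathcal H_{\mathrm{aux}}$ we have $(\Phi\otimes\mathrm{id})(X)=\sum_i (K_i\otimes\mathbb I)\,X\,(K_i\otimes\mathbb I)^\dagger\ge 0$, since each summand is of the form $AXA^\dagger$; hence $\Phi$ is completely positive. Trace non-increase is just cyclicity: $\tr{\Phi(\rho)}=\tr{\big(\sum_i K_i^\dagger K_i\big)\rho}\le\tr{\rho}$ for $\rho\ge 0$, and this is an equality for every $\rho$ exactly when $\sum_i K_i^\dagger K_i=\mathbb I$, which gives the trace-preserving case.

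\textbf{Necessity.} Fix an orthonormal basis $\{\ket j\}_{j=1}^{d}$ of $\mathcal H$ and put $\ket\Omega=\sum_{j=1}^{d}\ket j\otimes\ket j$. Define the Choi operator $C_\Phi=(\Phi\otimes\mathrm{id})(\proj\Omega)$. The first key step is Choi's observation that complete positivity of $\Phi$ is equivalent to $C_\Phi\ge 0$: one direction is the sufficiency argument above applied to $X=\proj\Omega$, and for the converse one writes an arbitrary ancilla state as a convex combination of pure states, Schmidt-decomposes each, and thereby reduces positivity of $(\Phi\otimes\mathrm{id})(\cdot)$ on it to the single maximally entangled instance already controlled by $C_\Phi\ge 0$. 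Given $C_\Phi\ge 0$, take its spectral decomposition and absorb the nonnegative eigenvalues into the eigenvectors: $C_\Phi=\sum_{i=1}^{r}\proj{v_i}$ with $r=\mathrm{rank}(C_\Phi)$. The second key step is to ``unvectorize'': the map $A\mapsto(A\otimes\mathbb I)\ket\Omega$ is a bijection from $\mathcal B(\mathcal H,\mathcal H')$ onto $\mathcal H'\otimes\mathcal H$, so there are unique $K_i$ with $\ket{v_i}=(K_i\otimes\mathbb I)\ket\Omega$. Using the inversion formula for the Choi isomorphism, $\Phi(\rho)=\ptr{2}{C_\Phi(\mathbb I\otimes\rho^{T})}$, together with $(\mathbb I\otimes\rho^{T})\ket\Omega=(\rho\otimes\mathbb I)\ket\Omega$ and $\ptr{2}{\proj\Omega}=\mathbb I$, a short computation yields $\Phi(\rho)=\sum_i K_i\rho K_i^\dagger$ for every $\rho$.

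\textbf{The trace condition and the main obstacle.} It remains to translate ``trace non-increasing'' into the operator inequality. For every state $\rho$ we now have $\tr{\rho}\ge\tr{\Phi(\rho)}=\tr{\big(\sum_i K_i^\dagger K_i\big)\rho}$, i.e.\ $\tr{\big(\mathbb I-\sum_i K_i^\dagger K_i\big)\rho}\ge 0$ for all $\rho\ge 0$; since $\mathbb I-\sum_i K_i^\dagger K_i$ is Hermitian, testing on rank-one projectors forces $\mathbb I-\sum_i K_i^\dagger K_i\ge 0$. If instead $\Phi$ is trace preserving, the same argument with equalities gives $\sum_i K_i^\dagger K_i=\mathbb I$. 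I expect the one genuinely non-elementary ingredient to be Choi's theorem --- that a single $d$-dimensional maximally entangled input suffices to certify complete positivity; everything downstream (the vectorization bijection, the inversion formula, and the bound $r\le d\cdot\dim\mathcal H'$ on the Kraus rank, immediate from $r=\mathrm{rank}(C_\Phi)$) is bookkeeping. Since the statement is credited to \cite{Kraus1971311}, one may either cite that step or prove it via the convexity-and-Schmidt reduction sketched above.
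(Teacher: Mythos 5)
The paper does not actually prove this lemma---it is stated as a known result with a citation to Kraus, and no argument is given in the text---so there is nothing internal to compare your proof against; I can only assess it on its own terms. Your proof is the standard Choi--Jamio\l kowski argument and it is correct: sufficiency via termwise positivity of $(K_i\otimes\mathbb{I})X(K_i\otimes\mathbb{I})^\dagger$ and cyclicity of the trace, necessity via positivity of the Choi operator, spectral decomposition, and unvectorization, and the operator inequality $\sum_i K_i^\dagger K_i\leq\mathbb{I}$ extracted by testing $\tr{(\mathbb{I}-\sum_i K_i^\dagger K_i)\rho}\geq 0$ on rank-one projectors. One small remark on your closing paragraph: the ``hard'' direction of Choi's theorem (that $C_\Phi\geq 0$ implies complete positivity) is not actually needed anywhere in your argument. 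For necessity you only use the trivial direction, namely that complete positivity of $\Phi$ forces $C_\Phi=(\Phi\otimes\mathrm{id})(\proj{\Omega})\geq 0$; once the Kraus operators are extracted from the spectral decomposition of $C_\Phi$, complete positivity of the resulting map is re-derived by your sufficiency argument, so the convexity-and-Schmidt reduction you sketch is superfluous. With that observation the proof is entirely elementary modulo the inversion formula $\Phi(\rho)=\ptr{2}{C_\Phi(\mathbb{I}\otimes\rho^{T})}$, which is a direct computation in the chosen basis.
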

Let $\Omega$ be a finite outcome set. A \emph{quantum instrument} $\mathcal{I}$ is a set of completely positive linear maps $\{\mathcal{I}_i\}_{i\in\Omega}$ such that $\sum_{i\in\Omega}\mathcal{I}_i$ is trace preserving. Given the quantum state $\mathcal{\rho}\in\mathcal{S}(\mathcal{H})$, the probability of obtaining outcome $i$ is given by $\tr{\mathcal{I}_i(\rho)}$ and the sub-normalized output state upon outcome $i$ is $\mathcal{I}_i(\rho)$.

\subsection{Introduction to QPV}

All proposed QPV protocols rely on both relativistic constraints and the laws of quantum mechanics for their security. The QPV literature usually focuses on the 1-dimensional case, so verifying the position of a prover $P$ on a line, as it makes the analysis easier and the main ideas generalize to higher dimensions.

The usual general setting for a 1-dimensional QPV protocol is the following: two verifiers $V_0$ and $V_1$, placed on the left and right of $P$, send quantum and/or classical messages to $P$ at the speed of light. $P$ has to pass a challenge and to reply correctly to them with a signal at the speed of light as well. The verifiers have perfectly synchronized clocks and if any of them receives an inconsistent answer or if the timing of the answers is not as expected from the honest prover, they abort the protocol\footnote{The time consumed by the prover to perform the task is assumed to be negligible relative to the total protocol time}.

We will mainly focus on one type of QPV protocol, \QPVBBf~\cite{bluhm2022single}. This protocol is well studied, easy to implement and the lower bounds on the required quantum resources to attack them scale linearly in the classical input size. However, it is not loss-tolerant enough for practical purposes. We set out to solve this issue in this work.

\begin{remark}
    We describe the \QPVBBf~protocol in its \emph{purified} version, where a verifier sends half of an EPR pair instead of a single qubit, as they would do in its \emph{prepare-and-measure} version. Both versions are equivalent, but we use the purified version for our proof analysis. 
\end{remark}

\begin{definition} \label{def qpv bb84 f} \emph{(\QPVBBf~protocol \cite{bluhm2022single,escolafarras2022singlequbit})}.
Let $n\in\mathbb{N}$, and consider a $2n$-bit boolean function $f:\{0,1\}^n \times \{0,1\}^n \to \{0,1\}$. A round of the \QPVBBf~protocol is described as follows.
\begin{enumerate}
    \item $V_0$ prepares the EPR pair $\ket{\Phi^+}=(\ket{00}+\ket{11})/\sqrt{2}$ and sends one qubit $Q$ of $\ket{\Phi^+}$ and $x\in\{0,1\}^n$ to $P$ and $V_1$ sends $y\in\{0,1\}^n$ to $P$ such that all information arrives at $P$ simultaneously. The classical information is required to travel at the speed of light, the quantum information can be sent arbitrarily slowly.
    \item Immediately, $P$ measures $Q$ in the basis $f(x,y)$\footnote{Usually, the two bases correspond to the computational and the Hadamard basis, justifying the nomenclature of \QPVBBf. If $m$ basis choices are possible, the range of $f$ will be $\{0, 1, \dots, m-1\}$.} and broadcasts his outcome $a \in \{0, 1 \}$ to $V_0$ and $V_1$. If the photon is lost, he sends `$\perp$'.
    \item The verifiers measure the qubit they kept in the basis $f(x,y)$, getting outcome $v \in \{ 0,1 \}$. They accept if $a=v$ and $a$ arrives on time. They record `photon loss' if they both receive `$\perp$' on time. If either the answers do not arrive on time or are different, the verifiers abort.  
\end{enumerate}
\end{definition}

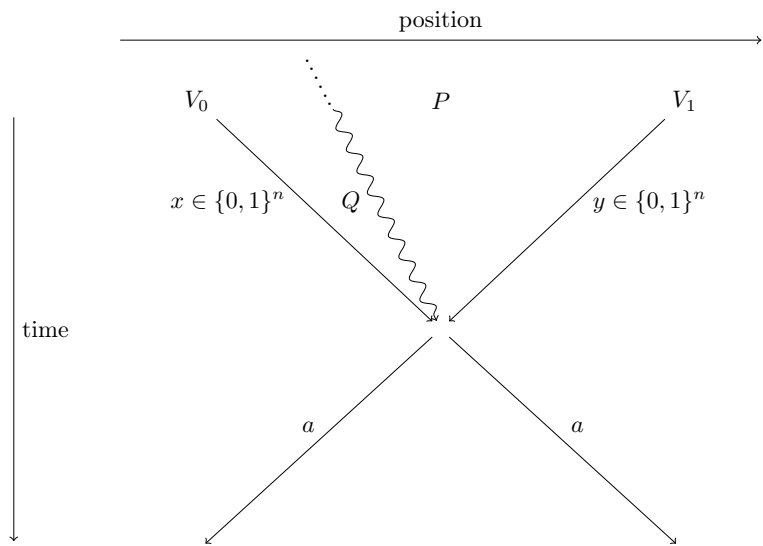
\begin{figure}[htbp]
    \centering
    \scalebox{0.9}{
    \begin{tikzpicture}[node distance=3cm, auto]
    \node (A) {$V_0$};
    \node [left=1cm of A] {};
    \node [right=of A] (B) {$P$};
    \node [right=of B] (C) {$V_1$};
    \node [right=1cm of C] {};
    \node [below=of A] (D) {};
    \node [below=of B] (E) {};

    \node [above=-0.1cm of A] (N) {};
    \node [right=1.46cm of N] {$\diagdots[117]{2.5em}{0.1em}$};
    
    \node [right=1.5cm of A] (M) {};
    \node [left=1.5cm of C] (M2) {};
    
    \node [below=of C] (F) {};
    \node [below=of D] (G) {};
    \node [below=of E] (H) {};
    \node [below=of F] (I) {};
    \node [left= 6cm of E] (J) {};
    \node [below= 3cm of J] (K) {};
    \node [above= 3cm of J] (L) {};

    \draw [->, transform canvas={xshift=0pt, yshift=0pt}, quantum] (M) -- (E) node[midway] (x) {} ;
    \draw [->] (A) -- (E);
    \draw [->] (C) -- (E);
    \draw [][->] (E) -- (I) node[midway] (q) {$a$}; 
    \draw [][->] (E) -- (G); 

    \draw [->] (L) -- (K) node[midway] {time};

    \node[left=0.3cm of x, transform canvas={xshift=+ 2pt, yshift = +2 pt}] {$Q$};
    \node[left=1.4cm of x, transform canvas={xshift=+ 2pt, yshift = +2 pt}] {$x\in\{0,1\}^n$};
    \node[right = 2.6cm of x, transform canvas={xshift=+ 2pt, yshift = +2 pt}] {$y \in \{0,1\}^n$};
    \node[left = 3.5cm of q] {$a$};

    \node [above=0.5cm of A] (posV00) {};
    \node [left=1cm of posV00] (posV0) {};
    \node [above=0.5cm of C] (posV11) {};
    \node [right=1cm of posV11] (posV1) {};
    \draw [->] (posV0) -- (posV1) node[midway] {position};

    \node [right=0.8cm of C] (VP0) {};
    \node [right=4.55cm of E] (VPP) {};
    \node [right=0.8cm of I] (PV) {};
    \end{tikzpicture}
    }
\caption{Schematic representation of the \QPVBBf~protocol. Undulated lines represent quantum information, whereas straight lines represent classical information. The slowly travelling quantum system $Q$ originated from $V_0$ in the past.}
\label{fig:general classical and quantum info protocol}
\end{figure}

In the end, the verifiers accept the location of the prover $P$ if after multiple repetitions of single rounds they receive answers that are consistent with their known experimental parameters, i.e.\ if the number of `photon loss' answers is consistent with the transmission rate $\eta$,  and the number of wrong answers is consistent with the error in the experimental set-up. 

\subsubsection*{General structure of an attack on \QPVBBf}
In a general attack on the \QPVBBf~protocol, Alice and Bob act as follows.

\begin{enumerate}
    \item The attackers prepare a joint (possibly entangled) quantum state. 
    \item Alice intercepts the quantum information sent from $V_0$ and performs an arbitrary quantum channel. She keeps a part of the resulting state and sends the rest to Bob. Denote by $\rho$ their joint state at this stage (before communication).  
    \item Alice and Bob intercept $x$ and $y$, make a copy and send it to the other attacker, respectively. Both then can apply local quantum channels depending on $x$ (at Alice) and $y$ (at Bob) to $\rho$. Each can keep part of the resulting local state and send the other part to their fellow attacker.
    \item Upon receiving the information sent by the other party, each attacker can locally apply an arbitrary POVM depending on $(x,y)$ to obtain classical answers, which will be sent to $V_0$ and $V_1$, respectively.
\end{enumerate}
If there is loss in the protocol the attackers need to mimic the transmission rate of the prover.

\subsubsection*{Known properties of \QPVBBf}

Neglecting photon loss, \QPVBBf~was proven to be secure \cite{bluhm2022single} even if attackers pre-share a linear amount of qubits in the size of the classical information $n$. The main advantage of this protocol is that it only requires sending a single qubit whereas adversaries using an increasing amount of entanglement can be combatted solely by increasing the number of classical bits used in the protocol. In addition, \QPVBBf~has the advantage that the quantum information can travel arbitrarily slowly. However, photon loss constitutes a major problem. Consider the following easy-to-perform attack, where Alice makes a random guess for the value of $f(x,y)$ and just measures in the guessed basis and broadcasts the result to Bob. Both attackers intercept the classical information, make a copy and send it to their fellow attacker. After one round of simultaneous communication, each can compute $f(x,y)$ and both know if the initial guess was correct. If so, they send the outcome of the measurement, which is correct, to the verifiers. Otherwise, they claim no photon arrived. Alice's basis guess will be correct half of the time (or $1/m$ of the time for more basis choices) and therefore, if the transmission rate is such that $\eta\leq\frac{1}{2}$ (or $1/m$, respectively), the attackers will be correct whenever they answer and thus break the protocol. 

In \cite{escolafarras2022singlequbit}, the range $1/2 < \eta \leq 1$ was studied for \QPVBBf, and it was shown that the protocol remains secure for attackers who pre-share a linear amount of entanglement in $n$ and arbitrary slow quantum information. However, $\eta>\frac{1}{2}$ is only attainable for short distances. A way to bypass this, first shown independently in \cite{qi_loss-tolerant_2015} and \cite[Chapter 5]{FlorianThesis}, can be achieved by encoding the qubit $Q$ in more bases than just the computational and the Hadamard bases. In the first case, $Q$ is encoded in a uniformly random basis in the Bloch sphere, and security holds for reasonably high loss if the quantum information is sent at the speed of light and the attackers do not pre-share entanglement. Following the second approach, where $Q$ is encoded in $m$ bases in the Bloch sphere, \cite{escolafarras2022singlequbit} showed via semidefinite programming (whose size depends on $m$) that one can improve the loss-tolerance by increasing $m$, while preserving security against attackers who pre-share a linear amount of entanglement in $n$ and arbitrary slow quantum information. The specific cases of $m=3,5$ were worked out, showing that the protocol remains secure, preserving the other two properties, if up to 70\% of the photons are lost, making slightly larger distances than with two bases still feasible. 

In the next sections, we show how to make QPV for longer distances possible by slightly modifying the structure of the previously known protocols. This opens up a feasible route to the first experimental demonstration of a QPV protocol that captures security against the three major problems that the field faces: bounded attackers, photon loss (for large distances) and slow quantum information.

\section{QPV with a commitment}
One of the major issues in practical quantum cryptography is the transmission loss between the interacting parties. In the context of QPV a high loss between the verifiers and the prover can compromise security if the QPV protocol is not loss tolerant. Most QPV protocols are not loss tolerant, and the ones who are have other drawbacks, most notably being broken by an entanglement attack using only one pre-shared EPR pair \cite{lim_loss-tolerant_2016, allerstorfer2021towards} or requiring a large quantum computer at the prover and computational assumptions \cite{liu2022beating}.

To overcome this, we introduce the following modification to the structure of a certain class of QPV protocols. Let $\mathsf P_{\eta_V, \eta_P}$ be a QPV protocol with the verifiers sending quantum and classical information and the prover sending classical answers, where $\eta_V$ is the transmission rate between the verifiers and the prover, and $\eta_P$ is the transmission rate in the prover's laboratory.  We define its \emph{committing} version (or protocol with \emph{commitment}), denoted by \textsf{c-}$\mathsf P_{\eta_V,\eta_P}$, by introducing a small time delay $\delta>0$ between the arrival time of the quantum information and the classical information at the prover. When the quantum information arrives at $P$, he is required to commit to play ($c=1$) or not to play ($c=0$) the round. Only the $c=1$ rounds are later analyzed for security purposes. We will show that introducing this step will eliminate the relevance of the transmission rate $\eta_V$ from the verifiers to the prover for security. We prove that only the (potentially small) loss in the prover's laboratory $\eta_P$ will count now because of this post-selection on ``committed'' rounds. 

This trick can be applied to a class of QPV protocol that fulfills the necessary criteria of our proof. For concreteness, and because it is practically most interesting, we will focus on the case $\mathsf P_{\eta_V, \eta_P}=$ \QPVBBf, where we denote by \textsf{c-}\QPVBBf~the protocol with commitment.

\subsection{The protocol \textsf{c-}QPV\texorpdfstring{$^f_{\mathrm{BB84}}$}{TEXT}}
\noindent The \emph{committing} version of \QPVBBf~is described as follows. Again, we describe the protocol in its purified form, whereas in practice it might be simpler to implement its prepare-and-measure version.

\begin{definition} \label{def c_qpv bb84 f} Let $n\in\mathbb{N}$, and consider a $2n$-bit boolean function $f:\{0,1\}^n \times \{0,1\}^n \to \{0,1\}$. A round of the \QPVBBf~protocol with commitment, denoted by \emph{\textsf{c-}}\QPVBBf, is described as follows.

\begin{enumerate}
    \item $V_0$ prepares the EPR pair $\ket{\Phi^+}=(\ket{00}+\ket{11})/\sqrt{2}$ and sends one qubit $Q$ and $x\in\{0,1\}^n$ to $P$ and $V_1$ sends $y\in\{0,1\}^n$ to $P$ such that $x, y$ arrive a time $\delta > 0$ after $Q$ at $P$. The classical information is required to travel at the speed of light, the quantum information can be sent arbitrarily slowly.
    \item If the prover receives $Q$, he immediately confirms that and broadcasts the commitment bit $c=1$. Otherwise, he broadcasts $c=0$.
    \item If $c=1$, $P$ measures $Q$ in the basis $f(x,y)$\footnote{Again, for more basis choices, the range of $f$ would become $\{0, 1, \dots, m-1\}$.} as soon as $x, y$ arrive and broadcasts his outcome $a$ to $V_0$ and $V_1$. If the photon is lost in the time $\delta$ or during the measurement, he sends `$\perp$'.
    \item The verifiers collect $(c,a)$ and $V_0$ measures the qubit he kept in basis $f(x,y)$, getting result $v$. If $c=0$ they ignore the round. If $c=1$ they check whether $a = v$. If $c, a$ arrived at their appropriate times and $a=v$, they accept. They record `photon loss' if they both receive `$\perp$' on time. If any of the answers do not arrive on time or are different the verifiers abort.  
\end{enumerate}
\end{definition}

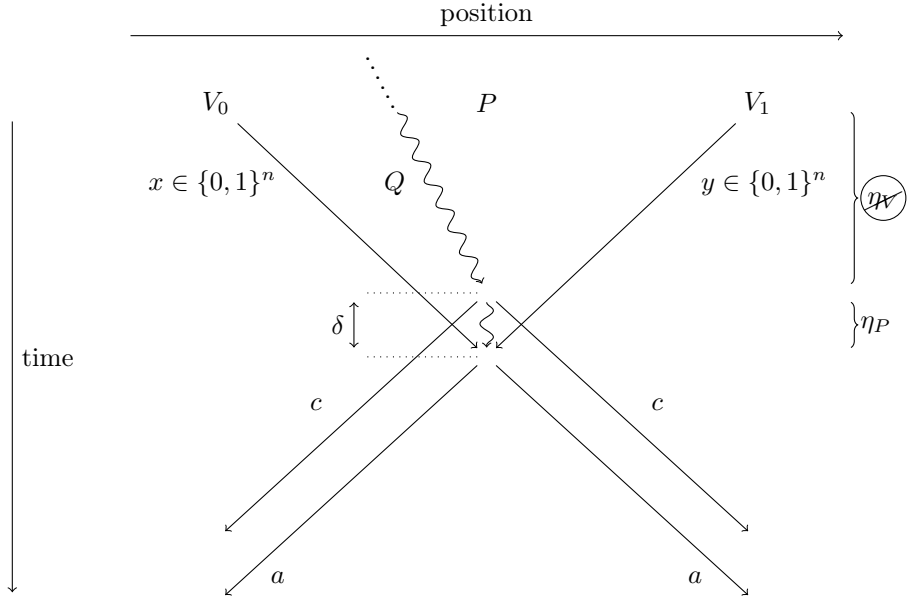
\begin{figure}[htbp]
    \centering
    \begin{tikzpicture}[node distance=3cm, auto]
    \node (A) {$V_0$};
    \node [left=1cm of A] {};
    \node [right=of A] (B) {$P$};
    \node [right=of B] (C) {$V_1$};
    \node [right=0.8cm of C] (VP0) {};
    
    \node [below=of A] (D) {};
    \node [below=of B] (E) {};
    \node [left=3cm of A] (T00) {};
    \node [below=0.01cm of T00] (T0) {};
    \node [below=0.01cm of C] (T0B) {};
    \node [above=0.5cm of A] (posV00) {};
    \node [left=1cm of posV00] (posV0) {};
    \node [above=0.5cm of C] (posV11) {};
    \node [right=1cm of posV11] (posV1) {};

    \node [above=-0.1cm of A] (N) {};
    \node [right=1.85cm of N] {$\diagdots[117]{2.5em}{0.1em}$};
    
    \node [right=1.9cm of A] (M) {};
    \node [left=1.9cm of C] (M2) {};
    
    \node [below=of C] (F) {};
    \node [below=of D] (G) {};
    \node [below=of E] (H) {};
    \node [below=of F] (I) {};
    \node [left= 6cm of E] (J) {};
    \node [below= 3cm of J] (K) {};
    \node [above= 3cm of J] (L) {};
    
    \node [above= 0.6cm of E] (P0) {};
    \node [right=4.55cm of P0] (VP1) {};
    \node [right=4.55cm of E] (VPP) {};
    
    \node [left=1.5cm of P0] (T12delta){};
    \node [left=1.5cm of E] (T12){};

    \draw [<->] (T12) -- (T12delta) node[midway] {$\delta$};
    
    \node [above= 0.6cm of I] (V0bis) {};
    \node [above= 0.6cm of G] (V1bis) {};
    
    \node [left= 2cm of E] (CommitA) {};
    \node [below= 0.3cm of CommitA] (CommitAbis) {$c$};
    \node [right= 2cm of E] (CommitB) {};
    \node [below= 0.3cm of CommitB] (CommitBbis) {$c$};
    
    \node [left= 2.5cm of E] (answerA) {};
    \node [below= 2.6cm of answerA] (answerAbis) {$a$};
    \node [left=3cm of G] (T1) {};
    \node [right= 2.5cm of E] (answerB) {};
    \node [below= 2.6cm of answerB] (answerABbis) {$a$};

    \draw[dotted][-](P0)--(T12delta){};
    \draw[dotted][-](E)--(T12){};

    \draw [decorate, decoration = {brace}] (VP0) -- (VP1) node[midway] { $\eta_{V}$};
    \draw [decorate, decoration = {brace}] (VP1) -- (VPP) node[midway] {$\eta_{P}$};

    \draw [->, transform canvas={xshift=0pt, yshift = 0 pt}, quantum] (M) -- (P0) node[midway] (x) {} ;

    \draw [->] (posV0) -- (posV1) node[midway] {position};
    \draw [->] (A) -- (E);
    \draw [->] (C) -- (E) node[midway] (question_y) {};
    \draw [][->] (E) -- (I) node[] (q) {}; 
    \draw [][->] (E) -- (G) node[] (q2) {}; 

    \node [right=1cm of I] (PV) {};
    
    \draw [->,transform canvas={xshift=0pt, yshift = 0 pt}, quantum]  (P0) -- (E) node[midway] (try) {};
    \draw [->] (P0) -- (V0bis) node[] (ca) {};
    \draw [->] (P0) -- (V1bis) node[] (cb) {};
    
    \draw [->] (L) -- (K) node[midway] {time};

    \node[left=0.4cm of x, transform canvas={xshift=+ 2pt, yshift = +2 pt}] {$Q$};
    \node[left=2.1cm of x, transform canvas={xshift=+ 2pt, yshift = +2 pt}] {$x\in\{0,1\}^n$};
    \node[right = 3cm of x, transform canvas={xshift=+ 2pt, yshift = +2 pt}] {$y \in \{0,1\}^n$};
\end{tikzpicture}
\caption{Schematic representation of the \textsf{c-}\QPVBBf~protocol. Undulated lines represent quantum information, straight lines represent classical information. The slowly traveling quantum system $Q$ originated from $V_0$ in the past. The novel aspects are the time delay $\delta > 0$ at the prover and the prover commitment $c \in \{ 0, 1 \}$. We show that for the security of this protocol, the transmission $\eta_V$ becomes irrelevant.}
\label{fig:protocol-cBB84f}
\end{figure}

\section{Security of QPV with commitment}
We consider a quantum position verification protocol $\mathsf P$ in a 1-dimensional setup such that (i) the verifiers send classical and quantum inputs to the prover, and (ii) the answers are classical, and the verifiers (iii) expect to receive `correct' ($\textsc{c})$, `incorrect' ($\textsc{i}$), and `no-photon' ($\perp$). If we want to make explicit that the transmission rate of the quantum information sent by the verifiers is $\eta$, we will denote the protocol by $\mathsf P_\eta$. The most general attack on $\mathsf P$  is to place an adversary, who we will call Alice, between $V_0$ and the position where the prover should be and another adversary, who we will call Bob, between the supposed prover location and $V_1$. It is easy to see that having more than two adversaries in a 1-dimensional setting does not improve an attack.  Before the protocol, the attackers prepare a joint (entangled) quantum state $\sigma$. Then, Alice and Bob intercept the information sent from the verifier closest to them, they make a copy and broadcast the classical information to their fellow attacker. Subsequently, they perform a quantum operation on the intercepted quantum information, keep a register and send another register to the other attacker. After one round of simultaneous communication, they both perform a POVM to obtain a classical answer, and they send it to their closest verifier, respectively. 

Denote by $x$ and $y$ the classical information sent from $V_0$ and $V_1$, respectively. Without loss of generality, consider them to be $n$-bit strings, and assume they are uniformly distributed. Denote by $\omega^{(x,y)}$ the quantum state after communication to which the attackers apply the POVM that gives the final answer. Fix a partition into systems $AA_\text{com}BB_\text{com}$, where `com' denotes the subsystems that will be communicated. We can write the attackers' POVMs as $\{ \Pi^{A,(x,y)}_{AB_\text{com},a}\}_{a\in\{0,1,\perp\}}$ and $\{ \Pi^{B,(x,y)}_{A_\text{com}B,b}\}_{b\in\{0,1,\perp\}}$, where we associate the outcomes to  `correct', `incorrect', and `no-photon' answers, denoted by $\textsc{c},\textsc{i},\perp$, respectively. 
Then, the probability that the attackers give the correct answers {while mimicking the response rate $\eta$} (and thus successfully attack the protocol) can be written as

\begin{equation}
    \mathbb P\left[\mathrm{attack }\hspace{1mm} \mathsf P_{\eta}\right]=\frac{1}{\eta}\frac{1}{2^{2n} }\sum_{x,y}\tr{ \left( \Pi_{AB_\mathrm{com},\textsc{c}}^{A, (x,y)} \otimes \Pi_{BA_\mathrm{com},\textsc{c}}^{B, (x,y)} \right) \omega^{(x,y)}_ {AA_\mathrm{com}BB_\mathrm{com}}}.
\end{equation}
Note that attackers need to mimic the loss rate of the honest prover, so the rate of $\perp$ responses must be $1-\eta$:

\begin{equation}\label{eq:mimic_eta_P}
    \frac{1}{2^{2n} }\sum_{x,y}\tr{ \left( \Pi_{AB_\mathrm{com},\perp}^{A, (x,y)} \otimes \Pi_{BA_\mathrm{com},\perp}^{B, (x,y)} \right) \omega^{(x,y)}_ {AA_\mathrm{com}BB_\mathrm{com}}} =1-\eta.
\end{equation}

For our security proof, we limit our attention to protocols that are still secure if the quantum information travels slower than the speed of light. This motivates the following definition:
\begin{definition}\label{def:state_indep} \emph{(State-independent protocol)}. We say that a QPV protocol $\mathsf P$ is \emph{state-independent} if the protocol remains secure independently of the state $\sigma$ that the attackers pre-share at the start of the protocol.
\end{definition}

\QPVBBf~is a state-independent protocol, since it remains secure for any starting state $\sigma$ whose number of qubits is linearly bounded (in $n$) \cite{bluhm2022single}.

\subsubsection*{General structure of an attack on \textsf{c-}$\boldsymbol{\mathsf{P}}$}
In a general attack on a \textsf{c-}QPV protocol where the quantum information is sent beforehand, Alice and Bob act as follows.

\begin{enumerate}
    \item The attackers intercept the quantum information from their respective closest verifiers. Since the quantum information is sent beforehand they can apply any joint operation on it and distribute the quantum information as they wish. Note that they can also distribute entanglement in this step.
    \item Alice and Bob intercept $x$ and $y$, make a copy and send it to the other attacker, respectively. Due to relativistic constraints, they have to commit before they receive the classical information from the other party. Alice and Bob apply local quantum instruments $\{\mathcal{I}^{A}_{c_A|x}\}_{c_A\in\{0,1\}}$ and $\{\mathcal{I}^{B}_{c_B|y}\}_{c_B\in\{0,1\}}$ on their registers of $\rho$ to determine the commitments $c_A$ and $c_B$, respectively. They send off the commitments $c_A, c_B$ to their respective verifier at the appropriate time. If $c_A=c_B=0$ or $c_A\neq c_B$, no further action is required, since in the first case the verifiers do not expect any more answers, and in the second case, the protocol is aborted. For $c_A=1$ and $c_B=1$, which will be the case from now on, Alice and Bob will use the  post-selected state $\Tilde{\mathcal{I}}_{1}^{xy}(\rho)=\mathcal{I}_1^{xy}(\rho)/\tr{\mathcal{I}_1^{xy}(\rho)}$, where  $\mathcal{I}_1^{xy}=\mathcal{I}^{A}_{1|x} \otimes \mathcal{I}^{B}_{1|y}$. Alice can send a share of her state to Bob and vice versa.
    \item Upon receiving the information sent by the other party, each attacker can again locally apply an arbitrary quantum channel depending on $(x,y)$, followed by local POVMs on the state they share to obtain classical answers which will be sent to $V_0$ and $V_1$, respectively, if $c_A=1$ and $c_B=1$. Similarly to before, define a partition $AA_\mathrm{com}BB_\mathrm{com}$ and denote the final state on which they measure by $\omega^{\mathcal{I}_1, (x,y)}$.
\end{enumerate}
The attack structure is depicted in Figure~\ref{fig:attack-cBB84f}. Then the probability that the attackers answer the correct values to the verifiers is given by
\begin{equation}\label{eq prob attack with instruments}
   \mathbb P\left[\mathrm{attack }\hspace{1mm} \textsf{c-}\mathsf{P}_{\eta_V, \eta_P}\right]=\frac{1}{\eta_P}\frac{1}{2^{2n} }\sum_{x,y}\tr{ \left( \Pi_{AB_\mathrm{com},\textsc{c}}^{A, (x,y)} \otimes \Pi_{BA_\mathrm{com},\textsc{c}}^{B, (x,y)} \right) \omega^{\mathcal{I}_1, (x,y)}_ {AA_\mathrm{com}BB_\mathrm{com}}}.
\end{equation}
Here, the attackers need to mimic the transmission rate of the prover's laboratory $\eta_P$ in the rounds they commit to play, i.e. 
\begin{equation}\label{eq:mimic_eta_P_commitment}
    \frac{1}{2^{2n} }\sum_{x,y}\tr{ \left( \Pi_{AB_\mathrm{com},\perp}^{A, (x,y)} \otimes \Pi_{BA_\mathrm{com},\perp}^{B, (x,y)} \right) \omega^{\mathcal{I}_1, (x,y)}_ {AA_\mathrm{com}BB_\mathrm{com}}}= 1-\eta_P.
\end{equation}

\begin{figure}[htbp]
    \centering
    \begin{tikzpicture}[node distance=3cm, auto]
    \node (V0) {$V_0$};   
    \node [left=2cm of V0](t0){};
    \node [below=8cm of t0](t1){};
    \draw [->] (t0) -- (t1) node[midway] {time};

    \node [right=1cm of V0] (A) {$A$};
    \node [right= 5cm  of A] (B) {$B$};
    \node [right=1cm of B] (V1) {$V_1$};
    \node [below =0.5cm of V0](below_V0){};
    \node [below =0.5cm of V1](below_V1){};
    \node [right =4cm of below_V0](middle){$\rho$};
    \node [above =0.25cm of middle](middle0){};

    \node [below =0.1cm of middle](middle0){$\Tilde{\mathcal{I}}_{1}^{xy}(\rho)$};
    \node [below =0.5cm of A](A_intercepts_Q){};
    \node [below =0.5cm of B](B_intercepts_Q){};
    \node [below =0.5cm of A](A_intercepts_x){$\mathcal{I}^{A}_{c_A|x}$};
    \node [below =0.5cm of B](B_intercepts_y){$\mathcal{I}^{B}_{c_B|y}$};

    \node [above=0.5cm of A] (posV00) {};
    \node [left=2.5cm of posV00] (posV0) {};
    \node [above=0.5cm of B] (posV11) {};
    \node [right=2.5cm of posV11] (posV1) {};
    \draw [->] (posV0) -- (posV1) node[midway] {position};

    \node [below =5cm of A](A_commits){};
    \node [below =5cm of B](B_commits){};
    \node [below =6cm of A](A_answers){};
    \node [below =0.1cm of A_answers, ](A_POVM){$\{ \Pi^{A,(x,y)}_{AB_\text{com},a}\}$};
    \node [below =6cm of B](B_answers){};
    \node [below =0.1cm of B_answers](B_POVM){$\{ \Pi^{B,(x,y)}_{A_\text{com}B,b}\}$};

    \node [left=0.3cm of A](leftA){};
    \node[below=5.6cm of leftA](cA){$c_A$};
    \node [left=0.2cm of A](leftA2){};
    \node[below=6.6cm of leftA2](aA){};

    \node [right=0.3cm of B](leftB){};
    \node[below=5.6cm of leftB](cB){$c_B$};
    \node [right=0.2cm of B](leftB2){};
    \node[below=6.6cm of leftB2](bB){};
    
    \node [below =6.2cm of V0](V0bis){};
    \node [below =6.2cm of V1](V1bis){};
    \node [below =7.2cm of V0](V0bis1){};
    \node [below =7.2cm of V1](V1bis1){};

    \draw [->, transform canvas={xshift=0pt, yshift = 0 pt}, quantum] (A_intercepts_x) -- (B_answers) node[midway] (x) {};

    \draw [->, transform canvas={xshift=0pt, yshift = 0 pt}, quantum] (B_intercepts_y) -- (A_answers) node[midway] (x) {};

    \draw [->] (A_intercepts_x) -- (A_commits) {};
    \draw [->] (B_intercepts_y) -- (B_commits) {};
    \draw [->] (A_commits) -- (cA) {};
    \draw [->] (A_answers) -- (aA) {};
    \draw [->] (B_commits) -- (cB) {};
    \draw [->] (B_answers) -- (bB) {};

    \draw[dotted][-](A_intercepts_x)--(B_intercepts_y){};
    \draw[dotted][-](A_answers)--(B_answers){};
    \node [right =2.5cm of A_answers](middle2){};
    \node [below =4.7cm of middle](middle3){$\omega^{\mathcal{I}_1, (x,y)}$};
\end{tikzpicture}
\caption{Schematic representation of a general attack on a \textsf{c-}QPV protocol, where straight lines represent classical information, and undulated lines represent quantum information, including $x$ and $y$.}
\label{fig:attack-cBB84f}
\end{figure}
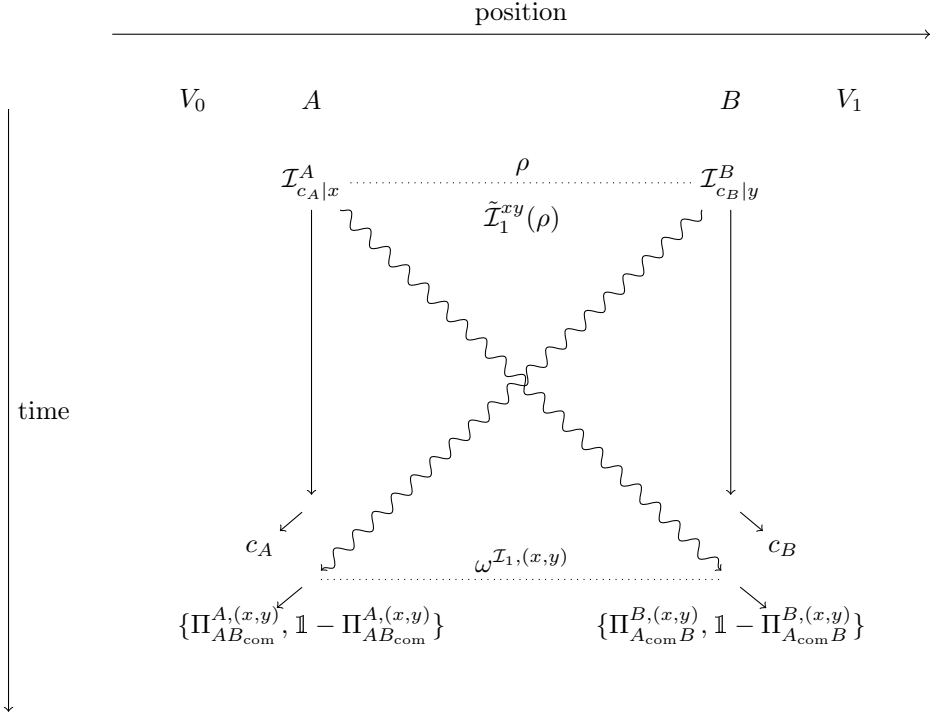~

\subsection{Security proof}\label{sec:Sec_proof}

We move on to prove the security of c-QPV. The idea is to reduce the security of a protocol with commitment \textsf{c-}$\mathsf{P}_{\eta_V, \eta_P}$ to the one of the underlying protocol without commitment $\mathsf{P}_{\eta_P}$ and (much larger) transmission rate $\eta_P$ with $\eta_V$ becoming irrelevant. The intuition is as follows. We decompose the respective quantum instruments of the attackers into a POVM measurement followed by a quantum channel. This can in fact always be done, see, e.g., Thm.\ 7.2 in \cite{hayashi2016quantum} or Lemma \ref{lemma: instrument and channel}. The respective measurement outcomes correspond to the attacker committing to play the protocol or not. To not be caught, both commitments have to be equal. We use this fact to our advantage and show, using the Gentle Measurement Lemma (Lemma \ref{lemma: gentle measurement}), that the \textit{post-measurement state} must be independent from $x,y$, and can therefore be replaced by some fixed state $\tau_{AB}$. The quantum channels that the attackers apply on their post-measurement state \textit{can} depend on $x,y$. Since these operations are only applied when both attackers commit they can also be applied in attacks on the original protocol $\mathsf{P}$. Now note that if the underlying protocol $\mathsf{P}_{\eta_P}$ remains secure for any adversarial input state that is independent of $x,y$, the attackers find themselves in the same situation as attacking $\mathsf{P}_{\eta_P}$ (with input $\tau_{AB}$) when they attack \textsf{c-}$\mathsf{P}_{\eta_V, \eta_P}$. Then, the success probability of attacking \textsf{c-}$\mathsf{P}_{\eta_V, \eta_P}$ should be close to the success probability of attacking $\mathsf{P}_{\eta_P}$, and we inherit security of the underlying protocol in its committing version.

We start by showing that any quantum instrument can be decomposed into a measurement followed by a quantum channel turns out to be a crucial ingredient in our proof. We include a short proof of it for convenience.

\begin{lemma}\label{lemma: instrument and channel} \emph{(E.g.\ Thm 7.2 in \cite{hayashi2016quantum})}
Let $\mathcal{I}=\{\mathcal{I}_i\}_{i\in\Omega}$ be an instrument, and $\{M_i\}_i$ its corresponding POVM, i.e.\ $\mathcal{I}_i^\dagger(\mathbbm{1}) = M_i$. 
Then, for every $i\in\Omega$, there exists a quantum channel (CPTP map) $\mathcal{E}_i$ such that  
    \begin{equation}
        \mathcal{I}_i(\rho)=\mathcal{E}_i \left(\sqrt{M_i}\rho\sqrt{M_i}\right)
    \end{equation}
\end{lemma}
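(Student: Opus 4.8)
The plan is to prove Lemma~\ref{lemma: instrument and channel} by first reducing to the case of a single Kraus operator and then handling the general sum via a dilation trick. First I would recall from the Kraus representation (Lemma~\ref{lemma thm Kraus decomposition}) that each completely positive map $\mathcal{I}_i$ can be written as $\mathcal{I}_i(\rho) = \sum_{j} K_{ij}\rho K_{ij}^\dagger$ for some bounded operators $\{K_{ij}\}_j$, and that the associated POVM element is $M_i = \mathcal{I}_i^\dagger(\mathbbm{1}) = \sum_j K_{ij}^\dagger K_{ij}$. The goal is to exhibit a channel $\mathcal{E}_i$ with $\mathcal{I}_i(\rho) = \mathcal{E}_i(\sqrt{M_i}\,\rho\,\sqrt{M_i})$.

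The key algebraic step is a polar-type decomposition: I want operators $L_{ij}$ such that $K_{ij} = L_{ij}\sqrt{M_i}$ with $\sum_j L_{ij}^\dagger L_{ij} \leq \mathbbm{1}$ (equality on the support of $M_i$). On the support of $M_i$ one can simply set $L_{ij} = K_{ij} (\sqrt{M_i})^{-1}$, where the inverse is taken on the support; then $\sum_j L_{ij}^\dagger L_{ij} = (\sqrt{M_i})^{-1} M_i (\sqrt{M_i})^{-1} = \mathbbm{1}_{\mathrm{supp}(M_i)}$ restricted appropriately, and one checks $K_{ij} = L_{ij}\sqrt{M_i}$ holds on all of $\mathcal{H}$ since $\sqrt{M_i}$ kills the orthogonal complement, which is exactly where $K_{ij}$ must also vanish (because $0 = \langle \psi | M_i |\psi\rangle = \sum_j \|K_{ij}\psi\|^2$ for $\psi \in \ker M_i$). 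To get a genuine channel rather than a trace-non-increasing map, I would append an extra Kraus operator absorbing the defect on $\ker M_i$, e.g.\ add $L_{i,0}$ mapping onto a fixed pure state with $L_{i,0}^\dagger L_{i,0} = \mathbbm{1} - \sum_j L_{ij}^\dagger L_{ij}$; since that operator is supported on $\ker M_i$ and $\sqrt{M_i}\rho\sqrt{M_i}$ has no overlap with $\ker M_i$, this added branch contributes nothing when composed with $\sqrt{M_i}\,\cdot\,\sqrt{M_i}$, so correctness is preserved while CPTP is achieved. Define $\mathcal{E}_i(\sigma) = \sum_j L_{ij}\sigma L_{ij}^\dagger + L_{i,0}\sigma L_{i,0}^\dagger$; then $\mathcal{E}_i$ is CPTP by construction and $\mathcal{E}_i(\sqrt{M_i}\rho\sqrt{M_i}) = \sum_j L_{ij}\sqrt{M_i}\rho\sqrt{M_i}L_{ij}^\dagger = \sum_j K_{ij}\rho K_{ij}^\dagger = \mathcal{I}_i(\rho)$.

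The main obstacle is the invertibility issue: $\sqrt{M_i}$ is generally not invertible, so the naive $L_{ij} = K_{ij}(\sqrt{M_i})^{-1}$ only makes sense on $\mathrm{supp}(M_i)$, and one must argue carefully that (a) $K_{ij}$ indeed vanishes on $\ker M_i$ so nothing is lost, and (b) the extension of $L_{ij}$ by zero on $\ker M_i$ together with the compensating term $L_{i,0}$ yields a trace-preserving map. Both points are elementary once phrased correctly, but they are where a sloppy argument would break. Alternatively — and this may be the cleanest writeup — I would note that this is a standard fact (the cited Thm 7.2 of \cite{hayashi2016quantum}) and present the single-Kraus-operator version first: if $\mathcal{I}_i(\rho) = K\rho K^\dagger$ with $M_i = K^\dagger K$, take the polar decomposition $K = U\sqrt{M_i}$ with $U$ a partial isometry, extend $U$ to a unitary (or just to an isometry into a larger space) and set $\mathcal{E}_i(\sigma) = U\sigma U^\dagger$ plus, if needed, a fixed-state completion; the general case then follows by applying this to the Stinespring/Kraus dilation of $\mathcal{I}_i$. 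I would present whichever is shorter, most likely the direct Kraus computation above.
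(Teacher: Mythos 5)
Your proposal is correct and follows essentially the same route as the paper's proof: both take a Kraus decomposition of $\mathcal{I}_i$, divide out $\sqrt{M_i}$ via its pseudo-inverse on $\mathrm{supp}(M_i)$, use that each Kraus operator vanishes on $\ker M_i$ (so nothing is lost), and complete to a trace-preserving map by adding a Kraus term supported on $\ker M_i$, which is annihilated by $\sqrt{M_i}\,\cdot\,\sqrt{M_i}$. The only cosmetic difference is that the paper takes the completing operator to be the projection $\mathbbm{1}-P$ itself, whereas you allow any $L_{i,0}$ with $L_{i,0}^\dagger L_{i,0}=\mathbbm{1}-P$.
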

\begin{proof} Let $\{K_j\}_j$ be a Kraus decomposition of $ \mathcal{I}_i$, whose existence is guaranteed by Lemma \ref{lemma thm Kraus decomposition}. Since
\begin{equation}
    \tr{\mathcal{I}_i(\rho)} = \tr{\sum_jK_j\rho K_j^{\dagger}} = \tr{\rho \sum_jK_j^{\dagger}K_j}=\tr{\rho M_i}
\end{equation}
for any state $\rho$, we have $M_i=\sum_jK_j^{\dagger}K_j$. Denote the pseudo-inverse of $\sqrt{M_i}$ by $(\sqrt{M_i})^{-}$ and let $P$ be the orthogonal projection onto the support of $\sqrt{M_i}$, i.e.\ $P = \sqrt{M_i} \left( \sqrt{M_i} \right)^-$. Then note that
\begin{align}
    \sum_j \left( \sqrt{M_i} \right)^{-} K_j^\dagger K_j \left( \sqrt{M_i} \right)^{-} = \left( \sqrt{M_i} \right)^{-} M_i \left( \sqrt{M_i} \right)^{-} = P^\dagger P = P.
\end{align}
Hence, if we add $\mathbbm{1}-P$ on both sides, we obtain a full Kraus decomposition $\left\{K_j(\sqrt{M_i})^{-}, \mathbbm{1} - P \right\}_j$ of a map, call it $\mathcal{E}_i$, that adds up to the identity. Thus, by Lemma \ref{lemma thm Kraus decomposition}, $\mathcal{E}_i$ is completely positive and trace preserving, i.e.\ a quantum channel. Finally, we see that
\begin{align}
    \mathcal{E}_i\left(\sqrt{M_i}\rho\sqrt{M_i}\right)&= (\mathbbm{1}-P) \sqrt{M_i}\rho\sqrt{M_i} (\mathbbm{1}-P) + \sum_j K_j(\sqrt{M_i})^{-}\sqrt{M_i}\rho\sqrt{M_i} (\sqrt{M_i})^{-} K_j^{\dagger} \nonumber \\
    &=\sum_jK_j\rho K_j^{\dagger}=\mathcal{I}_i(\rho), 
\end{align}
as desired. The last equation follows from the fact that $(\mathbbm{1}-P)\sqrt{M_i} = \sqrt{M_i} - \sqrt{M_i}(\sqrt{M_i}^{-})\sqrt{M_i} = 0$, which is one of the defining properties of the pseudo-inverse, and that $K_j P = K_j$. This follows in turn via $M_i = \sum_j K_j^\dagger K_j$, implying that $\mathrm{ker}(M_i) \subseteq \mathrm{ker}(K_j)$ for all $j$. In other words, $\mathrm{supp}(K_j) \subseteq \mathrm{supp}(M_i)=\mathrm{supp}(\sqrt{M_i})$ for all $j$, and $P$ projects onto the latter. Hence $K_j P = K_j$.
\end{proof}
Combining the Stinespring dilation with Lemma~\ref{lemma: instrument and channel} allows us to see the operations of the attackers after the commit-measurement as a unitary in a larger space, and yields the following decomposition of quantum instruments. 

\begin{corollary}\label{corollary: instrument and stinespring} Let $\mathcal{I}=\{\mathcal{I}_i\}_{i\in\Omega}$ be an instrument, and $\{M_i\}_{i \in \Omega}$ its corresponding POVM. 
Then, for every $i\in\Omega$,  there exists an environment Hilbert space $\mathcal{H}_E$ and a unitary $U_i$ on $\mathcal{H}\otimes\mathcal{H}_E$ such that
    \begin{equation}\label{eq unitary extension intrument}
        \mathcal{I}_i(\rho)=\ptr{E}{U_i \left( \sqrt{M_i} \rho \sqrt{M_i} \otimes \ketbra{0}{0}_E \right) U_i^{\dagger}}
    \end{equation}
     for all $\rho\in\mathcal{B}(\mathcal{H})$,
\end{corollary}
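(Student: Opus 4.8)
The plan is to chain together Lemma~\ref{lemma: instrument and channel} with the Stinespring dilation theorem, exactly as the preamble to the corollary suggests. By Lemma~\ref{lemma: instrument and channel}, for each $i\in\Omega$ there is a quantum channel $\mathcal{E}_i$ with $\mathcal{I}_i(\rho)=\mathcal{E}_i(\sqrt{M_i}\rho\sqrt{M_i})$ for all $\rho$. Setting $\sigma:=\sqrt{M_i}\rho\sqrt{M_i}$, it therefore suffices to produce an environment $\mathcal{H}_E$ and a unitary $U_i$ on $\mathcal{H}\otimes\mathcal{H}_E$ such that $\mathcal{E}_i(\sigma)=\ptr{E}{U_i(\sigma\otimes\ketbra{0}{0}_E)U_i^{\dagger}}$ for every (sub-normalized) state $\sigma$, after which substituting $\sigma=\sqrt{M_i}\rho\sqrt{M_i}$ yields precisely Eq.~\eqref{eq unitary extension intrument}.

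Next I would invoke Stinespring: since $\mathcal{E}_i$ is CPTP, there is a Hilbert space $\mathcal{H}_E$ and an isometry $V_i\colon\mathcal{H}\to\mathcal{H}\otimes\mathcal{H}_E$ with $\mathcal{E}_i(\sigma)=\ptr{E}{V_i\sigma V_i^{\dagger}}$. Concretely one reads $V_i$ off from a Kraus decomposition $\{L_j\}_j$ of $\mathcal{E}_i$ — for instance the operators $\{K_j(\sqrt{M_i})^{-},\ \mathbbm{1}-P\}_j$ constructed in the proof of Lemma~\ref{lemma: instrument and channel} — via $V_i=\sum_j L_j\otimes\ket{j}_E$, and $V_i^{\dagger}V_i=\sum_j L_j^{\dagger}L_j=\mathbbm{1}$ by trace preservation, so $V_i$ is genuinely an isometry.

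Finally I would promote $V_i$ to a unitary. Identifying $\mathcal{H}$ with the subspace $\mathcal{H}\otimes\ket{0}_E\subseteq\mathcal{H}\otimes\mathcal{H}_E$ through $\ket{\psi}\mapsto\ket{\psi}\otimes\ket{0}_E$, the map $V_i$ becomes a linear isometry from one subspace of $\mathcal{H}\otimes\mathcal{H}_E$ into the whole space; any such isometry between finite-dimensional spaces extends to a unitary $U_i$ on all of $\mathcal{H}\otimes\mathcal{H}_E$ (enlarging $\mathcal{H}_E$ by a direct summand if needed so that domain and codomain have equal dimension). By construction $U_i(\ket{\psi}\otimes\ket{0}_E)=V_i\ket{\psi}$, hence $U_i(\sigma\otimes\ketbra{0}{0}_E)U_i^{\dagger}=V_i\sigma V_i^{\dagger}$ for all $\sigma$; taking $\ptr{E}{\cdot}$ and restoring $\sigma=\sqrt{M_i}\rho\sqrt{M_i}$ completes the proof. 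The only delicate point — and it is entirely routine — is this last dimension bookkeeping: the Stinespring isometry has domain $\mathcal{H}$ but codomain $\mathcal{H}\otimes\mathcal{H}_E$, so one must pad the environment (equivalently the input register) to turn an isometry into a unitary, which is harmless since $\mathcal{H}_E$ only needs to exist, not be minimal.
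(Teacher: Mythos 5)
Your argument is correct and is exactly the route the paper intends: it states the corollary without a written proof, remarking only that $U_i$ is the unitary from a Stinespring dilation of the channel $\mathcal{E}_i$ of Lemma~\ref{lemma: instrument and channel}, which is precisely the chain Lemma~$\to$~isometric dilation~$\to$~unitary extension that you spell out. The dimension bookkeeping you flag (padding so the isometry $\ket{\psi}\otimes\ket{0}_E\mapsto V_i\ket{\psi}$ extends to a unitary) is handled correctly and is the only step the paper leaves implicit.
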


In the case of a commit round of a QPV protocol the subscript denotes whether the attackers commit ($i = 1$) or do not commit ($i=0$). The unitary $U_i$ in eq.~\eqref{eq unitary extension intrument} is the unitary corresponding to a Stinespring dilation of the channel $\mathcal{E}_i$ appearing in Lemma \ref{lemma: instrument and channel}. We denote the POVMs corresponding to the instruments $\{\mathcal{I}^{A}_{c_A|x}\}_{c_A}$ and $\{\mathcal{I}^{B}_{c_B|y}\}_{c_B}$ of Alice and Bob by $\left\{M^x_{A}, \mathbbm{1}-M^x_{A}\right\}$ and $\left\{M^y_{B},\mathbbm{1}-M^y_{B} \right\}$ respectively. Here the POVM elements $M^x_{A}$ and $M^y_{B}$ correspond to the measurement outcome `commit' ($c_A=1$ and $c_B=1$). We denote the post-measurement state, given by the above POVMs, corresponding to Alice and Bob committing, for input $x,y$, by:
    \begin{equation}\label{eq rhoxy}            
    \rho^{xy}:=\frac{\left( \sqrt{M^x_A} \otimes \sqrt{M^y_B} \right) \, \rho \, \left( \sqrt{M^x_A} \otimes \sqrt{M^y_B} \right)}{\tr{\left( M^x_A \otimes M^y_B \right) \rho}}.
    \end{equation}
Since the attackers only continue to play if they both commit, we know that the unitary that they apply on their post measurement state on inputs $x,y$ is $U_{1,A}^x \otimes U_{1,B}^y$. 

We will show that $\rho^{xy}$ and $\rho^{x'y'}$, for arbitrary $(x,y)$ and $(x',y')$ cannot differ too much from each other. First we recall the \textit{Gentle Measurement Lemma}:

\begin{lemma} \label{lemma: gentle measurement}\emph{(Gentle Measurement Lemma \cite{winter1999coding})}  Let $\rho$ be a quantum state and $\{ M, \mathbbm{1} - M \}$ be a two-outcome measurement. If $\tr{M \rho} \geq 1-\varepsilon$, then the post-measurement state \begin{align}    \rho'=\frac{\sqrt{M}\rho\sqrt{M}}{\tr{ M\rho}}
\end{align}
of measuring $M$ fulfills
\begin{align}
||\rho-\rho'||_1\leq 2\sqrt{\varepsilon}. \end{align}
\end{lemma}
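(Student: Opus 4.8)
The plan is to convert the trace-distance statement into a single fidelity estimate and then quote two standard facts. Recall the Fuchs--van de Graaf inequality $||\rho-\sigma||_1\leq 2\sqrt{1-F(\rho,\sigma)^2}$ for the fidelity $F(\rho,\sigma)=\tr{\sqrt{\sqrt{\rho}\,\sigma\,\sqrt{\rho}}}$. Applying it with $\sigma=\rho'$, the lemma reduces to proving the lower bound $F(\rho,\rho')^2\geq 1-\varepsilon$, after which $||\rho-\rho'||_1\leq 2\sqrt{1-(1-\varepsilon)}=2\sqrt{\varepsilon}$ follows immediately.

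To lower bound $F(\rho,\rho')$ I would use Uhlmann's theorem together with a convenient purification. Fix a purification $\ket{\psi}\in\mathcal{H}\otimes\mathcal{H}'$ of $\rho$. Since $\sqrt{M}$ acts only on $\mathcal{H}$, the vector $(\sqrt{M}\otimes\mathbbm{1})\ket{\psi}$ is a sub-normalized purification of $\sqrt{M}\rho\sqrt{M}$ with squared norm $\tr{M\rho}$, so after normalization it purifies $\rho'=\sqrt{M}\rho\sqrt{M}/\tr{M\rho}$. By Uhlmann's theorem the fidelity is at least the overlap of these particular purifications of $\rho$ and $\rho'$,
\begin{align}
F(\rho,\rho')\;\geq\;\frac{\bigl|\langle\psi|(\sqrt{M}\otimes\mathbbm{1})|\psi\rangle\bigr|}{\sqrt{\tr{M\rho}}}\;=\;\frac{\tr{\sqrt{M}\rho}}{\sqrt{\tr{M\rho}}}\,,
\end{align}
where the numerator is real and nonnegative because $\sqrt{M}\geq 0$.

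The last ingredient is the operator inequality $\sqrt{M}\geq M$, valid since $0\leq M\leq\mathbbm{1}$ forces every eigenvalue $t$ of $M$ to satisfy $\sqrt{t}\geq t$. Hence $\tr{\sqrt{M}\rho}\geq\tr{M\rho}\geq 1-\varepsilon$, and substituting into the displayed bound gives $F(\rho,\rho')\geq (1-\varepsilon)/\sqrt{1-\varepsilon}=\sqrt{1-\varepsilon}$, i.e.\ $F(\rho,\rho')^2\geq 1-\varepsilon$. Feeding this back into the Fuchs--van de Graaf inequality finishes the proof.

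I do not expect a real obstacle: the argument is short and each step is routine. The only points needing a little care are the normalization bookkeeping (the factor $\sqrt{\tr{M\rho}}$, and the fact that the statement is vacuous when $\varepsilon\geq 1$), verifying that $(\sqrt{M}\otimes\mathbbm{1})\ket{\psi}$ indeed has reduced state $\sqrt{M}\rho\sqrt{M}$, and invoking the correct form of Fuchs--van de Graaf. A purely elementary alternative — use concavity of $\sqrt{\cdot}$ to reduce to pure $\rho$, then diagonalize the rank-$\leq 2$ operator $\rho-\sqrt{M}\rho\sqrt{M}$ by hand — works cleanly for pure states but the recombination step for mixed $\rho$ loses a lower-order term, so the purification route is the one I would present.
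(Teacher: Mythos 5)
The paper does not actually prove this lemma---it is imported verbatim from \cite{winter1999coding}---so there is no internal proof to compare against. Your argument is the standard textbook derivation (purify $\rho$, hit the purification with $\sqrt{M}\otimes\mathbbm{1}$, lower-bound the fidelity via Uhlmann, and finish with Fuchs--van de Graaf), and all of the structural steps are correct: $(\sqrt{M}\otimes\mathbbm{1})\ket{\psi}$ does reduce to $\sqrt{M}\rho\sqrt{M}$ with squared norm $\tr{M\rho}$, the overlap is $\tr{\sqrt{M}\rho}$, and $\sqrt{M}\geq M$ holds for $0\leq M\leq\mathbbm{1}$.

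The one step that does not work as written is the final substitution. From $\tr{\sqrt{M}\rho}\geq\tr{M\rho}\geq 1-\varepsilon$ you conclude $F(\rho,\rho')\geq(1-\varepsilon)/\sqrt{1-\varepsilon}$, but replacing $\tr{M\rho}$ by its \emph{lower} bound $1-\varepsilon$ in the \emph{denominator} pushes the fraction in the wrong direction, so the displayed chain is not a valid inequality. The repair is immediate from the two facts you already established: $\tr{\sqrt{M}\rho}\big/\sqrt{\tr{M\rho}}\;\geq\;\tr{M\rho}\big/\sqrt{\tr{M\rho}}\;=\;\sqrt{\tr{M\rho}}\;\geq\;\sqrt{1-\varepsilon}$, which gives $F(\rho,\rho')^2\geq 1-\varepsilon$ and hence $\norm{\rho-\rho'}_1\leq 2\sqrt{1-F(\rho,\rho')^2}\leq 2\sqrt{\varepsilon}$. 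With that one line corrected the proof is complete and matches the standard argument behind the cited result.
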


To avoid detection, the commitments \( c_A \) and \( c_B \) must be equal, or at least indistinguishable with very high probability. This implies that if Alice has measurement outcome $c_A = 1$ Bob's measurement is already determined, and vice-versa. This means, by the Gentle Measurement Lemma, that Bob's measurement cannot have disturbed the state too much. The following lemma relates the closeness of states to the probability of answering different commits, given that one party commits.

\begin{lemma}\label{lemma: paths between strings}\emph{(Paths Between Strings)} 
Assume that for inputs $(x,y)$, $(x',y)$ and $(x',y')$ in $\{0,1\}^{2n}$ that the probability that one party does not commit, given that the other party commits, is upper bounded by some $\varepsilon > 0$. Then,
\begin{align}
    \| \rho^{xy}-\rho^{x'y'}\|_1 \leq 8 \sqrt{\varepsilon}.
\end{align}
\end{lemma}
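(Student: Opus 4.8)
The plan is to travel from $(x,y)$ to $(x',y')$ through the intermediate input pair $(x',y)$, changing one party's input at a time, and to control each hop with the Gentle Measurement Lemma (Lemma~\ref{lemma: gentle measurement}). The key structural fact I would use is that Alice's commit POVM element $M^x_A$ and Bob's commit POVM element $M^y_B$ act on disjoint registers, so $\sqrt{M^x_A}\otimes\mathbbm{1}$ and $\mathbbm{1}\otimes\sqrt{M^y_B}$ commute; hence $\rho^{xy}$ from eq.~\eqref{eq rhoxy} can be viewed as obtained from $\rho$ by performing one party's commit measurement and then the other's, in either order.

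Concretely, I would introduce the one-sided conditional states $\rho^y_B := \frac{(\mathbbm{1}\otimes\sqrt{M^y_B})\,\rho\,(\mathbbm{1}\otimes\sqrt{M^y_B})}{\tr{(\mathbbm{1}\otimes M^y_B)\rho}}$, the joint state conditioned on Bob committing with input $y$ (which does not depend on $x$), and symmetrically $\rho^{x'}_A := \frac{(\sqrt{M^{x'}_A}\otimes\mathbbm{1})\,\rho\,(\sqrt{M^{x'}_A}\otimes\mathbbm{1})}{\tr{(M^{x'}_A\otimes\mathbbm{1})\rho}}$. A short computation using the commutation above, together with cyclicity of the trace, shows that applying Alice's commit measurement $\{M^x_A\otimes\mathbbm{1},\ \mathbbm{1}-M^x_A\otimes\mathbbm{1}\}$ to $\rho^y_B$ and post-selecting on the commit outcome yields exactly $\rho^{xy}$, and that the success probability of this measurement is $\tr{(M^x_A\otimes\mathbbm{1})\rho^y_B}$, which is precisely the probability that Alice commits given that Bob commits on input $(x,y)$. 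By hypothesis this is at least $1-\varepsilon$, so Lemma~\ref{lemma: gentle measurement} gives $\|\rho^y_B-\rho^{xy}\|_1\le 2\sqrt{\varepsilon}$. Running the same argument with input $(x',y)$ — again applying Alice's commit measurement to the same state $\rho^y_B$ — gives $\|\rho^y_B-\rho^{x'y}\|_1\le 2\sqrt{\varepsilon}$, and the triangle inequality yields $\|\rho^{xy}-\rho^{x'y}\|_1\le 4\sqrt{\varepsilon}$.

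For the second hop I would repeat the argument with the roles of Alice and Bob exchanged: $\rho^{x'y}$ and $\rho^{x'y'}$ are both obtained from the one-sided conditional state $\rho^{x'}_A$ by applying Bob's commit measurement with input $y$ and $y'$ respectively, with success probabilities equal to the probability that Bob commits given that Alice commits, on inputs $(x',y)$ and $(x',y')$. Both are at least $1-\varepsilon$ by hypothesis, so Lemma~\ref{lemma: gentle measurement} gives $\|\rho^{x'}_A-\rho^{x'y}\|_1\le 2\sqrt{\varepsilon}$ and $\|\rho^{x'}_A-\rho^{x'y'}\|_1\le 2\sqrt{\varepsilon}$, hence $\|\rho^{x'y}-\rho^{x'y'}\|_1\le 4\sqrt{\varepsilon}$. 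Combining the two hops by the triangle inequality gives $\|\rho^{xy}-\rho^{x'y'}\|_1\le 8\sqrt{\varepsilon}$, as claimed. This is exactly why the hypothesis involves the three pairs $(x,y)$, $(x',y)$, $(x',y')$: the first hop needs the ``Alice commits given Bob commits'' bound on $(x,y)$ and $(x',y)$, the second hop needs the ``Bob commits given Alice commits'' bound on $(x',y)$ and $(x',y')$.

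The only point that requires genuine care — the main obstacle, such as it is — is the bookkeeping: checking that post-selecting one party's commit measurement on the one-sided conditional state really reproduces $\rho^{xy}$ of eq.~\eqref{eq rhoxy} (the normalization factors must telescope correctly) and that the quantity appearing in the denominator is indeed the conditional commit probability that the hypothesis bounds. Both reduce to the commutativity of $\sqrt{M^x_A}\otimes\mathbbm{1}$ and $\mathbbm{1}\otimes\sqrt{M^y_B}$ plus cyclicity of the trace; one should also tacitly assume the relevant conditioning events have nonzero probability, since otherwise the states involved are undefined and the statement is vacuous for that input. No machinery beyond Lemma~\ref{lemma: gentle measurement} is needed.
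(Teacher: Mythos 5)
Your proposal is correct and follows essentially the same route as the paper: both introduce the one-sided conditional states (your $\rho^y_B,\rho^{x'}_A$ are the paper's $\rho^{(\cdot),y},\rho^{x',(\cdot)}$), apply the Gentle Measurement Lemma four times using exactly the conditional commit probabilities you identify, and chain the resulting $2\sqrt{\varepsilon}$ bounds by the triangle inequality. Your extra care about the normalization telescoping and the denominator being the conditional probability is a point the paper glosses over, but it is the same argument.
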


\begin{proof}
Consider the attackers Alice and Bob performing the most general attack described above and the POVMs $\{M^x_{A}, \mathbbm{1}-M^x_{A}\}$ and $\{M^y_{B},\mathbbm{1}-M^y_{B} \}$ as defined above. We write 
    \begin{equation}\label{eq rhox, rhoy, rhoxy}
    \rho^{x,(\cdot)}=\frac{( \sqrt{M^x_A} \otimes \mathbbm{1}_B ) \, \rho \, ( \sqrt{M^x_A} \otimes \mathbbm{1}_B )}{\tr{( M^x_A \otimes \mathbbm{1}_B )\rho}}, \quad \quad \rho^{(\cdot),y}=\frac{(\mathbbm{1}_A \otimes \sqrt{M^y_B}) \, \rho \, (\mathbbm{1}_A \otimes \sqrt{M^y_B})}{\tr{(\mathbbm{1}_A \otimes M^y_B)\rho}}
    \end{equation}
for the post measurement states corresponding to only Alice or Bob committing before applying the quantum channel. By assumption, we have:
    \begin{align}\label{equ:diffcommiteps}
         \tr{\left( \mathbbm{1}_A \otimes (\mathbbm 1-M^y_{B}) \right) \rho^{x,(\cdot)}} \leq \varepsilon, \qquad \tr{\left( (\mathbbm 1-M^x_{A}) \otimes \mathbbm{1}_B \right) \rho^{(\cdot),y}}\leq\varepsilon.
    \end{align}
Similarly for the input $(x',y)$ and $(x',y')$ we get:
    \begin{align}
         &\tr{\left( \mathbbm{1}_A \otimes (\mathbbm 1-M^y_{B}) \right) \rho^{x',(\cdot)}} \leq \varepsilon, &&\tr{\left( (\mathbbm 1-M^{x'}_{A}) \otimes \mathbbm{1}_B \right) \rho^{(\cdot),y}}\leq\varepsilon, \\
         &\tr{\left( \mathbbm{1}_A \otimes (\mathbbm 1-M^{y'}_{B}) \right) \rho^{x',(\cdot)}} \leq \varepsilon, &&\tr{\left( (\mathbbm 1-M^{x'}_{A}) \otimes \mathbbm{1}_B \right) \rho^{(\cdot),y'}}\leq\varepsilon.
    \end{align}
Therefore, by  Lemma \ref{lemma: gentle measurement} (Gentle Measurement Lemma) we get the following inequalities: 
\begin{equation}\label{equation rhox-rhoxy}
\begin{aligned}
    &\norm{\rho^{(\cdot),y}-\rho^{xy}}_{1}\leq2\sqrt{\varepsilon}, &&\norm{\rho^{(\cdot),y}-\rho^{x'y}}_{1}\leq2\sqrt{\varepsilon} \\
    &\norm{\rho^{x',(\cdot)}-\rho^{x'y}}_{1}\leq2\sqrt{\varepsilon},
    &&\norm{\rho^{x',(\cdot)}-\rho^{x'y'}}_{1}\leq2\sqrt{\varepsilon}    
\end{aligned}
\end{equation}
Now we get for the trace distance between the two density matrices:
\begin{equation}\label{eq rh00-rhoxy}
\begin{split}
    \norm{\rho^{x'y'}-\rho^{xy}}_1&=\norm{\rho^{x'y'}-\rho^{x',(\cdot)}+\rho^{x',(\cdot)}-\rho^{x'y}+\rho^{x'y}-\rho^{(\cdot),y}+\rho^{(\cdot),y}-\rho^{xy}}_1\\
    &\leq \norm{\rho^{x'y'}-\rho^{x',(\cdot)}}_1+\norm{\rho^{x',(\cdot)}-\rho^{x'y}}_1+\norm{\rho^{x'y}-\rho^{(\cdot),y}}_1+\norm{\rho^{(\cdot),y}-\rho^{xy}}_1 \\
    &\leq8\sqrt{\varepsilon},
    \end{split}
\end{equation}
where we used the triangle inequality and eq.~\eqref{equation rhox-rhoxy}.
\end{proof}

Note that if the probability of answering different commits on the inputs $(x,y')$ instead of $(x',y)$ was small we would get the same inequality between $\rho^{xy}$ and $\rho^{x'y'}$.

In general, an honest prover will never answer different commit bits back to the verifiers. Thus one could argue that the probability of answering `no commit' when the other party answers `commit' should be zero. In that case, by Lemma \ref{lemma: paths between strings}, we see that all post-measurement states are equal, and thus independent of $x,y$. Then, the quantum instrument that Alice and Bob apply adds no extra power and their actions are contained in the actions they could do in attacking a state-independent protocol (cf. Definition~\ref{def:state_indep}). And the probability to attack the protocol successfully on rounds in which the attackers commit is equal to the original protocol.     This is summarized in the following corollary:
\begin{corollary}
    If we demand perfect coordination for the commitments in attack strategies, then for any state-independent quantum position verification $\mathsf{P}$ its version with commitment \emph{\textsf{c-}}$\mathsf{P}$ is fully loss tolerant against transmission loss. That is,
    \begin{align}
        \mathbb P[\mathrm{attack } \, \emph{\textsf{c-}}\mathsf{P}_{\eta_{V},\eta_P}] = \mathbb P[\mathrm{attack } \, \mathsf P_{\eta_{P}}].
    \end{align}
    Thus, protocols like \QPVBBf~can be made secure against transmission loss.
\end{corollary}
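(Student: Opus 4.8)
The plan is to run the reduction laid out in Section~\ref{sec:Sec_proof} at its degenerate endpoint, where ``perfect coordination of the commitments'' forces the slack parameter to vanish. Perfect coordination means that, in every admissible attack, the probability that one attacker outputs ``no commit'' while the other outputs ``commit'' is exactly $0$ for all input pairs. In particular the hypothesis of Lemma~\ref{lemma: paths between strings} holds with $\varepsilon = 0$ for every triple $(x,y),(x',y),(x',y')$, so the lemma gives $\norm{\rho^{xy}-\rho^{x'y'}}_1 = 0$, i.e.\ $\rho^{xy}=\rho^{x'y'}$, for all $x,y,x',y'$. Write $\tau$ for this common post-commit state; by the above it is independent of $(x,y)$.

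Next I would strip the commit instruments off the description of the attack using Corollary~\ref{corollary: instrument and stinespring}. Applied to Alice's and Bob's instruments $\{\mathcal{I}^A_{c_A|x}\}$, $\{\mathcal{I}^B_{c_B|y}\}$ it produces ancillas $E_A,E_B$ and unitaries $U^A_x$ on $AE_A$, $U^B_y$ on $BE_B$ with
\begin{equation}
    \tilde{\mathcal{I}}_1^{xy}(\rho) = \ptr{E_A E_B}{\bigl(U^A_x \otimes U^B_y\bigr)\bigl(\rho^{xy} \otimes \ketbra{0}{0}_{E_A} \otimes \ketbra{0}{0}_{E_B}\bigr)\bigl(U^A_x \otimes U^B_y\bigr)^{\dagger}}.
\end{equation}
Substituting $\rho^{xy}=\tau$, the state on which the attackers act after committing is obtained from the \emph{fixed} state $\tau \otimes \ketbra{0}{0}_{E_A}\otimes\ketbra{0}{0}_{E_B}$ by the local, $(x,y)$-dependent channel ``apply $U^A_x$ at Alice and $U^B_y$ at Bob, then discard $E_A,E_B$''. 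This channel can be folded, without loss of generality, into the local $(x,y)$-dependent channels that a \textsf{c-}$\mathsf{P}$ attacker is in any case allowed to apply before the one round of simultaneous communication.

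After this rewriting the only imprint the commitment phase leaves on the committed rounds is the fixed input state $\tau$: everything downstream — the communication round, local $(x,y)$-dependent channels, and the final POVMs $\{\Pi^{A,(x,y)}_{AB_\mathrm{com}},\cdot\}$, $\{\Pi^{B,(x,y)}_{A_\mathrm{com}B},\cdot\}$ — is exactly what attackers of $\mathsf{P}$ with pre-shared adversarial input $\tau$ may do, and on a committed round they only need to reproduce the in-lab loss $\eta_P$, so $\eta_V$ has dropped out. Hence $\omega^{\mathcal{I}_1,(x,y)}$ in \eqref{eq prob attack with instruments} has precisely the form of $\omega^{(x,y)}$ for an attack on $\mathsf{P}_{\eta_P}$ with input $\tau$, and state-independence of $\mathsf{P}$ (Definition~\ref{def:state_indep}) gives $\mathbb P[\mathrm{attack}\,\textsf{c-}\mathsf{P}_{\eta_V,\eta_P}] \leq \mathbb P[\mathrm{attack}\,\mathsf{P}_{\eta_P}]$. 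The opposite inequality is the easy direction: attackers of $\mathsf{P}_{\eta_P}$ pre-agree to commit on a fixed, $(x,y)$-independent random $\eta_V$-fraction of the rounds (perfectly coordinated) and run their $\mathsf{P}_{\eta_P}$ strategy there, so $\mathbb P[\mathrm{attack}\,\mathsf{P}_{\eta_P}] \leq \mathbb P[\mathrm{attack}\,\textsf{c-}\mathsf{P}_{\eta_V,\eta_P}]$. Equality follows, and taking $\mathsf{P}=$ \QPVBBf, whose security for sufficiently large $\eta_P$ is known from \cite{bluhm2022single,escolafarras2022singlequbit}, gives the claimed loss tolerance of \textsf{c-}\QPVBBf.

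The conceptual content is light; the care lies entirely in the identification ``committed rounds of \textsf{c-}$\mathsf{P}_{\eta_V,\eta_P}$ $=$ an attack on $\mathsf{P}_{\eta_P}$ with input $\tau$''. Concretely I would verify the move-by-move matching of the two attack models — that the Stinespring unitaries of the commit instruments composed with the post-communication operations realise exactly the channel-then-POVM structure permitted to a $\mathsf{P}$-attacker — track the verifier's kept register consistently through the commit measurement and the reduction (it is untouched by the commit POVMs and simply carried along, which is what lets state-independence be applied to $\tau$), and confirm that conditioning on committed rounds really exchanges the loss budget $1-\eta_V\eta_P$ for $1-\eta_P$. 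None of these is deep, but they are where a fully rigorous write-up has to be precise.
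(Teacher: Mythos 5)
Your proposal is correct and follows essentially the same route as the paper: setting $\varepsilon=0$ in Lemma~\ref{lemma: paths between strings} forces all post-commit states to coincide with a fixed $\tau$, after which the commit instruments (decomposed via Lemma~\ref{lemma: instrument and channel}/Corollary~\ref{corollary: instrument and stinespring}) add no power and state-independence of $\mathsf{P}$ yields the bound on committed rounds. Your explicit treatment of the reverse inequality matches the paper's footnote argument (pre-agreeing to commit at rate $\eta_V$), so nothing is missing.
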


However, one can argue setting the probability to answer `no commit' given that the other party answers `commit' to zero is too restrictive. Also when this probability is sufficiently low, with high probability the attackers will not get detected by answering different commitments. Thus, it could be that this strategy outperforms the original attack strategy. This stronger setting is not always considered in QPV protocols, but is relevant in practice. We will show that allowing for this possibility does not help the attackers much, and we can still show security. We give a continuity statement on the probability of attacking successfully, showing that the protocols with a commitment round perform similarly to the original protocol if the probability of answering different commitments is small. Again the proof strategy is to show that the post-measurement states must be close to each other, depending on the probability of committing differently, given that one party commits (the rounds in which no-one commits are discarded).

The statement of Lemma \ref{lemma: paths between strings} can be formulated as a connection problem in a graph. The local inputs $x,y$ are represented as vertices in a bipartite graph, and we connect two vertices $x,y$ if the probability that the two parties send different commitments is upper bounded by $\varepsilon$ as in the proof of the above lemma. Then for two pairs of inputs $x,y$ and $x',y'$ (i.e.~edges in the graph) $\| \rho^{xy} - \rho^{x'y'} \|_1 \leq 8 \sqrt{\varepsilon}$, if there is an edge in the graph that connects either $x',y$ or $x,y'$. This is represented in Figure \ref{fig:bipartite_all_edges}.

\begin{figure}[htbp]
\centering
\begin{tikzpicture}[scale=1.5, line width=10pt]
  
  \foreach \i in {0,...,4}
    \foreach \j in {0,...,4}
      \draw[line width=1pt, opacity=0.8] (0,\i) -- (2,\j);
      
  \draw[red, line width=4pt, opacity=0.7] (0,3) -- (2,2);
  
  \draw[orange, line width=4pt, opacity=0.7] (0,1) -- (2,2);
  
  \draw[green, line width=4pt, opacity=0.7] (0,1) -- (2,0);

  \foreach \y/\label in {0/, 1/$x'$, 2/, 3/$x$, 4/}
    \fill (0,\y) circle (2pt) node[left, xshift=-2pt] {\label};
  
  \foreach \y/\label in {0/$y'$, 1/, 2/$y$, 3/, 4/}
    \fill (2,\y) circle (2pt) node[right] {\label};

\end{tikzpicture}
\caption{Graphical representation of converting the pair $(x,y)$ (red) to $(x',y')$ (green) via $(x',y)$ (orange). Vertices on the left correspond to possible inputs $x$, on the right to possible inputs $y$. A connection between two strings means that the probability of committing differently on this input is smaller than $\varepsilon$.}
\label{fig:bipartite_all_edges}
\end{figure}
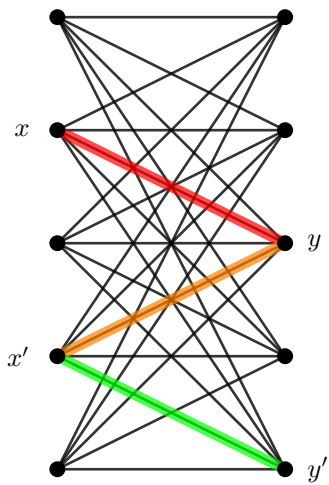

Importantly, the statement of Lemma \ref{lemma: paths between strings} only holds if the probability of answering different commit bits, given that one party commits, is upper bounded by $\varepsilon$ for all three pairs of strings. However, this is not something that the verifiers can enforce to be true for every pair of strings. The verifiers can only check for the rounds that they play whether the commitments are equal, but given that there are $2^{2n}$ possible inputs they cannot get the commit statistics for all of them.

It could be that allowing the attackers to commit differently on a subset of strings can outperform attackers that have to behave well over all strings. Since this subset is unknown to the verifiers (as it is part of the attack strategy) the probability to detect a wrong commit can be made as small as the relative size of the subset to the total set.

We can visualize the problem of committing differently intuitively via the complete bipartite graph in Figure~\ref{fig:bipartite_all_edges}. In the figure, two vertices are connected if the probability of answering different commitments is upper bounded by $\varepsilon$. Allowing attackers to answer different commits with a higher probability is equivalent to removing certain edges in this graph.

We then still have a bipartite graph but not all edges are connected. What we are now interested in is how many edges can still be reached within two steps from some other edge. It turns out that even if we allow attackers to commit differently with probability higher than $\varepsilon$ on a constant fraction of edges, there will be an edge that will be connected to at least a constant fraction of other edges in two steps (as used in Lemma~\ref{lemma: paths between strings}). 

\begin{lemma}[Edge Removal]\label{lemma: edge removal}
    Consider a complete bipartite graph whose independent sets are of equal size $2^n$. After removing a constant fraction $\Tilde{c}\leq\frac{1}{2}$ of edges, there exists an edge such that the number of edges that can be reached from this edge in two steps is at least $(1 - 2\Tilde{c}) 2^{2n}$.
\end{lemma}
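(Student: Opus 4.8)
The plan is to think of the removed edges as a bipartite graph $H$ on the vertex set $X \sqcup Y$ with $|X| = |Y| = 2^n$ and $|E(H)| \leq \tilde c \, 2^{2n}$, and to find a single edge $\{x, y\}$ of the complete bipartite graph $K_{2^n, 2^n}$ whose two-step neighborhood (in the sense used in Lemma~\ref{lemma: paths between strings}, i.e.\ all edges $\{x', y'\}$ such that either $\{x', y\}$ or $\{x, y'\}$ survives removal) is large. Concretely, an edge $\{x', y'\}$ is reachable in two steps from $\{x, y\}$ whenever $\{x', y\} \notin E(H)$ or $\{x, y'\} \notin E(H)$; by inclusion–exclusion the number of $(x', y')$ that fail both conditions is $\deg_H(y) \cdot \deg_H(x)$ minus a nonnegative correction, so the number of reachable pairs is at least $2^{2n} - \deg_H(x)\deg_H(y)$. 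Hence it suffices to find one edge $\{x, y\}$ with $\deg_H(x) \cdot \deg_H(y) \leq 2\tilde c \, 2^{2n}$.

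The key step is therefore an averaging argument over the edges of $H$ together with the constraint $\tilde c \leq \tfrac12$. First I would dispose of the trivial case: if $H$ has an isolated vertex $x$ on the left, then every edge $\{x, y'\}$ survives, so all $2^{2n}$ pairs are reachable from, say, $\{x, y\}$ for any $y$, and we are done with room to spare. So assume every vertex has degree at least $1$. Then consider the quantity $\sum_{\{x,y\} \in E(H)} \deg_H(x)\deg_H(y) = \sum_x \deg_H(x)^2 \cdot (\text{something})$ — more cleanly, $\sum_{\{x,y\}\in E(H)} \deg_H(x) \deg_H(y) = \bigl(\sum_x \deg_H(x)^2\bigr)\bigl(\sum_y \deg_H(y)^2\bigr) / |E(H)|$ does not hold in general, so instead I would split the sum: $\sum_{\{x,y\} \in E(H)} \deg_H(x)\deg_H(y) = \sum_{x} \deg_H(x) \sum_{y : \{x,y\} \in E(H)} \deg_H(y) \leq \sum_x \deg_H(x) \cdot |E(H)|^{1/?}$ — this is getting complicated, so the honest move is: pick the edge $\{x,y\}$ of $H$ minimizing $\deg_H(x)\deg_H(y)$, or better, argue via a counting/probabilistic selection that some edge of $K_{2^n,2^n}$ (not necessarily of $H$) has the desired property. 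In fact the cleanest route is: choose $x$ uniformly at random in $X$ and $y$ uniformly at random in $Y$; then $\mathbb{E}[\deg_H(x) \deg_H(y)] = \mathbb{E}[\deg_H(x)] \, \mathbb{E}[\deg_H(y)]$ (independence) $= \bigl(|E(H)|/2^n\bigr)^2 \leq \tilde c^2 2^{2n} \leq \tilde c \, 2^{2n} \cdot \tfrac12 \cdot 2 = \tilde c\, 2^{2n}$ using $\tilde c \leq \tfrac12$, wait — $\tilde c^2 2^{2n} \le \tilde c \cdot 2^{2n}$ already since $\tilde c \le 1$, and we even have $\tilde c^2 2^{2n} \le 2\tilde c \, 2^{2n}$ trivially. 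So there exists a choice of $(x,y)$ with $\deg_H(x)\deg_H(y) \leq \tilde c^2 2^{2n} \le 2\tilde c \, 2^{2n}$, and then the number of pairs reachable in two steps from $\{x,y\}$ is at least $2^{2n} - \tilde c^2 2^{2n} \geq (1 - 2\tilde c) 2^{2n}$.

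The main obstacle — and the only place real care is needed — is the two-step reachability count. I need to verify precisely that the set of $(x', y')$ \emph{not} reachable in two steps from $\{x, y\}$ is exactly $\{(x', y') : \{x', y\} \in E(H) \text{ and } \{x, y'\} \in E(H)\}$, whose size is the product of the $H$-degree of $y$ restricted to the left side and the $H$-degree of $x$ restricted to the right side, i.e.\ exactly $\deg_H(x) \cdot \deg_H(y)$; this uses that the two defining events ("$\{x',y\}$ removed" depends only on $x'$, "$\{x,y'\}$ removed" depends only on $y'$) factor over the two coordinates. One should also double-check whether the statement wants $\{x,y\}$ itself to be counted among the reachable vertices and whether "vertex reached in two steps" in the lemma statement is being identified with "edge reachable as in Lemma~\ref{lemma: paths between strings}"; modulo this bookkeeping the bound $2^{2n} - \deg_H(x)\deg_H(y) \ge (1-2\tilde c)2^{2n}$ closes the argument, and the $\tilde c \le \tfrac12$ hypothesis is only there to keep $(1 - 2\tilde c)$ nonnegative (and, if one uses the weaker bound $|E(H)|/2^n \le \tilde c 2^n$ crudely, to absorb constants).
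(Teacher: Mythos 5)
Your route is genuinely different from the paper's, and the core computation is sound. The paper takes a left vertex $l$ of maximal surviving degree (at least $(1-\tilde c)2^n$ by averaging), considers the at least $(1-\tilde c)2^{2n}$ complete-graph edges incident to its neighbourhood, and subtracts the at most $\tilde c\, 2^{2n}$ removed ones --- a two-line union bound. You instead pick a uniformly random pair $(x,y)$ and exploit that $\deg_H(x)$ and $\deg_H(y)$ are independent, so the expected number of unreachable pairs is $\left(|E(H)|/2^n\right)^2\leq\tilde c^2 2^{2n}$; this is correct, and for your notion of reachability it even yields the stronger count $(1-\tilde c^2)2^{2n}$. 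Your identification of the non-reachable set as a product set of size $\deg_H(x)\deg_H(y)$ is also right.

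There is, however, one point you flag but do not close, and it matters for how the lemma is used downstream: your ``reachable'' count admits a pair $(x',y')$ as soon as one intermediate $\{x',y\}$ or $\{x,y'\}$ survives, without requiring that $\{x',y'\}$ itself survives, nor that the source pair $\{x,y\}$ does. Lemma~\ref{lemma: paths between strings} needs all three pairs to lie in $\Sigma_\varepsilon$, and Lemma~\ref{lemma:existence_reference_state_close_to_other_states} and Theorem~\ref{thm: main theorem upper bounding commit attack} use a reference pair $(x_*,y_*)\in\Sigma_\varepsilon$; the paper's vertex-based proof delivers surviving target edges automatically. The fix is cheap within your framework: discard the at most $\tilde c\, 2^{2n}$ removed targets to get at least $(1-\tilde c-\tilde c^2)2^{2n}\geq(1-2\tilde c)2^{2n}$ \emph{surviving} reachable pairs, and combine Markov's inequality, $\Pr[\deg_H(x)\deg_H(y)\geq\tilde c\, 2^{2n}]\leq\tilde c$, with $\Pr[\{x,y\}\in E(H)]\leq\tilde c$ to find a surviving source pair whenever $\tilde c<1/2$ (the case $\tilde c=1/2$ being vacuous). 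With that bookkeeping done, your argument is complete and slightly sharper than the paper's.
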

\begin{proof}
    The number of edges of a complete bipartite graph with $2^n$ nodes in its independent sets is $2^{2n}$, as there are $2^n$ edges for any vertex. Now suppose we remove $\Tilde c \cdot 2^{2n}$ of these edges. Then, there must be a vertex $l$ on the left with at least $(1-\Tilde c) 2^n$ connecting edges. Let one of these edges be your starting edge. Now consider all the vertices on the right that are connected to $l$. Before we removed any edges there were $2^n$ edges connecting each of these vertices to the left. However, we removed $\Tilde c \cdot 2^{2n}$ of these edges, so the number of edges going back is now at least $(1-\Tilde c) \cdot 2^{2n} - \Tilde c \cdot 2^{2n} = (1 - 2\Tilde c) 2^{2n}$. Thus there are $(1 - 2\Tilde c) 2^{2n}$ edges that can be reached in two steps from the starting edge.
\end{proof}
Now let us split up the set of all possible inputs into one set where the probability of not committing, given that the other party commits, is lower than $\varepsilon$ and its complement. We write 
\begin{equation}\label{equ:Sigma_eps}
    \Sigma_{\varepsilon}:=\{x,y\mid \tr{\left( \mathbbm{1} \otimes (\mathbbm{1}-M^y_B) \right) \rho^{x,(.)}} \leq\varepsilon \wedge  \tr{\left( \mathbbm{1}-M^x_A) \otimes \mathbbm{1} \right) \rho^{(.),y}}\leq\varepsilon \},
\end{equation}
which can also be written in terms of conditional probabilities
\begin{equation}
    \Sigma_\varepsilon = \{x,y \mid \mathbb{P}[c_B = 0 \mid c_A = 1, x_A,y_B] \leq \varepsilon \wedge \mathbb{P}[c_A = 0 \mid c_B = 1, x_A, y_B] \leq \varepsilon\},
\end{equation}
where the subscript $A,B$ denote that the information about the strings $x,y$ is only known to player $A$ or $B$ and not both. We denote by $\Sigma^c_{\varepsilon}$ the complementary set of $\Sigma_{\varepsilon}$.
Using this definition we can show the following.

\begin{lemma} \label{lemma:existence_reference_state_close_to_other_states} If $\abs{\Sigma^c_{\varepsilon}}\leq \Tilde{c} 2^{2n}$, then there is a pair $(x^*,y^*)$ such that there exist at least  $ (1 - 2\Tilde{c}) 2^{2n}$ pairs  $(x',y')\in\Sigma_{\varepsilon}$ fulfilling
    \begin{align}
    \| \rho^{x^*y^*}-\rho^{x'y'}\|_1 \leq 8 \sqrt{\varepsilon}.
\end{align}
\end{lemma}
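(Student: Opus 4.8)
The plan is to reinterpret Lemma~\ref{lemma: paths between strings} as a two‑step reachability statement in the bipartite ``input graph'' and combine it with the counting bound of Lemma~\ref{lemma: edge removal}. First I would dispose of the trivial range: if $\Tilde{c} > 1/2$ the asserted count $(1-2\Tilde{c})2^{2n}$ is non‑positive, so the statement holds vacuously; assume henceforth $\Tilde{c}\le 1/2$.

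Identify the set of input pairs $\{0,1\}^{n}\times\{0,1\}^{n}$ with the edge set of the complete bipartite graph whose left independent set consists of the $2^n$ possible strings $x$ and whose right independent set consists of the $2^n$ possible strings $y$, the pair $(x,y)$ being the edge $\{x,y\}$. Under this identification $\Sigma_{\varepsilon}$ (see~\eqref{equ:Sigma_eps}) is a set of edges and its complement $\Sigma^{c}_{\varepsilon}$, of size at most $\Tilde{c}2^{2n}$ by hypothesis, is the set of ``removed'' edges. Running the pigeonhole argument behind Lemma~\ref{lemma: edge removal} with a \emph{right} vertex as the hub: since at most $\Tilde{c}2^{2n}$ edges are removed and there are $2^n$ right vertices, some right vertex $y^{*}$ is incident to at most $\Tilde{c}2^{n}$ removed edges, hence the set $X^{*}:=\{x' : (x',y^{*})\in\Sigma_{\varepsilon}\}$ has size at least $(1-\Tilde{c})2^{n}$. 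Let $T:=\{(x',y')\in\Sigma_{\varepsilon} : x'\in X^{*}\}$ be the set of good edges incident to $X^{*}$; counting these edges and using once more that at most $\Tilde{c}2^{2n}$ edges are removed in total,
\begin{equation}
    |T| \;\ge\; |X^{*}|\cdot 2^{n} - \Tilde{c}\,2^{2n} \;\ge\; (1-\Tilde{c})2^{2n} - \Tilde{c}\,2^{2n} \;=\; (1-2\Tilde{c})2^{2n},
\end{equation}
and of course $T\subseteq\Sigma_{\varepsilon}$.

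Now fix any $x^{*}\in X^{*}$ (possible since $|X^{*}|\ge(1-\Tilde{c})2^{n}\ge 1$) and take $(x^{*},y^{*})$ as the reference pair. For every $(x',y')\in T$, the three pairs $(x^{*},y^{*})$, $(x',y^{*})$ and $(x',y')$ all lie in $\Sigma_{\varepsilon}$ — the first two because $x^{*},x'\in X^{*}$, the last because $(x',y')\in T$ — so on each of them the probability that one party does not commit given that the other commits is at most $\varepsilon$. These three pairs have precisely the shape $(x,y),(x',y),(x',y')$ required by Lemma~\ref{lemma: paths between strings} (the first two sharing the second coordinate $y^{*}$, the last two sharing the first coordinate $x'$), so that lemma gives $\norm{\rho^{x^{*}y^{*}}-\rho^{x'y'}}_{1}\le 8\sqrt{\varepsilon}$. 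As this holds for all $(x',y')\in T$, and $|T|\ge(1-2\Tilde{c})2^{2n}$ with $T\subseteq\Sigma_{\varepsilon}$, the pair $(x^{*},y^{*})$ witnesses the claim.

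The only point requiring care is the graph‑theoretic bookkeeping in the second step: one must place the hub on the right side of the bipartition precisely so that the intermediate pair $(x',y^{*})$ produced by a two‑step path shares its second coordinate with the reference pair and its first coordinate with the target, matching the template of Lemma~\ref{lemma: paths between strings} verbatim. (Taking a left‑side hub works equally well after replacing Lemma~\ref{lemma: paths between strings} by its mirror image through the intermediate pair $(x^{*},y')$, as observed in the remark following that lemma.) Everything else is immediate.
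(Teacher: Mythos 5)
Your proof is correct and follows essentially the same route as the paper: the paper's proof simply invokes Lemma~\ref{lemma: edge removal} to find a hub edge reaching $(1-2\Tilde{c})2^{2n}$ edges in two steps and then applies Lemma~\ref{lemma: paths between strings}, which is exactly the pigeonhole-plus-two-step-path argument you spell out. The only (cosmetic) difference is that you place the hub on the right side so the intermediate pair matches the template of Lemma~\ref{lemma: paths between strings} verbatim, whereas the paper's Lemma~\ref{lemma: edge removal} uses a left-side hub and tacitly relies on the mirrored version noted after that lemma; your write-up is, if anything, slightly more careful on this point and in handling the vacuous case $\Tilde{c}>1/2$.
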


\begin{proof}
$\abs{\Sigma^c_{\varepsilon}}\leq \Tilde{c} 2^{2n}$, so at most there are a fraction of $\Tilde{c}$ edges removed from the complete bipartite graph. By Lemma \ref{lemma: edge removal} there is a pair $(x^*,y^*)$ from which there are at least $(1-2\Tilde{c})2^{2n}$ edges connected in two steps. Applying Lemma \ref{lemma: paths between strings} gives the desired statement.
\end{proof}

Furthermore, even though the honest prover will commit to playing, it is still possible for him to answer loss due to his errors in his measurement setup. We bound the probability of success from above by showing that the commitment procedure of the attackers cannot change the input state much depending on $x,y$. In turn this means that there is a state $\rho^{x^*y^*}$ that is close to \textit{most} states. It also means that the probability for attackers to answer loss on $\rho^{x^*y^*}$ cannot differ too much from the overall loss rate $\eta_P$. We make this precise in the following lemmas. 

We will start with the following technical lemma:
\begin{lemma} \label{lemma:taumonotone}
    Let the function $f$ be defined as
\begin{align}
    f(x,\delta,\eta) = \frac{\sqrt{(\delta +\eta -x)^2+4 \delta  x}+ x -\delta -\eta }{2 x}.
\end{align}  
Then, $f$ is monotonically non-decreasing in the first argument $x$ in the regime of $x$, $\delta$, $\eta \geq 0$.
\end{lemma}
\begin{proof}
    We need to show that for $x \geq 0$, $\frac{\partial f(x,\delta,\eta)}{\partial x} \geq 0$. We start by calculating
    \begin{align}
        \frac{\partial f(x,\delta,\eta)}{\partial x} = \frac{(\delta +\eta ) \left(-\delta -\eta +\sqrt{\eta ^2+2 \eta  (\delta -x)+(\delta +x)^2}\right)+x (\eta -\delta )}{2 x^2 \sqrt{(\delta +\eta -x)^2+4 \delta  x}}.
    \end{align}
    Since the denominator is always positive, we need to show the numerator is positive. To do so, note that the numerator is $0$ in $x=0$ for any $\delta, \eta \geq 0$. So if we prove that the numerator is also monotonically non-decreasing in $x$ we get that $f$ is monotonically non-decreasing in $x$. We write:
    \begin{align}
        g(x, \delta, \eta) &= (\delta +\eta ) \left(-\delta -\eta +\sqrt{\eta ^2+2 \eta  (\delta -x)+(\delta +x)^2}\right)+x (\eta -\delta ), \\ 
        \frac{\partial g(x,\delta,\eta)}{\partial x} &= \frac{(\eta -\delta ) \sqrt{4 \delta \eta +(\delta -\eta +x)^2}+(\delta +\eta ) (\delta -\eta +x)}{\sqrt{4 \delta \eta +(\delta -\eta +x)^2}}.
    \end{align}
    Note that the denominator is always positive and $0$ in $x=0$ for any $\delta, \eta \geq 0$. Again, if we show that the numerator is monotonically non-decreasing in $x$, then we show that $g$ is positive:
    \begin{align}
        h(x, \delta, \eta) &= (\eta -\delta ) \sqrt{4 \delta \eta +(\delta -\eta +x)^2}+(\delta +\eta ) (\delta -\eta +x), \\
        \frac{\partial h(x,\delta,\eta)}{\partial x} &= \frac{(\delta +\eta ) \sqrt{4 \delta \eta +(\delta -\eta +x)^2}-(\delta -\eta ) (\delta -\eta +x)}{\sqrt{4 \delta \eta +(\delta -\eta +x)^2}}.
    \end{align}
    Now note $\sqrt{4 \delta \eta +(\delta -\eta +x)^2} \geq \abs{\delta -\eta +x}$, as $\delta, \eta \geq 0$. Then 
    \begin{gather}
        (\delta +\eta ) \sqrt{(\delta +\eta )^2+x^2+2 x (\delta -\eta )}-(\delta -\eta ) (\delta -\eta +x) \geq (\delta +\eta) |\delta -\eta +x|-(\delta -\eta ) (\delta -\eta +x) \nonumber \\
        =\delta (|\delta -\eta +x|-(\delta -\eta +x))+\eta (|\delta -\eta +x|+(\delta -\eta +x)).
    \end{gather} 
    Finally, $|\delta -\eta +x|-(\delta -\eta +x)$ and $|\delta -\eta +x|+(\delta -\eta +x)$ are both positive by definition and we have that $h$ is monotonically non-decreasing and $0$ in $x=0$. Thus $h$ is positive everywhere, thus $g$ is positive everywhere, thus $\frac{\partial f(x,\delta,\eta)}{\partial x}$ is positive and $f$ is monotonically non-decreasing.
\end{proof}

After creating the commitment bit both attackers exchange a quantum system and apply some measurement on their systems. If we fix a partition into systems $AA_\text{com}BB_\text{com}$, where `com' denotes the subsystems that will be communicated. We can write the attackers' POVMs as $\{ \Pi^{A,(x,y)}_{AB_\text{com},a}\}_{a\in\{0,1,\perp\}}$ and $\{ \Pi^{B,(x,y)}_{A_\text{com}B,b}\}_{b\in\{0,1,\perp\}}$, where we denote `correct' as $0$, `incorrect' as $1$, and `no-photon' as $\perp$ answers. For simplicity we will write the POVM of answering correctly as $ \Pi^{xy} \coloneqq \left( \Pi^{A,(x,y)}_{A,0} \otimes \Pi^{B,(x,y)}_{B,0}\right)$. Notice that these POVMs have to fulfill \eqref{eq:mimic_eta_P_commitment}. We want to estimate the probability of answering correctly on $\rho^{x^*y^*}$:
\begin{align} \label{eq:estimatebyPetaP}
    \frac{1}{2^{2n} \eta_P} \sum_{x,y} \tr{ \Pi^{xy} \mathcal{E}_1^{xy}( \rho^{x^*y^*} )}, 
\end{align}
by $\mathbb{P}[\text{attack} \mathsf{P}_{\eta_P}]$. Thus, we want to enforce that the probability of answering loss is equal to the loss rate of the underlying protocol. However the loss rate could be different for some particular $\rho^{x^*y^*}$ that we get from Lemma \ref{lemma:existence_reference_state_close_to_other_states}. On the other hand, the loss rate cannot differ too much as on average the loss rate is equal to $1-\eta_P$ and $\rho^{x^*y^*}$ is close to \textit{most} other states. We will show that there exist modified POVMs for attackers $A,B$ such that the correct loss rate is indeed attained, allowing us to upper bound eq. \eqref{eq:estimatebyPetaP} by $\mathbb{P}[\text{attack } \mathsf{P}_{\eta_P}]$, at the cost of a less tight bound. The following lemma makes this precise. We first define $\Lambda^{(x,y)}_\varepsilon$ to be the set of all quantum states close to some reference state $\rho^{xy}$:
\begin{equation}
	\Lambda^{(x,y)}_\varepsilon \coloneq \left\{ (x',y')  \in \Sigma_\varepsilon : \| \rho^{xy} - \rho^{x'y'} \|_1 \leq 8 \sqrt{\varepsilon} \right\},
\end{equation}
and write $\Lambda_\varepsilon \coloneq \Lambda^{(x_*,y_*)}_\varepsilon$.

For the next lemma we define the following quantities $\Delta_\varepsilon \coloneq \frac{|\Lambda_\varepsilon|}{2^{2n}} 8\sqrt{\varepsilon} + \frac{2|\Lambda^c_\varepsilon|}{2^{2n}}$ and $g(\eta_P)\coloneq \max\left(\frac{1}{ \sqrt{\eta_P}},\frac{1}{(1-\eta_P)}\right)$. Typically in our proofs we will have $\Delta_\varepsilon$ close to $0$, so $\Delta_\varepsilon \leq \sqrt{\Delta_\varepsilon}$.

\begin{lemma} \label{lemma:upperboundforrhoxstarystar}
    If $\mathsf P$ is state-independent (cf.~Definition~\ref{def:state_indep}) then and we assume $\Delta_\varepsilon \leq \sqrt{\Delta_\varepsilon}$, for any state $\rho^{x^*y^*}$ we have:
    \begin{align}
         \frac{1}{2^{2n} \eta_P} &\sum_{x,y \in \Lambda_\varepsilon} \tr{ \Pi^{xy} \mathcal{E}_1^{xy}(\rho^{x^*y^*})} \leq \mathbb{P}[\mathrm{attack } \mathsf{P}_{\eta_P}] + \sqrt{\Delta_\varepsilon} g(\eta_P) \frac{|\Lambda_\varepsilon|}{2^{2n} \eta_P}.
    \end{align}
\end{lemma}

\begin{proof}
We start with the probability of answering loss for the specific state $\rho^{x^*y^*}$:
\begin{align}
    \mathbb{P}[\perp(\rho^{x^*y^*})] = \frac{1}{2^{2n}} \sum_{x,y} \tr{ \left(\Pi_\perp^{A,x} \otimes \Pi_\perp^{B,y}\right) \mathcal{E}_1^{xy}(\rho^{x^*y^*})} = 1 - \eta_P + \delta,
\end{align}
where $\delta$ specifies how close the loss rate is to the average loss rate. We can bound $\delta$ as follows:
\begin{align}
    \mathbb{P}[\perp(\rho^{x^*y^*})] &- \frac{1}{2^{2n}} \sum_{x,y} \tr{ \left(\Pi_\perp^{A,x} \otimes \Pi_\perp^{B,y}\right) \mathcal{E}_1^{xy}(\rho^{xy})} + \frac{1}{2^{2n}} \sum_{x,y} \tr{ \left(\Pi_\perp^{A,x} \otimes \Pi_\perp^{B,y}\right) \mathcal{E}_1^{xy}(\rho^{xy})} \\
    &= \frac{1}{2^{2n}} \sum_{x,y} \tr{ \left(\Pi_\perp^{A,x} \otimes \Pi_\perp^{B,y}\right) \mathcal{E}_1^{xy}(\rho^{x^*y^*} - \rho^{xy})} + 1 - \eta_P \\
    &\leq 1 - \eta_P + \frac{1}{2^{2n}} \sum_{x,y} \| \left(\Pi_\perp^{A,x} \otimes \Pi_\perp^{B,y}\right) \|_{\infty} \| \mathcal{E}_1^{xy}(\rho^{x^*y^*} - \rho^{xy}) \|_1 \\
    &\leq 1- \eta_P + \frac{1}{2^{2n}} \sum_{x,y \in \Lambda_\varepsilon} \| \mathcal{E}_1^{xy}(\rho^{x^*y^*} - \rho^{xy}) \|_1 + \frac{1}{2^{2n}} \sum_{x,y \in \Lambda^c_\varepsilon} \| \mathcal{E}_1^{xy}(\rho^{x^*y^*} - \rho^{xy}) \|_1 \\
    &\leq 1 - \eta_P + \frac{|\Lambda_\varepsilon|}{2^{2n}} 8 \sqrt{\varepsilon} + \frac{2|\Lambda^c_\varepsilon|}{2^{2n}}
\end{align}
 so $\delta \in [-\frac{|\Lambda_\varepsilon|}{2^{2n}} 8 \sqrt{\varepsilon} - \frac{|\Lambda^c_\varepsilon|}{2^{2n}}, \frac{|\Lambda_\varepsilon|}{2^{2n}} 8 \sqrt{\varepsilon} + \frac{|\Lambda^c_\varepsilon|}{2^{2n}}]$. Since $\Delta_\varepsilon = \frac{|\Lambda_\varepsilon|}{2^{2n}} 8 \sqrt{\varepsilon} + \frac{2|\Lambda^c_\varepsilon|}{2^{2n}}$, we have
\begin{align}
    -\Delta_\varepsilon \leq \delta \leq \Delta_\varepsilon.
\end{align}
If $\delta = 0$, we have $\mathbb{P}[\perp(\rho^{x^*y^*})] = 1 - \eta_P$ and we are done. We will treat the two cases $\delta > 0$ and $\delta < 0$ separately. Assume $\delta > 0$, then loss is answered too often. We define a new POVM for $A$, $\tilde{\Pi}^A$, by changing some of the loss answers to a conclusive answer:
\begin{align}
    \tilde{\Pi}^{A, xy}_\perp &= \frac{1-\eta_P}{1-\eta_P+\delta} \Pi_{\perp}^{A,xy} \\
    \tilde{\Pi}^{A, xy}_0 &= \Pi_0^{A,xy} + \left( 1 - \frac{1-\eta_P}{1-\eta_P+\delta} \right) \Pi_{\perp}^{A,xy} \\
    \tilde{\Pi}^{A, xy}_1 &= \Pi_{1}^{A,xy}.
\end{align}
Clearly, if $\Pi^A$ is a valid POVM, then so is $\tilde{\Pi}^A$. The loss rate on $\rho^{x^*y^*}$ is now:
\begin{align}
    \frac{1}{2^{2n}}\sum_{x,y} \tr{ \left( \tilde{\Pi}^{A,xy}_\perp \otimes \Pi^{B,xy}_\perp \right) \mathcal{E}_1^{xy} (\rho^{x^*y^*})} = 1 - \eta_P,
\end{align}
as we required. The success probability of the strategy only differs slightly:
\begin{align}
    \bigl| \tr{\tilde{\Pi}^{xy} \mathcal{E}_1^{xy}(\rho^{x^*y^*})} &- \tr{\Pi^{xy} \mathcal{E}_1^{xy}(\rho^{x^*y^*})} \bigr | \leq \| \tilde{\Pi}^{xy} - \Pi^{xy} \|_\infty \\
    &\leq \left\| \left( 1 - \frac{1-\eta_P}{1- \eta_P + \delta} \right) \Pi_\perp^{A,xy} \otimes \Pi^{B} \right\|_\infty \leq 1 - \frac{1-\eta_P}{1-\eta_P + \delta} \\
    &\leq \frac{\delta}{1-\eta_P + \delta} \leq \frac{\delta}{1-\eta_P} \leq \frac{\Delta_\varepsilon}{1-\eta_P}.
\end{align}
Then the success probability on $\rho^{x^*y^*}$ conditioned on playing is bounded by:
\begin{align}
    \frac{1}{2^{2n} \eta_P} &\sum_{x,y \in \Lambda_\varepsilon} \tr{ \Pi^{xy} \mathcal{E}_1^{xy}(\rho^{x^*y^*})} \nonumber \\
    &= \frac{1}{2^{2n} \eta_P} \sum_{x,y \in \Lambda_\varepsilon} \tr{ (\Pi^{xy} - \tilde{\Pi}^{xy}) \mathcal{E}_1^{xy}(\rho^{x^*y^*})} + \frac{1}{2^{2n} \eta_P} \sum_{x,y \in \Lambda_\varepsilon} \tr{ \tilde{\Pi}^{xy} \mathcal{E}_1^{xy}(\rho^{x^*y^*})} \\
    &\leq \frac{|\Lambda_\varepsilon|}{2^{2n} \eta_P} \|\Pi^{xy} - \tilde{\Pi}^{xy}\|_\infty \| \mathcal{E}_1^{xy}(\rho^{x^*y^*})\|_1 + \mathbb{P}[\text{attack } \mathsf{P}_{\eta_P}]  \\
    &\leq \mathbb{P}[\text{attack } \mathsf{P}_{\eta_P}] + \frac{\Delta_\varepsilon}{\eta_P (1-\eta_P)} \frac{|\Lambda_\varepsilon|}{2^{2n}}, \label{eq:finalinequalitydeltapositive}
\end{align}
where the first inequality follows from the assumption that the protocol is secure against any input state and the fact that $U^{xy} = U^x \otimes U^y$ as $\mathcal{I}_1^{xy} = \mathcal{I}^A_{1|x} \otimes \mathcal{I}^B_{1|y}$. The latter we can neglect since the local unitaries can be absorbed into the attack strategy on the original protocol $\mathsf{P}_{\eta_V, \eta_P}$.

Now assume $\delta \leq 0$, for this case we also define a new POVM that shifts some answers to loss depending on a parameter $\mu \in [0.1]$ that we will choose later. For both $A,B$ we take the same transformation:
\begin{align}
    \tilde{\Pi}^{A/B, xy}_\perp &= {\Pi}^{A/B, xy}_\perp + \mu ({\Pi}^{A/B, xy}_0 +{\Pi}^{A/B, xy}_1) \\
    \tilde{\Pi}^{A/B, xy}_0 &= (1-\mu) {\Pi}^{A/B, xy}_0 \\
    \tilde{\Pi}^{A/B, xy}_1 &= (1-\mu) {\Pi}^{A/B, xy}_1
\end{align}
We then want to pick $\mu$ such that the probability to answer loss for the modified POVM is exactly $1-\eta_P$. For clarity we define $a_{ij}^{xy} = \tr{\left( \Pi^{A,xy}_i \otimes \Pi^{B,xy}_j \right) \mathcal{E}_1^{xy}(\rho^{x^*y^*})}$:
\begin{align}
    &\frac{1}{2^{2n}} \sum_{x,y} \tr{ \left(\tilde{\Pi}^{A,xy}_\perp \otimes \tilde{\Pi}^{B,xy}_\perp \right) \mathcal{E}_1^{xy}(\rho^{x^*y^*})} = 1 - \eta_P \label{eq:probabilityoflosspitilde}\\
    &= \frac{1}{2^{2n}} \sum_{x,y} \left( a_{\perp\perp}^{xy} + \mu (a^{x,y}_{\perp 0} + a^{x,y}_{0 \perp} + a^{x,y}_{\perp 1} + a^{x,y}_{1 \perp}) + \mu^2 (a^{x,y}_{00} + a^{x,y}_{01} + a^{x,y}_{10} + a^{x,y}_{11}) \right) \\
    &= \frac{1}{2^{2n}} \sum_{x,y} \left( a_{\perp\perp}^{xy} + \mu \chi^{xy} + \mu^2 \tau^{xy} \right) = \frac{1}{2^{2n}} \sum_{x,y} (a_{\perp\perp}^{xy}) + \mu \chi + \mu^2 \tau,
\end{align}
where we defined $\chi^{xy} \coloneqq (a^{x,y}_{\perp 0} + a^{x,y}_{0 \perp} + a^{x,y}_{\perp 1} + a^{x,y}_{1 \perp})$, $\tau^{xy} \coloneqq (a^{x,y}_{00} + a^{x,y}_{01} + a^{x,y}_{10} + a^{x,y}_{11})$ and replaced $\frac{1}{2^{2n}} \sum_{x,y} \chi^{xy} = \chi$, and for $\tau$ equivalently. By assumption we have that the loss rate is:
\begin{align} \label{eq:lossrate}
    \mathbb{P}[\perp(\rho^{x^*y^*})] = \frac{1}{2^{2n}} \sum_{x,y} a^{xy}_{\perp\perp} = 1 - \eta_P + \delta,
\end{align}
and the complement in which at least one of the attackers answers is:
\begin{align} \label{eq:chitaurelation}
    \frac{1}{2^{2n}} \sum_{x,y} (\chi^{xy} + \tau^{xy} ) = \chi + \tau.
\end{align}
When we add equations \eqref{eq:lossrate} and \eqref{eq:chitaurelation} we get the sum over all possible events so their sum is equal to $1$. This allows us to rewrite $\chi$:
\begin{align}
    1 - \eta_P + \delta + \chi + \tau &= 1 \\
    \chi = \eta_P - \delta - \tau. \label{eq:taudefenition}
\end{align}
Now we can pick $\mu$ to enforce the proper loss rate of $1 - \eta_P$, combining the above equation with \eqref{eq:probabilityoflosspitilde} we get:
\begin{align}
    1-\eta_P &= \frac{1}{2^{2n}} \sum_{x,y} \tr{ \left(\tilde{\Pi}^{A,xy}_\perp \otimes \tilde{\Pi}^{B,xy}_\perp \right) \mathcal{E}_1^{xy}(\rho^{x^*y^*})} \\
    &= 1 - \eta_P + \delta + \mu(\eta_P - \delta - \tau) + \mu^2 \tau,
\end{align}
which reduces to
\begin{align}
    \tau \mu^2 + (\eta_P - \delta - \tau) \mu + \delta &= 0
\end{align}
Assuming $\tau \neq 0$, we can solve this quadratic equation for $\mu$ and get a positive and negative solution. Since $\mu$ is positive we can discard the negative solution to get:
\begin{align}
    \mu =  \frac{\sqrt{(\eta_P - \delta - \tau)^2 - 4 \delta \tau} - \eta_P + \delta + \tau}{2\tau}.
\end{align}
Plugging in $\delta' = - \delta$ we get: 
\begin{align} \label{eq:finalexpressionmy}
    \mu =  \frac{\sqrt{(\eta_P + \delta' - \tau)^2 + 4 \delta' \tau} - \eta_P - \delta' + \tau}{2\tau}.
\end{align}
Now Lemma \ref{lemma:taumonotone} implies that eq.\  \eqref{eq:finalexpressionmy} is monotone in $\tau$, since $\eta_P, \delta', \tau \geq 0$. From eq.\  \eqref{eq:taudefenition} we see that $\tau \leq \eta_P + \delta'$, since $\chi \geq 0$, filling this in for $\tau$ and using monotonicity we get:
\begin{align}
    \mu \leq \frac{\sqrt{4 \delta' (\eta_P + \delta')}}{2 (\eta_P + \delta')} = \frac{\sqrt{\delta'}}{\sqrt{\eta_P + \delta'}} \leq \frac{\sqrt{\delta'}}{\sqrt{\eta_P}} \leq \frac{\sqrt{\Delta_{\varepsilon}}}{\sqrt{\eta_P}}.
\end{align}
The success probability of the new POVM strategy on both conclusive answers only differs slightly:

\begin{align}
    \bigl| &\tr{\tilde{\Pi}^{xy} \mathcal{E}_1^{xy}(\rho^{x^*y^*})} - \tr{\Pi^{xy} \mathcal{E}_1^{xy}(\rho^{x^*y^*})} \bigr | \leq \| \tilde{\Pi}^{xy} - \Pi^{xy} \|_\infty \\
    &\leq \| \tilde{\Pi}^{A,xy} \otimes \tilde{\Pi}^{B,xy} - \tilde{\Pi}^{A,xy} \otimes {\Pi}^{B,xy} \|_\infty + \| \tilde{\Pi}^{A,xy} \otimes {\Pi}^{B,xy} - {\Pi}^{A,xy} \otimes {\Pi}^{B,xy} \|_\infty \\
    &\leq \| \tilde{\Pi}^{B,xy} - \Pi^{B,xy} \|_\infty + \| \tilde{\Pi}^{A,xy} - \Pi^{A,xy} \|_\infty \leq 2 \mu \leq 2 \frac{\sqrt{\Delta_\varepsilon}}{\sqrt{\eta_P}}.
\end{align}
Thus the success probability on $\rho^{x^*y^*}$ when conditioned on playing when $\delta < 0$ is bounded by:
\begin{align}
    \frac{1}{2^{2n} \eta_P} &\sum_{x,y \in \Lambda_\varepsilon} \tr{ \Pi^{xy} \mathcal{E}_1^{xy}(\rho^{x^*y^*})} \nonumber \\
    &= \frac{1}{2^{2n} \eta_P} \sum_{x,y \in \Lambda_\varepsilon} \tr{ (\Pi^{xy} - \tilde{\Pi}^{xy}) \mathcal{E}_1^{xy}(\rho^{x^*y^*})} + \frac{1}{2^{2n} \eta_P} \sum_{x,y \in \Lambda_\varepsilon} \tr{ \tilde{\Pi}^{xy} \mathcal{E}_1^{xy}(\rho^{x^*y^*})} \\
    &\leq \frac{|\Lambda_\varepsilon|}{2^{2n} \eta_P} \|\Pi^{xy} - \tilde{\Pi}^{xy}\|_1 \| \mathcal{E}_1^{xy}(\rho^{x^*y^*})\|_\infty + \mathbb{P}[\text{attack } \mathsf{P}_{\eta_P}]  \\
    &\leq \mathbb{P}[\text{attack } \mathsf{P}_{\eta_P}] + \frac{\sqrt{\Delta_\varepsilon}}{\eta_P \sqrt{\eta_P}} \frac{|\Lambda_\varepsilon|}{2^{2n}}. \label{eq:finalinequalitydeltanegative}
\end{align}
Combining the bounds on the success probability for $\delta < 0$ above and $\delta > 0$ in eq.\  \eqref{eq:finalinequalitydeltapositive} we get:
\begin{align}
    \frac{1}{2^{2n} \eta_P} &\sum_{x,y \in \Lambda_\varepsilon} \tr{ \Pi^{xy} \mathcal{E}_1^{xy}(\rho^{x^*y^*})} \leq \mathbb{P}[\text{attack } \mathsf{P}_{\eta_P}] + \max \left(\frac{\sqrt{\Delta_\varepsilon}}{ \sqrt{\eta_P}}, \frac{\Delta_\varepsilon}{( 1 - \eta_P) } \right) \frac{|\Lambda_\varepsilon|}{2^{2n} \eta_P}.
\end{align}
By assumption $\sqrt{\Delta_\varepsilon} \geq \Delta_\varepsilon$ thus, 
\begin{align}
    \frac{1}{2^{2n} \eta_P} &\sum_{x,y \in \Lambda_\varepsilon} \tr{ \Pi^{xy} \mathcal{E}_1^{xy}(\rho^{x^*y^*})} \leq \mathbb{P}[\text{attack } \mathsf{P}_{\eta_P}] + \sqrt{\Delta_\varepsilon} g(\eta_P) \frac{|\Lambda_\varepsilon|}{2^{2n} \eta_P}.    
\end{align}
\end{proof}

We can now formulate a statement about the security of a protocol with a commit round added on top of a regular protocol. This is useful because it does not give attackers the opportunity to use the option of answering `loss' very often anymore and raises the effective transmission of the protocol from $\eta_V \eta_P$ to the usually much larger $\eta_P$. The latter may be large enough to protect against lossy attacks that arise in e.g.\ $f$-BB84 QPV protocols. On the other hand, it opens up a new possible attack. Attackers can now try to apply some transformation on their state and answer `no commit' when this transformation fails. However, they still need to answer the same commitment to both verifiers. In the following theorem we show that this action cannot help them much. Because the attackers need to give the same commit-bit with very high probability, the size of $\Sigma^c_{\varepsilon}$ will be small relative to all possible inputs. Then a large number of post-measurement states will be close to a fixed state independent of $x,y$ by Lemma \ref{lemma:existence_reference_state_close_to_other_states}. We can now bound the probability of success of the protocol with commitment, because the post-measurement state can be replaced by one fixed state independent of $x,y$. Thus the attackers find themselves in the same situation as attacking the underlying protocol. Any underlying protocol that remains secure for any (constant) adversarial input state as in Definition~\ref{def:state_indep} thus has a corresponding commitment-protocol with the same security guarantee (up to a small overhead). We make this precise in the following theorem. Note that a particular protocol with the considered properties is \QPVBBf~\cite{bluhm2022single}.

\begin{theorem} \label{thm: main theorem upper bounding commit attack} 
    Let  $\mathsf P$  be a quantum position verification protocol in which the verifiers send classical and quantum information and the prover responds with classical answers. Suppose that for its version with commitment, \emph{\textsf{c-}}$\mathsf{P}$, we have $\abs{\Sigma^c_{\varepsilon}}\leq \Tilde{c} 2^{2n}$, for some $\varepsilon, \tilde{c}$ sufficiently small such that $1- (8 \sqrt{\varepsilon} + \sqrt{(1-2\tilde{c})8\sqrt{\varepsilon} + 4 \tilde{c}} \, g(\eta_P)) \geq 0$. If $\mathsf P$ is state-independent (cf.~Definition~\ref{def:state_indep}) then, on the rounds the attackers play, the following bound on the probability of attackers answering correctly to \emph{\textsf{c-}}$\mathsf{P}$ holds:    \begin{equation}\label{eq: theorem committing}
        \mathbb P[\mathrm{attack } \, \emph{\textsf{c-}}\mathsf {P}_{\eta_{V},\eta_P}] \leq \frac{1}{\eta_P} \left( (1-2\tilde{c}) \left(8 \sqrt{\varepsilon} + g(\eta_P) \sqrt{(1-2\tilde{c})8\sqrt{\varepsilon} + 4 \tilde{c}} \right)  + 2 \tilde{c} \right) + \mathbb{P}[\mathrm{attack} \, \mathsf{P}_{\eta_P}]
    \end{equation}
\end{theorem}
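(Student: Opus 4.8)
The plan is to reduce any attack on \textsf{c-}$\mathsf{P}$ to an attack on $\mathsf{P}$ that is run with a single, $(x,y)$-independent adversarial input state, and then to cash in the state-independence of $\mathsf{P}$. First I would feed the hypothesis $\abs{\Sigma^c_\varepsilon}\le\tilde c\,2^{2n}$ into Lemma~\ref{lemma:existence_reference_state_close_to_other_states}, obtaining a pair $(x^*,y^*)$ and a set $S\subseteq\Sigma_\varepsilon$ with $\abs{S}\ge(1-2\tilde c)2^{2n}$ and $\norm{\rho^{x^*y^*}-\rho^{xy}}_1\le 8\sqrt\varepsilon$ for every $(x,y)\in S$. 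Shrinking $S$ I may take $\abs{S}=(1-2\tilde c)2^{2n}$ (up to a negligible integrality correction), and I set $\tau:=\rho^{x^*y^*}$; this bipartite state on Alice's and Bob's registers will be the fixed adversarial input for $\mathsf{P}$.

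Next I would unfold the commitment instruments. By Lemma~\ref{lemma: instrument and channel} the ``commit'' branches factor as $\mathcal{I}^A_{1|x}(\cdot)=\mathcal{E}^A_{1|x}\big(\sqrt{M^x_A}\,(\cdot)\,\sqrt{M^x_A}\big)$ and $\mathcal{I}^B_{1|y}(\cdot)=\mathcal{E}^B_{1|y}\big(\sqrt{M^y_B}\,(\cdot)\,\sqrt{M^y_B}\big)$ for quantum channels $\mathcal{E}^A_{1|x},\mathcal{E}^B_{1|y}$ depending only on the local string, so the normalized state the attackers proceed with is $\tilde{\mathcal{I}}_1^{xy}(\rho)=(\mathcal{E}^A_{1|x}\otimes\mathcal{E}^B_{1|y})(\rho^{xy})$, with $\rho^{xy}$ as in eq.~\eqref{eq rhoxy}. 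Let $\Lambda^{xy}$ be the CPTP map built from $\mathcal{E}^A_{1|x}\otimes\mathcal{E}^B_{1|y}$, then the single simultaneous round of quantum communication between Alice and Bob, then the local $(x,y)$-dependent channels they apply before measuring, so that $\omega^{\mathcal{I}_1,(x,y)}=\Lambda^{xy}(\rho^{xy})$ and eq.~\eqref{eq prob attack with instruments} becomes $\mathbb{P}[\mathrm{attack}\,\textsf{c-}\mathsf{P}_{\eta_V,\eta_P}]=2^{-2n}\sum_{x,y}\tr{\big(\Pi^{A,(x,y)}_{AB_\mathrm{com}}\otimes\Pi^{B,(x,y)}_{A_\mathrm{com}B}\big)\Lambda^{xy}(\rho^{xy})}$. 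The crucial observation is that running $\Lambda^{xy}$ followed by the final POVMs on the fixed $\tau$ is an admissible attack on $\mathsf{P}_{\eta_P}$: the building blocks of $\Lambda^{xy}$ are precisely a local $x$-dependent channel at Alice, a local $y$-dependent channel at Bob, one simultaneous communication round, and local $(x,y)$-dependent post-processing (which may be folded into the final POVMs); the loss rate to be mimicked on committed rounds is $\eta_P$, matching $\mathsf{P}_{\eta_P}$; and, by state-independence (Definition~\ref{def:state_indep}), the adversarial post-quantum-phase state is allowed to be $\tau$. Hence $2^{-2n}\sum_{x,y}\tr{\big(\Pi^{A,(x,y)}_{AB_\mathrm{com}}\otimes\Pi^{B,(x,y)}_{A_\mathrm{com}B}\big)\Lambda^{xy}(\tau)}\le\mathbb{P}[\mathrm{attack}\,\mathsf{P}_{\eta_P}]$. (If one prefers not to fold channels into POVMs, Corollary~\ref{corollary: instrument and stinespring} permits the same argument with unitary dilations.)

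Finally I would split $\sum_{x,y}$ over $S$ and its complement $S^c$. For $(x,y)\in S^c$, of which there are $2\tilde c\,2^{2n}$, each summand is bounded by $1$ using $0\le\Pi^{A,(x,y)}_{AB_\mathrm{com}}\otimes\Pi^{B,(x,y)}_{A_\mathrm{com}B}\le\mathbbm{1}$ and $\tr{\omega^{\mathcal{I}_1,(x,y)}}=1$, contributing at most $2\tilde c$. For $(x,y)\in S$, writing $X:=\Pi^{A,(x,y)}_{AB_\mathrm{com}}\otimes\Pi^{B,(x,y)}_{A_\mathrm{com}B}$ (so $\norm{X}_\infty\le1$), I combine $\abs{\tr{X(\rho-\sigma)}}\le\norm{X}_\infty\norm{\rho-\sigma}_1$ with the data-processing inequality $\norm{\Lambda^{xy}(\rho^{xy})-\Lambda^{xy}(\tau)}_1\le\norm{\rho^{xy}-\tau}_1\le8\sqrt\varepsilon$ to get $\tr{X\,\Lambda^{xy}(\rho^{xy})}\le\tr{X\,\Lambda^{xy}(\tau)}+8\sqrt\varepsilon$; summing over $S$, completing the sum over all $(x,y)$ by adding the remaining nonnegative terms, and using the bound from the previous paragraph gives a contribution of at most $\mathbb{P}[\mathrm{attack}\,\mathsf{P}_{\eta_P}]+(1-2\tilde c)8\sqrt\varepsilon$. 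Adding the $S$- and $S^c$-contributions yields eq.~\eqref{eq: theorem committing}; the hypothesis $\varepsilon\le1/64$ just ensures $8\sqrt\varepsilon\le1$, keeping the estimate non-vacuous.

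I expect the reduction to an attack on $\mathsf{P}$ to be the main obstacle: one must check carefully that the ``tail'' of a \textsf{c-}$\mathsf{P}$ attack, executed on the fixed $\tau$ rather than on $\rho^{xy}$, genuinely is an admissible attack on $\mathsf{P}_{\eta_P}$ — that the communication pattern matches (one simultaneous round once the classical inputs are in), that the loss rate the attackers must imitate is $\eta_P$ on both sides, and that state-independence is invoked correctly so that no special structure of $\tau$ is needed; one should also verify via Lemma~\ref{lemma: instrument and channel} that $\mathcal{E}^A_{1|x}$ and $\mathcal{E}^B_{1|y}$ are genuine channels depending only on the local string. The rest — triangle inequality, data processing, and the counting supplied by Lemma~\ref{lemma:existence_reference_state_close_to_other_states} — is routine bookkeeping.
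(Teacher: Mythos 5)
Your proposal is correct and follows essentially the same route as the paper's proof: decompose the commit instruments via Lemma~\ref{lemma: instrument and channel}, fix the reference state $\rho^{x_*y_*}$ from Lemma~\ref{lemma:existence_reference_state_close_to_other_states}, bound the good pairs by H\"older plus data processing and the bad pairs trivially by $1$, and absorb the residual local channels into an attack on $\mathsf{P}_{\eta_P}$ via state-independence. The only differences are presentational (you name the composite channel $\Lambda^{xy}$ and spell out the admissibility of the reduced attack more explicitly, while the paper reaches the same count via $\tfrac{|\Lambda_\varepsilon^c|}{2^{2n}}(1-8\sqrt\varepsilon)+8\sqrt\varepsilon$ and the condition $\varepsilon\le 1/64$).
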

\begin{proof}

Both attackers need to generate a commitment bit $(c_A, c_B)$ and send it to the verifiers. The most general operation two attackers can do to generate these bits is a quantum instrument. By Lemma \ref{lemma: instrument and channel} we can split up the quantum instrument in a measurement followed by a quantum channel. Here the measurement outcome corresponds to the commitment bit the attackers generate and the quantum channel corresponds to the operation they further perform, possibly depending on their inputs $(x,y)$. We want to upper bound the attacking probability in the case both attackers commit to playing (i.e.~$c_A = c_B =1$, we denote this in the subscript of the instrument). Using the Stinespring dilation theorem we can dilate these quantum channels to unitaries over some larger quantum system and we get the following for the (renormalized) post instrument state the attackers hold if they both commit to playing:
\begin{align}
    \Tilde{\mathcal{I}}_1^{xy}(\rho) &= \frac{\mathcal{I}_1^{xy}(\rho)}{\tr{\mathcal{I}_1^{xy}(\rho)}} = \frac{\mathcal{E}_1^{xy} \left( \left(\sqrt{M^x_A} \otimes \sqrt{M^y_B} \right) \rho \left(\sqrt{M^y_B} \otimes \sqrt{M^x_A}\right) \right)}{\tr{
    \left(M^x_A \otimes M^y_B\right) \rho}} \\
    &=\mathcal{E}_1^{xy}(\rho^{xy}) \\
    &=\ptr{E}{U^{xy}(\rho^{xy}\otimes\ketbra{0}{0}_E)U^{xy\dagger}}.
\end{align}
By assumption $\abs{\Sigma^c_{\varepsilon}}\leq \Tilde{c} 2^{2n}$, so we can invoke Lemma \ref{lemma:existence_reference_state_close_to_other_states}, which says that there must be a reference pair $(x_*,y_*) \in \Sigma_\varepsilon$ such that there are at least $(1-2\Tilde{c})2^{2n}$ other pairs $(x,y) \in \Sigma_\varepsilon$ fulfilling
\begin{equation} \label{eq:closeness_post_commit_states}
	\| \rho^{x_*y_*} - \rho^{xy} \|_1 \leq 8 \sqrt{\varepsilon}.
\end{equation}

Combining both results, we get that when we apply some quantum channel depending on $(x,y)$ on both post measurement states, the outputs are still close. This follows straightforwardly from the data processing inequality for the $1$-norm:
\begin{align}
	\| \mathcal{E}_1^{xy}(\rho^{xy}) - \mathcal{E}_1^{xy}(\rho^{x_*y_*}) \|_1 &\leq \norm{\rho^{xy}-\rho^{x_*y_*}}_1 \\
	&\leq 8 \sqrt{\varepsilon}.
\end{align} 
We defined $\Lambda^{(x,y)}_\varepsilon$ to be the set of all quantum states close to some reference state $\rho^{xy}$:
\begin{equation}
	\Lambda^{(x,y)}_\varepsilon = \left\{ (x',y')  \in \Sigma_\varepsilon : \| \rho^{xy} - \rho^{x'y'} \|_1 \leq 8 \sqrt{\varepsilon} \right\},
\end{equation}
and wrote $\Lambda_\varepsilon := \Lambda^{(x_*,y_*)}_\varepsilon$ for the remainder of this proof. By the previous argument we have $|\Lambda_\varepsilon| \geq (1-2\tilde{c})2^{2n}$, and $|\Lambda^c_\varepsilon| \leq 2\tilde{c} \, 2^{2n}$. Note that the last inequalities implies an upper bound for $\Delta_\varepsilon$:
\begin{align}
    \Delta_\varepsilon &= {\frac{|\Lambda_\varepsilon|}{2^{2n}}8\sqrt{\varepsilon} + \frac{2|\Lambda_\varepsilon^c|}{2^{2n}}} 
    = {\left(1-\frac{|\Lambda_\varepsilon^c|}{2^{2n}}\right)8\sqrt{\varepsilon} + \frac{2|\Lambda_\varepsilon^c|}{2^{2n}}} \\
    &= {8 \sqrt{\varepsilon} + (2-8\sqrt{\varepsilon})\frac{|\Lambda_\varepsilon^c| }{2^{2n}}} 
    \leq {8\sqrt{\varepsilon} + (2-8\sqrt{\varepsilon}) 2 \tilde{c}} \\
    &= {(1-2\tilde{c})8\sqrt{\varepsilon} + 4\tilde{c}}.
\end{align}

Now we have all the ingredients to upper bound the attacking probability of a round in which both attackers committed. Again, denote the final operation of the attackers by $\Pi^{A,(x,y)}_{AB_\text{com},\textsc{c}} \otimes \Pi^{B,(x,y)}_{A_\text{com}B,\textsc{c}} = \Pi^{xy}$. Then, conditioned on answering we have:
\begin{align}
    \mathbb P[&\mathrm{attack }\hspace{1mm} \textsf{c-}\mathsf{P}_{\eta_{V},\eta_{P}}] = \frac{1}{2^{2n} \eta_P} \sum_{(x,y)} \tr{ \Pi^{xy} \tilde{\mathcal{I}}_1^{xy}(\rho) } \\
	&= \frac{1}{2^{2n} \eta_P} \sum_{(x,y) \in \Lambda_\varepsilon} \tr{ \Pi^{xy} \mathcal{E}_1^{xy}(\rho^{xy}) } + \frac{1}{2^{2n} \eta_P} \sum_{(x,y) \in \Lambda^{c}_\varepsilon} \tr{\Pi^{xy} \mathcal{E}_1^{xy}(\rho^{xy}) } \\ 
	&\leq \frac{1}{2^{2n} \eta_P} \sum_{(x,y) \in \Lambda_\varepsilon} \tr{ \Pi^{xy} (\mathcal{E}_1^{xy}(\rho^{xy})-\mathcal{E}_1^{xy}(\rho^{x_*y_*}) + \mathcal{E}_1^{xy}(\rho^{x_*y_*}) ) } + \frac{|\Lambda^c_\varepsilon|}{2^{2n} \eta_P} \\
	&= \frac{1}{2^{2n} \eta_P} \sum_{(x,y) \in \Lambda_\varepsilon} \tr{ \Pi^{xy} (\mathcal{E}_1^{xy}(\rho^{xy})-\mathcal{E}_1^{xy}(\rho^{x_*y_*}))} + \frac{1}{2^{2n} \eta_P} \sum_{(x,y) \in \Lambda_\varepsilon} \tr{ \Pi^{xy} \mathcal{E}_1^{xy}(\rho^{x_*y_*}) ) } + \frac{|\Lambda^c_\varepsilon|}{2^{2n} \eta_P} \\
	&\leq \frac{1}{2^{2n} \eta_P} \sum_{(x,y) \in \Lambda_\varepsilon} \|\Pi^{xy}\|_\infty \|\mathcal{E}_1^{xy}(\rho^{xy})-\mathcal{E}_1^{xy}(\rho^{x_*y_*})\|_1 + \frac{1}{2^{2n} \eta_P} \sum_{(x,y) \in \Lambda_\varepsilon} \tr{ \Pi^{xy} \mathcal{E}_1^{xy}(\rho^{x_*y_*}) ) } + \frac{|\Lambda^c_\varepsilon|}{2^{2n} \eta_P} \\ 
	&\leq \frac{1}{\eta_P} \left( \frac{|\Lambda_\varepsilon|}{2^{2n}} 8 \sqrt{\varepsilon} + \frac{|\Lambda^c_\varepsilon|}{2^{2n}} \right)+ \frac{1}{2^{2n} \eta_P} \sum_{(x,y) \in \Lambda_\varepsilon} \tr{ \Pi^{xy} \mathcal{E}_1^{xy}(\rho^{x_*y_*}) ) } \\
    &\leq \frac{1}{\eta_P} \left( \frac{|\Lambda_\varepsilon|}{2^{2n}} 8 \sqrt{\varepsilon} + \frac{|\Lambda^c_\varepsilon|}{2^{2n}} + \sqrt{\Delta_\varepsilon} g(\eta_P) \frac{|\Lambda_\varepsilon|}{2^{2n}} \right) + \mathbb{P}[\mathrm{attack} \, \mathsf{P}_{\eta_P}] \\
    &= \frac{1}{\eta_P} \left( \left( 1-    \frac{|\Lambda^c_\varepsilon|}{2^{2n}} \right) (8 \sqrt{\varepsilon} + \sqrt{\Delta_\varepsilon} g(\eta_P)) + \frac{|\Lambda^c_\varepsilon|}{2^{2n}}  \right) + \mathbb{P}[\mathrm{attack} \, \mathsf{P}_{\eta_P}] \\
    &\leq \frac{1}{\eta_P} \left( (8 \sqrt{\varepsilon} + \sqrt{\Delta_\varepsilon} g(\eta_P)) + 2 \tilde{c}(1- (8 \sqrt{\varepsilon} + \sqrt{\Delta_\varepsilon} g(\eta_P)) \right) + \mathbb{P}[\mathrm{attack} \, \mathsf{P}_{\eta_P}] \\
    &= \frac{1}{\eta_P} \left( (1-2\tilde{c}) (8 \sqrt{\varepsilon} + \sqrt{\Delta_\varepsilon} g(\eta_P)) + 2 \tilde{c} \right) + \mathbb{P}[\mathrm{attack} \, \mathsf{P}_{\eta_P}] \\
    &\leq \frac{1}{\eta_P} \left( (1-2\tilde{c}) \left(8 \sqrt{\varepsilon} +  g(\eta_P)\sqrt{(1-2\tilde{c})8\sqrt{\varepsilon} + 4\tilde{c}} \right) + 2 \tilde{c} \right) + \mathbb{P}[\mathrm{attack} \, \mathsf{P}_{\eta_P}] 
\end{align}
where we used the triangle inequality, H\"older's inequality for Schatten norms \cite{watrous_theory_2018}, Lemma \ref{lemma:upperboundforrhoxstarystar}, $(1- (8 \sqrt{\varepsilon} + \sqrt{\Delta_\varepsilon} g(\eta_P))) \geq 0$, which is true if $\varepsilon$ is picked sufficiently small, and lastly the before mentioned upper bound on $\Delta_\varepsilon$. 
\end{proof}

The idea is now to estimate $\varepsilon$ and $\Tilde{c}$ to show that over an increasing number of rounds, $\mathbb P[\mathrm{attack } \, \textsf{c-}\mathsf {P}_{\eta_{V},\eta_P}]$ becomes increasingly closer to $\mathbb P[\mathrm{attack } \, \mathsf P_{\eta_{P}}]$. This should follow from getting better and better estimates of $\varepsilon$ when verifiers keep on seeing only equal commitments.

The sequentially repeated protocol, denoted by $\textsf{c-}\mathsf{P}^{\mathrm{seq}}_{\eta_{V},\eta_P}$, works as follows:
\begin{enumerate}
    \item The verifiers collect a certain number of rounds $r$ of $\textsf{c-}\mathsf{P}_{\eta_{V},\eta_P}$ that come back with commitments $(c_A,c_B) \neq (0,0)$, as detailed below for the non-adaptive and adaptive case. Rounds with $(c_A,c_B) = (0,0)$ are discarded.
    \item If in any round the verifiers see different commits, i.e.\ $(c_A,c_B) = (0,1) \text{ or } (1,0)$, or different protocol answers, they abort immediately.
    \item Otherwise, after reaching the required number of $(c_A,c_B) \neq (0,0)$ rounds, they do the security analysis as described in Section~\ref{sec:SeqRep} and accept or reject, depending on the score $\Gamma_r$ of the sample.
\end{enumerate}

\subsection{Parameter estimation}

\subsubsection{Non-adaptive strategies}\label{subsect:non_adaptative}

The above theorem gives us a way to bound the probability of success in any lossy setting, which makes protocols with a commitment round ideal candidates for practical implementation of QPV. The role of $\varepsilon$ and $\Tilde{c}$ are important here. Theoretically, if we set $\varepsilon$ to $0$, i.e.~we never allow attackers to answer different commits, we see that the attackers cannot exploit the commitment step at all! Thus the protocol with commitment is fully loss tolerant against transmission loss $1-\eta_V$ in this case.

However, as we have shown before we cannot set $\varepsilon$ to be 0, since a small $\varepsilon$ might help the attackers, while still not being detected with high probability. On the other hand, if we play a certain number of rounds in which we see a sufficient amount of committing rounds, but never see different commit bits being sent, we can be quite certain that the probability of one party not committing given that the other party commits is small. Thus, we want to estimate the conditional probabilities:
\begin{align} \label{eq:conditional_probabilites}
    \mathbb{P}[ c_A = 0 | c_B = 1] &= \frac{1}{2^{2n}} \sum_{x,y} \mathbb{P}[ c_A = 0 | c_B = 1, x_A,y_B], \\
    \mathbb{P}[ c_B = 0 | c_A = 1] &= \frac{1}{2^{2n}} \sum_{x,y} \mathbb{P}[ c_B = 0 | c_A = 1, x_A,y_B]  \label{eq:conditional_probabilites2}.
\end{align}

Intuitively, if we see a large number of rounds in which both parties commit but we never see different commits, these probabilities should be small. Suppose we want to upper bound the maximum conditional probability of the two in eq.~\eqref{eq:conditional_probabilites} by some value $\alpha > 0$. Then we can do the following. We keep playing until we get $\frac{r}{\alpha}$ number of rounds in which both parties commit, where $r$ is some fixed constant. This takes an expected number $\frac{r}{\alpha \, p_{\mathrm{commit}}}$ of rounds, where $p_\mathrm{commit}$ is the probability that the honest prover will commit. 

Suppose the attackers' strategy is non-adaptive. Then, if we detect different commit bits in one of these rounds we immediately abort, because an honest prover would never send these. If the probability of answering different commit bits was larger than $\alpha$, the probability to answer equal commit bits (and not get detected) every round in which they commit would be smaller than $(1-\alpha)^{\frac{r}{\alpha}}$.

We will now lower bound the probability to detect attackers due to differing commits. Suppose the maximum of the two probabilities eqs.~\eqref{eq:conditional_probabilites}, \eqref{eq:conditional_probabilites2} is at least $\alpha$ and denote the events $C^i_\mathrm{diff} = \{ (c_A^i, c_B^i) = (0,1) \text{ or } (1,0) \}$, $C^i_\mathrm{eq} = \{ (c_A^i, c_B^i) = (0,0) \text{ or } (1,1) \}$, $C^i_\mathrm{(1,1)} = \{ (c_A^i, c_B^i) = (1,1) \}$ and $C^i_{\neq 0} = \{ (c_A^i, c_B^i) \neq (0,0) \}$. Then for $i, j \in \{ 1, \dots, r/\alpha \}$ attackers are detected due to differing commits with probability
\begin{align}
    \mathbb{P}[\text{detect attackers}\,|\,\text{commits} \neq (0,0)] &= \mathbb{P}[\exists j \text{ with } (c_A^j, c_B^j) = (0,1) \text{ or } (1,0) \, | \, \forall i \quad (c_A^i, c_B^i) \neq (0,0)] \\
    &= \mathbb{P}[\exists j \text{ with } C^j_\mathrm{diff} \,|\, \forall i \,\, C^i_{\neq 0}].
\end{align}
Using the complementary probability and the fact that attackers act non-adaptively, we can write
\begin{align}
    \mathbb{P}[\text{detect attackers}\,|\,\text{commits} \neq (0,0)] &= 1 - \mathbb{P}[ \forall i \,\, C^i_\mathrm{eq} \, | \, \forall i \,\, C^i_{\neq 0}] \\
    &= 1 - \prod_{i=1}^{r/\alpha} \mathbb{P}[C^i_\mathrm{(1,1)} \, | \, C^i_{\neq 0}] = 1 - \prod_{i=1}^{r/\alpha} \left( 1 - \mathbb{P}[C^i_\mathrm{diff} \, | \, C^i_{\neq 0}] \right) \label{equ:P_detect_prob1} \\
    &\geq 1 - \prod_{i=1}^{r/\alpha} \left( 1 - \max\{ \mathbb{P}[c_B^i = 0 \, | \, c_A^i = 1], \mathbb{P}[c_A^i = 0 \, | \, c_B^i = 1] \} \right) \label{equ:P_detect_prob2} \\
    &\geq 1 - \prod_{i=1}^{r/\alpha} \left( 1 - \alpha \right) = 1 - (1-\alpha)^{r/\alpha} \\
    &\geq 1 - e^{-\alpha r/\alpha} = 1 - e^{-r}.
\end{align}
In the second equality, we use that $C^i_\mathrm{eq} \cap \{C^j_{\neq 0} \forall j\} = C^i_{(1,1)} = C^i_{(1,1)} \cap C^i_{\neq 0}$ and that the attacks are non-adaptive.
The first inequality follows from the following argument. Notice that the event $\{ (c_A^i, c_B^i) \neq (0,0) \}$ contains $ \{ c_A^i = 1 \text{ or } c_B^i = 1 \}$. Consider the case of $c_A^i = 1$. Then we can write
\begin{align}
    \mathbb{P}[C^i_\mathrm{diff} \, | \, c_A^i = 1] &= \frac{\mathbb{P}[(c_A^i, c_B^i) = (1,0)]}{\mathbb{P}[(c_A^i, c_B^i) = (1,0)] + \mathbb{P}[(c_A^i, c_B^i) = (1,1)]}, \label{equ:probs_proof_nonadaptive}\\
    \mathbb{P}[C^i_\mathrm{diff} \, | \, C^i_{\neq 0}] &= \frac{\mathbb{P}[(c_A^i, c_B^i) = (1,0)] + \mathbb{P}[(c_A^i, c_B^i) = (0,1)]}{1-\mathbb{P}[(c_A^i, c_B^i) = (0,0)]} \label{equ:probs_proof_nonadaptive2}.  
\end{align}
Writing $a = \mathbb{P}[(c_A^i, c_B^i) = (0,0)], b = \mathbb{P}[(c_A^i, c_B^i) = (0,1)], c = \mathbb{P}[(c_A^i, c_B^i) = (1,0)]$ and $d = \mathbb{P}[(c_A^i, c_B^i) = (1,1)]$ on can directly verify that $\frac{c}{c+d} \leq \frac{c+b}{1-a}$ given that $a+b+c+d=1$. Thus
\begin{align}
     \mathbb{P}[C^i_\mathrm{diff} \, | \, C^i_{\neq 0}] \geq \mathbb{P}[C^i_\mathrm{diff} \, | \, c_A^i = 1] = \mathbb{P}[c_B^i = 0 \, | \, c_A^i = 1].
\end{align}
The case $c_B^i = 1$ works the same way. Hence
\begin{align}
    \mathbb{P}[C^i_\mathrm{diff} \, | \, C^i_{\neq 0}] \geq \max\{ \mathbb{P}[c_B^i = 0 \, | \, c_A^i = 1], \mathbb{P}[c_A^i = 0 \, | \, c_B^i = 1] \}.
\end{align}
We see that if the probability to commit differently was higher than $\alpha$ we would detect attackers in the $\frac{r}{\alpha}$ committed rounds with probability exponentially close to $1$ in $r$. When we pick $r=20$, we have that $\mathbb{P}[\text{detect attackers}\,|\,\text{commits} \neq (0,0)] \geq 1 - 10^{-9}$.  And, if we do not see any different commit bits in $\frac{r}{\alpha}$ rounds we can say with very high probability that the probabilities in eq.~\eqref{eq:conditional_probabilites}, \eqref{eq:conditional_probabilites2} are upper bounded by $\alpha$. The more rounds we run, the smaller we can make $\alpha$ (with high probability), thus controlling the role of $\varepsilon$ in Theorem \ref{thm: main theorem upper bounding commit attack}.

For the theorem to be of any use, we also need to control the dependence on $\Tilde{c}$ (which comes from $\abs{\Sigma^c_{\alpha}}\leq \Tilde{c} 2^{2n}$). Intuitively, if the set $\Sigma_\alpha^c$ is large, we know that a big part of this set must be close to $\alpha$ in order for the average over all probabilities to still be $\alpha$. Then, if we would look at, e.g.\ $\Sigma_{2\alpha}^c$, we expect the set to be much smaller. We can make this intuition precise. Suppose we play $k'\frac{20}{\alpha}$ number of rounds for some value $\alpha$ that we fix beforehand. Then by the previous argument we can assume with high probability that $\max \{\mathbb{P}[ c_A = 0 | c_B = 1], \ \mathbb{P}[ c_B = 0 | c_A = 1] \} \leq \frac{\alpha}{k'}$. Then consider the set $\Sigma_\alpha^c$. In the worst case, all the values in this set are very close to $\alpha$ and, in order for the average to be $\frac{\alpha}{k'}$, we get that the maximal size is $|\Sigma_\alpha^c| \leq \frac{2}{k'} 2^{2n}$. Indeed, from the condition that $\max \{\mathbb{P}[ c_A = 0 | c_B = 1], \ \mathbb{P}[ c_B = 0 | c_A = 1] \} \leq \frac{\alpha}{k'}$ it follows that in the worst case both probabilities are equal to $\alpha/k'$ and have non-zero values on disjoint pairs of $(x,y)$. More formally, from the definition of $\Sigma_\alpha$ we know that either $\mathbb{P}[ c_A = 0 | c_B = 1, x,y] \geq \alpha$ for at least $|\Sigma_\alpha^c|/2$ pairs $(x,y)$ in $\Sigma_\alpha^c$ or $\mathbb{P}[ c_B = 0 | c_A = 1, x,y] \geq \alpha$ for at least $|\Sigma_\alpha^c|/2$ pairs $(x,y)$ in $\Sigma_\alpha^c$. Let us assume without loss of generality that we are in the former case.  We estimate 
\begin{align*}
    \frac{\alpha}{k'} & \geq \frac{1}{2^{2n}} \sum_{x,y} \mathbb{P}[ c_A = 0 | c_B = 1, x_A,y_B] \\
     & \geq \frac{1}{2^{2n}} \sum_{(x,y) \in \Sigma_\alpha^c} \mathbb{P}[ c_A = 0 | c_B = 1, x_A,y_B] \\
      & \geq \frac{1}{2^{2n}} \frac{|\Sigma_\alpha^c|}{2} \alpha
\end{align*}

 Thus, we can set $\Tilde{c} = \frac{2}{k'}$. For simplicity of the final statement from Theorem \ref{thm: main theorem upper bounding commit attack}, note that we have the freedom to pick $\alpha$ as we like. Pick $\alpha$ such that $8 \sqrt{\alpha} = 2 \tilde{c}$. Plugging in these values in Theorem \ref{thm: main theorem upper bounding commit attack} we get:
\begin{align}
    \mathbb P[\mathrm{attack } \, \emph{\textsf{c-}}\mathsf {P}_{\eta_{V},\eta_P}] &\leq \frac{1}{\eta_P} \left( (1-2\tilde{c}) \left(8 \sqrt{\varepsilon} + g(\eta_P) \sqrt{(1-2\tilde{c})8\sqrt{\varepsilon} + 4 \tilde{c}} \right)  + 2 \tilde{c} \right) + \mathbb{P}[\mathrm{attack} \, \mathsf{P}_{\eta_P}] \\
    &\leq \frac{1}{\eta_P} \left( (1-2\tilde{c}) \left(8 \sqrt{\alpha} + g(\eta_P) \sqrt{(1-2\tilde{c})8\sqrt{\alpha} + 4  \tilde{c}} \right)  + 2\tilde{c} \right) + \mathbb{P}[\mathrm{attack} \, \mathsf{P}_{\eta_P}] \\ 
    &= \frac{1}{\eta_P} \left((1-2\tilde{c})\left(2\tilde{c} + g(\eta_P) \sqrt{( 1 - 2\tilde{c})2\tilde{c} + 4\tilde{c}}\right) + 2\tilde{c} \right)+ \mathbb{P}[\mathrm{attack} \, \mathsf{P}_{\eta_P}] \\
    &\leq \frac{1}{\eta_P} \left((1-2\tilde{c})\left(2\tilde{c} + g(\eta_P) \sqrt{6\tilde{c}}\right) + 2\tilde{c} \right)+ \mathbb{P}[\mathrm{attack} \, \mathsf{P}_{\eta_P}] \\
    &\leq \frac{1}{\eta_P} \left( 3g(\eta_P)\sqrt{\tilde{c}} + 4\tilde{c} - 6g(\eta_P)\tilde{c}\sqrt{\tilde{c}} - 4\tilde{c}^2 \right)+ \mathbb{P}[\mathrm{attack} \, \mathsf{P}_{\eta_P}].
\end{align}
Then pick $\tilde{c}= \frac{1}{k^2(4+3g(\eta_P))^2}$, i.e., $k' = 2 k^2(4+3g(\eta_P))^2$, such that $3 g(\eta_P)\sqrt{\tilde{c}} + 4\tilde{c} \leq \frac{\eta_P}{k}$, and we get
\begin{align}
    \mathbb P[\mathrm{attack } \, \emph{\textsf{c-}}\mathsf {P}_{\eta_{V},\eta_P}] \leq \mathbb{P}[\mathrm{attack} \, \mathsf{P}_{\eta_P}] + \frac{1}{k}.
\end{align}
Furthermore whenever $k \geq 1$ it implies that $(1- (8 \sqrt{\varepsilon} + \sqrt{\Delta_\varepsilon} g(\eta_P))) \geq 0$. Plugging this in Theorem \ref{thm: main theorem upper bounding commit attack} we get the following corollary for the attacking probability of a $\textit{single round}$ of the protocol:

\begin{corollary} \label{cor:final_corollary_upperbound}
    Consider a quantum position verification protocol $\mathsf{P}$, with the properties described as in Theorem~\ref{thm: main theorem upper bounding commit attack} and security under sequential repetition. Suppose we play its version with commitment \emph{\textsf{c-}}$\mathsf{P}$ until we have $640/\tilde{c}^3 = O(k^6)$ rounds in which both parties commit, picking $\tilde{c}$ as described above. If attackers use a non-adaptive strategy, then either the attackers are detected with probability bigger than $1-10^{-9}$ by means of a different commitment, or we have the following bound on the probability of attacking a single round \emph{\textsf{c-}}$\mathsf{P}$ depending only on $k$:
    \begin{align}\label{eq: theorem committing2}
        \mathbb P[\mathrm{attack } \, \emph{\textsf{c-}}\mathsf {P}_{\eta_{V},\eta_P}] 
        &\leq \mathbb P[\mathrm{attack } \, \mathsf P_{\eta_{P}}] + \frac{1}{k}.
    \end{align}
\end{corollary}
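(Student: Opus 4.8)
The plan is to instantiate Theorem~\ref{thm: main theorem upper bounding commit attack} with concrete, verifier-controlled values of $\varepsilon$ and $\Tilde c$ read off from the committed rounds, and then optimize the single free parameter. First I would fix $\alpha := \tfrac{1}{16k^2}$ and set $\varepsilon = \alpha$; the condition $k\geq 2$ is exactly what makes $\alpha\leq 1/64$, so the hypothesis $\varepsilon\leq 1/64$ of the theorem is satisfied. The verifiers run the sequential repetition of \textsf{c-}$\mathsf{P}$ (permissible by the assumed security under sequential repetition), discarding the $(c_A,c_B)=(0,0)$ rounds, until $320k^3 = k\cdot\tfrac{20}{\alpha}$ rounds with both parties committing have been collected; since the honest prover commits in each round with probability $p_{\mathrm{commit}}$, this requires an expected $320k^3/p_{\mathrm{commit}}$ rounds. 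They abort the moment they see $(c_A,c_B)\in\{(0,1),(1,0)\}$, which the honest prover never produces.

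The first real step is a dichotomy on the single-round probability of a differing commit. Writing $a,b,c,d$ for the probabilities that a committed-stage round has $(c_A,c_B)=(0,0),(0,1),(1,0),(1,1)$, the elementary inequality $\tfrac{c}{c+d}\leq\tfrac{b+c}{1-a}$ (valid since $a+b+c+d=1$) gives, for each round $i$, $\mathbb{P}[C^i_{\mathrm{diff}}\mid C^i_{\neq 0}]\geq\max\{\mathbb{P}[c_B=0\mid c_A=1],\,\mathbb{P}[c_A=0\mid c_B=1]\}$. Hence, if the input-averaged maximum of these two conditional probabilities exceeds $\alpha/k$, then in each of the $320k^3=\tfrac{20}{\alpha/k}$ committed rounds the chance of revealing a different commit, conditioned on committing at all, is $>\alpha/k$; since the attack is \emph{non-adaptive}, these events are independent across rounds, so the probability of never revealing one is at most $(1-\alpha/k)^{20/(\alpha/k)}\leq e^{-20}$, which yields the claimed detection probability $1-10^{-9}$. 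I may therefore assume henceforth that $\max\{\mathbb{P}[c_A=0\mid c_B=1],\,\mathbb{P}[c_B=0\mid c_A=1]\}\leq\alpha/k$.

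Next I would convert this averaged bound into a size bound on $\Sigma^c_\alpha$. Every $(x,y)\in\Sigma^c_\alpha$ violates at least one of the two inequalities defining $\Sigma_\alpha$, so at least half of $\Sigma^c_\alpha$ lies, say, in $\{(x,y):\mathbb{P}[c_A=0\mid c_B=1,x,y]\geq\alpha\}$ (the other case being symmetric), whence
\begin{align*}
\frac{\alpha}{k}\;\geq\;\frac{1}{2^{2n}}\sum_{x,y}\mathbb{P}[c_A=0\mid c_B=1,x_A,y_B]\;\geq\;\frac{1}{2^{2n}}\cdot\frac{\abs{\Sigma^c_\alpha}}{2}\cdot\alpha,
\end{align*}
so $\abs{\Sigma^c_\alpha}\leq\tfrac{2}{k}2^{2n}$ and we may take $\Tilde c=2/k$. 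Plugging $\varepsilon=\alpha=\tfrac{1}{16k^2}$ and $\Tilde c=2/k$ into Theorem~\ref{thm: main theorem upper bounding commit attack} gives $\mathbb{P}[\mathrm{attack}\ \textsf{c-}\mathsf{P}_{\eta_V,\eta_P}]\leq\mathbb{P}[\mathrm{attack}\ \mathsf{P}_{\eta_P}]+(1-\tfrac{4}{k})\,8\sqrt{\alpha}+\tfrac{4}{k}$; since $8\sqrt{\alpha}=\tfrac{2}{k}$, the additive overhead is $(1-\tfrac{4}{k})\tfrac{2}{k}+\tfrac{4}{k}=\tfrac{6}{k}-\tfrac{8}{k^2}\leq\tfrac{6}{k}$, which is the clean bound.

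The main obstacle is the non-adaptivity hypothesis: it is precisely what allows multiplying the per-round ``no differing commit'' probabilities to obtain the $e^{-20}$ estimate, and what allows treating $\Sigma^c_\varepsilon$ as a fixed property of the strategy rather than something revealed round by round — the adaptive case genuinely needs the separate, more delicate argument developed in the next subsection. The remaining points are routine: verifying the elementary inequality $\tfrac{c}{c+d}\leq\tfrac{b+c}{1-a}$, checking that the sequential stopping rule (``play until $320k^3$ committed rounds'') only affects the honest prover's expected round count and not the security bound, and the bookkeeping $k\geq2\Leftrightarrow\alpha\leq1/64$ that makes Theorem~\ref{thm: main theorem upper bounding commit attack} applicable.
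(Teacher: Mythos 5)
Your proposal is correct and follows essentially the same route as the paper: the same dichotomy at threshold $\alpha/k$ over $320k^3=20/(\alpha/k)$ committed rounds (using non-adaptivity for the product bound and the elementary inequality $\tfrac{c}{c+d}\leq\tfrac{b+c}{1-a}$), the same Markov-type argument giving $\abs{\Sigma^c_\alpha}\leq\tfrac{2}{k}2^{2n}$, and the same choice $\alpha=\tfrac{1}{16k^2}$ with $8\sqrt{\alpha}=\tfrac{2}{k}$ yielding the $\tfrac{6}{k}$ bound. No substantive differences to report.
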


Thus, by running more rounds of the protocol we can get the probability of successfully attacking the protocol to be arbitrary close to the attacking probability in a setting with no photon loss between the verifiers and the prover. What is also important to emphasize is that there is no overhead in the procedure of getting bounds in Corollary \ref{cor:final_corollary_upperbound}, since the task of committing is separate from the rounds themselves. Each round the verifiers play gives a better bound for the probability of attack for all the previous rounds played.
 
\subsubsection{Adaptive strategies:}\label{subsect:adaptative}

The above proof assumed that attackers use the same strategy in each round. But in general they could use adaptive strategies, adjusting it each round to how they responded before. We will provide a bound for this most general scenario now. Firstly note that the statement of Theorem \ref{thm: main theorem upper bounding commit attack} can also be made for the adaptive setting. In an adaptive strategy, the measurement that determines whether the attackers will commit or not given that the other party committed can now depend on the information of the previous rounds. This may change the underlying probability of events. However the proof already considers arbitrary distributions of commitments, thus we replace $\varepsilon$ by its round-dependent version $\varepsilon_i$. The attackers may replace the quantum state by some state that depends on the information of the previous rounds, but by the state-independent property this should not change the probability of successfully attacking the protocol. Therefore we get the following corollary on the probability of attacking a specific round $i$:

\begin{corollary}\label{cor:main_adaptive}
    Consider a quantum position verification protocol $\mathsf{P}$, with the properties described as in Theorem~\ref{thm: main theorem upper bounding commit attack} and security under sequential repetition. Suppose that for its version with commitment, \emph{\textsf{c-}}$\mathsf{P}$, for a given round $i$ we have $\abs{\Sigma^c_{\varepsilon_i}}\leq \Tilde{c_i} 2^{2n}$ for some $\varepsilon_i$ and $\tilde c_i$ sufficiently small such that $(1- (8 \sqrt{\varepsilon} + \sqrt{\Delta_\varepsilon} g(\eta_P))) \geq 0$. If $\mathsf P$ is state-independent (cf.~Definition~\ref{def:state_indep}) then, if the attackers play, the following bound on the probability of attackers answering correctly on the $i$-th round of \emph{\textsf{c-}}$\mathsf{P}$ holds:
    \begin{equation}\label{eq: corollary committing}
        \mathbb P[\mathrm{attack } \, \emph{\textsf{c-}}\mathsf {P}_{\eta_{V},\eta_P}] \leq \frac{1}{\eta_P} \left( (1-2\tilde{c}_i) \left(8 \sqrt{\varepsilon_i} + g(\eta_P) \sqrt{(1-2\tilde{c}_i)8\sqrt{\varepsilon_i} + 4 \tilde{c}_i} \right)  + 2 \tilde{c}_i \right) + \mathbb{P}[\mathrm{attack} \, \mathsf{P}_{\eta_P}]
    \end{equation}
\end{corollary}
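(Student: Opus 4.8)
The plan is to re-run the proof of Theorem~\ref{thm: main theorem upper bounding commit attack} verbatim, round by round, after conditioning on the transcript of the earlier rounds. Nothing in that proof is global: it uses only (i) the single-round decomposition of the commit instruments into a measurement followed by a channel (Corollary~\ref{corollary: instrument and stinespring}), (ii) the gentle measurement lemma combined with the constraint $c_A=c_B$ to obtain pairwise closeness of post-commit states (Lemmas~\ref{lemma: paths between strings} and~\ref{lemma:existence_reference_state_close_to_other_states}), and (iii) the state-independence of $\mathsf P$ to bound the contribution of the fixed reference post-commit state. Each of these is a statement about a single round and about an \emph{arbitrary} pre-shared state, and none of them assumes anything about the commitment statistics, so all three survive once the round-$i$ objects are read as objects conditioned on the history.

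Concretely, fix an adaptive attack and condition on an arbitrary classical history $h$ of rounds $1,\dots,i-1$ (the verifiers' inputs, the broadcast commitments and answers, and whatever classical records the attackers keep). Given $h$, the attackers enter round $i$ holding some conditional state $\rho_i(h)$ and apply $h$-dependent commit instruments $\{\mathcal I^{A}_{c_A|x,h}\}$, $\{\mathcal I^{B}_{c_B|y,h}\}$, $h$-dependent channels $\mathcal E_1^{xy,h}$, and $h$-dependent POVMs $\Pi^{xy,h}$. The verifiers sample $x,y$ for round $i$ afresh and uniformly, independently of $h$, so the round-$i$ success probability conditioned on $h$ has exactly the form analysed in the theorem with $\rho\mapsto\rho_i(h)$ and every operation carrying the label $h$. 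Interpreting $\Sigma^c_{\varepsilon_i}$ through the conditional probabilities $\mathbb P[c_B=0\mid c_A=1,x_A,y_B,h]$ and $\mathbb P[c_A=0\mid c_B=1,x_A,y_B,h]$, and using the hypothesis $\abs{\Sigma^c_{\varepsilon_i}}\leq\Tilde{c_i}2^{2n}$, Lemma~\ref{lemma:existence_reference_state_close_to_other_states} supplies a reference pair $(x_*,y_*)\in\Sigma_{\varepsilon_i}$ with $\norm{\rho_i(h)^{x_*y_*}-\rho_i(h)^{xy}}_1\leq 8\sqrt{\varepsilon_i}$ for at least $(1-2\Tilde{c_i})2^{2n}$ pairs $(x,y)$; contractivity (data-processing) of the trace norm under the CPTP maps $\mathcal E_1^{xy,h}$ then preserves this closeness after the channel, and the identical chain of inequalities from the proof of Theorem~\ref{thm: main theorem upper bounding commit attack} yields
\begin{equation*}
\mathbb P[\mathrm{attack}\,\textsf{c-}\mathsf P_{\eta_V,\eta_P}\text{ in round }i\mid h]\;\leq\;\mathbb P[\mathrm{attack}\,\mathsf P_{\eta_P}]+(1-2\Tilde{c_i})8\sqrt{\varepsilon_i}+2\Tilde{c_i}.
\end{equation*}

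The single point needing a word of care is that the leftover term $\frac{1}{2^{2n}}\sum_{(x,y)\in\Lambda_{\varepsilon_i}}\tr{\Pi^{xy,h}\mathcal E_1^{xy,h}(\rho_i(h)^{x_*y_*})}$ is still at most $\mathbb P[\mathrm{attack}\,\mathsf P_{\eta_P}]$: since the commit instruments act locally one has $\mathcal E_1^{xy,h}=\mathcal E_1^{A,x,h}\otimes\mathcal E_1^{B,y,h}$, so ``prepare the fixed state $\rho_i(h)^{x_*y_*}$, apply local $(x,y)$-dependent channels, measure $\Pi^{xy,h}$'' is a legitimate attack on the original protocol $\mathsf P_{\eta_P}$ with an input independent of $(x,y)$, and Definition~\ref{def:state_indep} bounds it. Since the displayed bound holds for \emph{every} history $h$ with the same right-hand side, averaging over $h$ gives the unconditional statement of Corollary~\ref{cor:main_adaptive}. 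I expect the only genuine obstacle to be the bookkeeping of the conditioning: one must verify that (a) the round-$i$ inputs remain uniform and independent of $h$, which holds because the verifiers draw them anew each round, and (b) state-independence is invoked for the conditional state $\rho_i(h)^{x_*y_*}$ rather than for a pre-agreed state, which is exactly what Definition~\ref{def:state_indep} permits since it quantifies over all adversarial inputs. No inequality beyond those already used in the proof of Theorem~\ref{thm: main theorem upper bounding commit attack} is required.
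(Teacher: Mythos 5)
Your proposal is correct and follows essentially the same route as the paper, which justifies this corollary only by a short remark that the proof of Theorem~\ref{thm: main theorem upper bounding commit attack} is a single-round argument valid for arbitrary commitment distributions and arbitrary (history-dependent) input states, the latter being absorbed by state-independence. Your version merely makes the conditioning on the classical history explicit and checks the product structure of the conditioned instruments, which is a more careful write-up of the same idea.
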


The problem is now to estimate the value of $\varepsilon_i$, which we cannot estimate for every $i$ since it can change adaptively from round to round. We will show that if we run sufficiently many rounds, and never see different commits by the attackers, that then at least a large fraction of all the $\varepsilon_i$ must have been sufficiently low. 

We can make a similar argument as in the non-adaptive case, carefully including that attackers can now condition on the past in each round. We will use the general property that
\begin{align}\label{equ:pcond_rule}
    \mathbb{P} \left[ A_1 , \dots , A_n \right] = \mathbb{P} \left[ A_1 \right] \mathbb{P} \left[ A_2 \, | \, A_1 \right] \cdot \cdot \cdot \mathbb{P} \left[ A_n \, | \, A_1, \dots, A_{n-1} \right],
\end{align}
for any events $A_1, \dots, A_n$. Consider running $r$ rounds with commitments $(c_A, c_B) \neq (0,0)$. Let $i, j \in \{ 1, \dots, r \}$. Then we can bound the probability of being detected due to differing commits as follows,
\begin{align}
    \mathbb{P}[\text{detect attackers\,|\,commits} \neq (0,0)] &= 1 - \mathbb{P}[ \forall i \,\, C^i_{\mathrm{eq}} \, | \, \forall i \,\, C^i_{\neq 0}] \\
    &= 1 - \mathbb{P}[ \forall i \,\, C^i_{(1,1)} \, | \, \forall i \,\, C^i_{\neq 0}] \label{equ:pdet_adaptive}.
\end{align}
Then eq.~\eqref{equ:pdet_adaptive} can be written as
\begin{align}
    \mathbb{P}[\text{detect attackers\,|\,commits} \neq (0,0)] = 1 - \mathbb{P}[ C^1_{(1,1)}, \dots, C^{r}_{(1,1)} \, | \, C^1_{\neq 0}, \dots, C^{r}_{\neq 0}]
\end{align}
After using eq.~\eqref{equ:pcond_rule} and noting that $C^i_{(1,1)} \cap C^i_{\neq 0} = C^i_{(1,1)}$ for any $i$, this can be rewritten as
\begin{align}
    \mathbb{P}[\text{detect attackers\,|\,commits} \neq (0,0)] &= 1 - \prod_{i=1}^{r} \mathbb{P}\left[ C^i_{(1,1)} \, \Big| \, C^1_{(1,1)}, \dots, C^{i-1}_{(1,1)}, C^i_{\neq 0}, \dots, C^{r}_{\neq 0} \right] \\ 
    &= 1 - \prod_{i=1}^{r} \left( 1 - \mathbb{P}\left[ C^i_\mathrm{diff} \, \Big| \, C^1_{(1,1)}, \dots, C^{i-1}_{(1,1)}, C^i_{\neq 0}, \dots, C^{r}_{\neq 0} \right] \right).
\end{align}
We can then consider the analogous equations to eq.~\eqref{equ:probs_proof_nonadaptive}, \eqref{equ:probs_proof_nonadaptive2}, but with all the extra events for rounds $1, \dots, i-1, i+1, \dots, r$ in the conditioning. Again, labeling these probabilities analogously with $a_i,b_i,c_i,d_i$ (cf. eq.~\eqref{equ:probs_proof_nonadaptive}, \eqref{equ:probs_proof_nonadaptive2}) we obtain the inequality $\frac{c_i}{c_i+d_i} \leq \frac{c_i+b_i}{p_i-a_i}$, where now
\begin{align}
    p_i = \mathbb{P}\left[ C^1_{(1,1)}, \dots, C^{i-1}_{(1,1)}, C^i_{\mathrm{any}}, C^{i+1}_{\neq 0}, \dots, C^{r}_{\neq 0} \right],
\end{align}
with $C^i_{\mathrm{any}} = \{ (c_A^i, c_B^i) = (0,0) \text{ or } (0,1) \text{ or } (1,0) \text{ or } (1,1)\}$. The inequality can be verified under the condition that $a_i+b_i+c_i+d_i=p_i$. This shows
\begin{align}
    \mathbb{P} [ C^i_{\mathrm{diff}} \, | \, C^1_{(1,1)}, \dots, C^{i-1}_{(1,1)}, C^i_{\neq 0}, \dots, C^{r}_{\neq 0} ] &\geq \mathbb{P} \left[ C^i_{\mathrm{diff}} \, \Big| \, C^1_{(1,1)}, \dots, C^{i-1}_{(1,1)}, \{ c_A^i = 1 \}, C^{i+1}_{\neq 0}, \dots, C^{r}_{\neq 0} \right] \\
    &=\mathbb{P} \left[ c_B^i = 0 \, \Big| \, C^1_{(1,1)}, \dots, C^{i-1}_{(1,1)}, \{ c_A^i = 1 \}, C^{i+1}_{\neq 0}, \dots, C^{r}_{\neq 0} \right].
\end{align}
The same inequality holds for the case with $A$ and $B$ swapped, as before. Thus
\begin{alignat}{3}\label{equ:max_adaptive}
    \mathbb{P}&[\text{detect attackers}\,|\,&&\text{commits} \neq (0,0)] \geq\\ 
    &1 - \prod_{i=1}^{r} \Big( 1 - \max \big\{ &&\mathbb{P} [ c_B^i = 0 \, | \, C^1_{(1,1)}, \dots, C^{i-1}_{(1,1)}, \{ c_A^i = 1 \}, C^{i+1}_{\neq 0}, \dots, C^{r}_{\neq 0} ], \notag \\
    &{}&&\mathbb{P} [ c_A^i = 0 \, | \, C^1_{(1,1)}, \dots, C^{i-1}_{(1,1)}, \{ c_B^i = 1 \}, C^{i+1}_{\neq 0}, \dots, C^{r}_{\neq 0} ] \big\} \Big) \notag.
\end{alignat}
Define $\varepsilon_i'$ to be the maximum in eq.~\eqref{equ:max_adaptive}. This quantity can be interpreted as follows. In the $i$-th round adaptive attackers have the information that in all the previous rounds they committed and that they committed equally, otherwise they would have already been caught. They also know that they have to keep playing until they have reached the desired number of non-$(0,0)$ commits.

Now there are two cases, either the probability in eq.~\eqref{equ:max_adaptive} is $\geq 1- \delta$ with some security parameter $\delta > 0$, in which case the verifiers catch an attack with high probability by means of a different commit $c_A \neq c_B$ showing up, or it is $\leq 1-\delta$. In the latter case, we still need to bound the attack success probability. Note that then $$1 - \prod_{i=1}^r (1-\varepsilon'_i) \leq 1-\delta.$$ We can rewrite the condition as $$ 0 < \delta \leq \prod_{i=1}^r (1-\varepsilon'_i) \leq e^{-\sum_{i=1}^r \varepsilon'_i}.$$ Equivalently, $\sum_{i=1}^r \varepsilon'_i \leq \ln \left( 1/\delta \right)$. Next, we will need the following lemma, saying that under such a constraint there must be enough ``good'' rounds with $\varepsilon'_i$ not too large.
\begin{lemma}\label{lem:eps_upper_adaptive}
    Let $\sum_{j=1}^r \varepsilon'_j \leq \alpha$. Then for any $0 < q < 1$ such that $qr \in \mathbb{N}$, there exists a subset $\mathcal{R} \subset \{ 1, \dots, r \}$ of size $|\mathcal{R}|=qr$ such that for all $\varepsilon'_j$ with $j\in\mathcal{R}$ we have $\varepsilon'_j \leq \frac{\alpha}{(1-q)r}$.
\end{lemma}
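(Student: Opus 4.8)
The plan is a short counting (averaging / pigeonhole) argument on the total ``mass'' $\alpha$. First I would isolate the rounds where the desired bound fails, i.e.\ define the \emph{bad} index set $B := \{ j \in \{1,\dots,r\} : \varepsilon_j > \tfrac{\alpha}{(1-q)r} \}$, and the \emph{good} set $G := \{1,\dots,r\} \setminus B = \{ j : \varepsilon_j \le \tfrac{\alpha}{(1-q)r} \}$. The whole lemma then reduces to showing $|G| \ge qr$, after which one simply takes $\mathcal{R}$ to be any $qr$-element subset of $G$; every $j \in \mathcal{R}$ satisfies $\varepsilon_j \le \tfrac{\alpha}{(1-q)r}$ by construction. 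Note $0 < q < 1$ guarantees $(1-q)r > 0$, so the quantity $\tfrac{\alpha}{(1-q)r}$ is well defined (and the degenerate case $\alpha = 0$ forces all $\varepsilon_j = 0$, making the statement trivial).

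To control $|G|$ I would bound $|B|$ by contradiction. Suppose $|B| \ge (1-q)r$. Since the $\varepsilon_j$ are probabilities, hence non-negative, and each term indexed by $B$ strictly exceeds $\tfrac{\alpha}{(1-q)r}$, we get $\sum_{j=1}^r \varepsilon_j \ge \sum_{j\in B} \varepsilon_j > |B|\cdot \tfrac{\alpha}{(1-q)r} \ge (1-q)r \cdot \tfrac{\alpha}{(1-q)r} = \alpha$, contradicting the hypothesis $\sum_{j=1}^r \varepsilon_j \le \alpha$. Hence $|B| < (1-q)r$. Because $r$ and $qr$ are integers, $(1-q)r = r - qr$ is an integer, so $|B| \le (1-q)r - 1$, and therefore $|G| = r - |B| \ge r - (1-q)r + 1 = qr + 1 \ge qr$, which is exactly what is needed.

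There is no genuinely hard step here; the argument is elementary. The only points requiring care are bookkeeping: (i) keeping the inequality in the definition of $B$ strict so that the chain of inequalities in the contradiction is strict and actually beats $\alpha$; (ii) invoking the integrality of $r$ and $qr$ to pass from $|B| < (1-q)r$ to $|G| \ge qr$; and (iii) noting that the edge cases ($q\to 1$, which is excluded, and $\alpha = 0$) are handled trivially. So the ``main obstacle'' is really just getting the inequalities and the integer rounding exactly right rather than any conceptual difficulty.
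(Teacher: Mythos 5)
Your proof is correct and follows essentially the same route as the paper: a pigeonhole/averaging contradiction showing that if too many indices had $\varepsilon_j > \frac{\alpha}{(1-q)r}$, the sum would exceed $\alpha$. The extra bookkeeping you supply (strictness of the inequality, integrality of $qr$, the $\alpha=0$ edge case) is a slightly more careful write-up of exactly the argument the paper gives.
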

\begin{proof}
    Assume you cannot find $qr$ elements $\varepsilon'_j$ with $\varepsilon'_j \leq \frac{\alpha}{(1-q)r}$, given $\sum_{j=1}^r \varepsilon'_j \leq \alpha$.
    Then there would be at least $(1-q)r$ elements fulfilling $\varepsilon'_j > \frac{\alpha}{(1-q)r}$. But then $\sum_{j=1}^r \varepsilon'_j > \alpha$, a contradiction. Thus, we must be able to find $qr$ such elements and let $\mathcal{R}$ be the set of those.
\end{proof}
That is, for a fraction $q$ of the $r$ rounds we have a round-independent upper bound on the $\varepsilon'_i$ of those rounds, namely $\varepsilon'_i \leq \frac{\ln (1/\delta)}{(1-q)r}$ for $i \in \mathcal{R}$.  We are free to pick $(\delta, q, k ,r)$, and we can simplify this upper bound on $\varepsilon'_i$. We pick $q = 1 - \frac{1}{k}$, such that $\varepsilon'_i \leq \frac{k \ln (1/\delta)}{r} = \frac{\alpha}{\tilde{r}}$, where we picked $\tilde{r} = \frac{r}{\ln(1/\delta) k \alpha}$. Then, in our definition of $\Sigma_{\varepsilon_i}^c$, we put $\varepsilon_i - \alpha$. Furthermore, we can pick $\alpha$ such that $8\sqrt{\alpha} = 2 \tilde{c_i}$. Moreover, a similar argument as in the proof for Corollary~\ref{cor:final_corollary_upperbound} can be run to argue that $\Tilde{c}_i \leq 2/\tilde r$. Picking $\tilde{c_i} \leq \frac{1}{k^2(4+3g(\eta_P))^2}=:\tilde c$, i.e., $\tilde r \geq 2 k^2(4+3g(\eta_P))^2$, we get the same upper bound as in Corollary~\ref{cor:final_corollary_upperbound}, and we only have to play a factor $\mathcal O(k)$ more rounds in order to get the same upper bound on all the rounds in $\mathcal{R}$. In particular, for $k \geq 1$, we have that $(1- (8 \sqrt{\varepsilon} + \sqrt{\Delta_\varepsilon} g(\eta_P))) \geq 0$, which allows us to apply Corollary~\ref{cor:main_adaptive}. Thus, for a fraction $1-\frac{1}{k}$ of the $r = \frac{32 \ln(1/\delta)k}{\tilde{c}^3}$ rounds we have by Corollary~\ref{cor:main_adaptive} that:

\begin{align} \label{equ:adaptive_bound}
    \mathbb{P}[&\mathrm{attack }\hspace{1mm} \textsf{c-}\mathsf{P}_{\eta_{V},\eta_{P}} \text{ in round } i \in \mathcal{R}] \\
    &\leq \frac{1}{\eta_P} \left( (1-2\tilde{c}_i) \left(8 \sqrt{\varepsilon_i} + g(\eta_P) \sqrt{(1-2\tilde{c}_i)8\sqrt{\varepsilon_i} + 4 \tilde{c}_i} \right)  + 2 \tilde{c}_i \right) + \mathbb{P}[\mathrm{attack} \, \mathsf{P}_{\eta_P}] \nonumber \\
    &\leq \frac{1}{\eta_P} \left( (1-2\tilde{c}_i) \left(8 \sqrt{\frac{1}{\tilde{r}}} + g(\eta_P) \sqrt{(1-2\tilde{c}_i)8\sqrt{\frac{1}{\tilde{r}}} + 4 \tilde{c}_i} \right)  + 2 \tilde{c}_i \right) + \mathbb{P}[\mathrm{attack} \, \mathsf{P}_{\eta_P}] \nonumber \\
    &\leq \mathbb{P}[\mathrm{attack} \, \mathsf{P}_{\eta_P}] + \frac{1}{k},
\end{align}
by the same calculations as in Corollary~\ref{cor:final_corollary_upperbound}. We are free to pick $\delta$, and if we pick $\delta = e^{-20} \leq 3 \cdot 10^{-9}$ we get a strong bound on the probability to attack different attackers. We summarize our findings in the following corollary:

\begin{corollary} \label{cor:final_corollary_upperbound_adaptive} (Second inequality of Theorem~1.)
    Consider a quantum position verification protocol $\mathsf{P}$, with the properties described as in Theorem~\ref{thm: main theorem upper bounding commit attack} and security under sequential repetition. Suppose we play its version with commitment \emph{\textsf{c-}}$\mathsf{P}$ until we have $640 k/\tilde{c}^3 = O(k^7)$ rounds in which both attackers commit. Here, $\tilde c=\frac{1}{k^2(4+3g(\eta_P))^2}$ This takes an expected number of rounds $O(k^7)/p_{\mathrm{commit}}$. We call this protocol \emph{\textsf{c}}-$\mathsf{P}^{\mathrm{seq}}$. Then either the attackers are detected with probability bigger than $1-3\cdot10^{-9}$ by means of a different commitment, or there is a set $\mathcal R$ of size $1-1/k$ times the number of rounds such that
        \begin{align}\label{eq: theorem committing adaptive}
        \mathbb P[\mathrm{attack } \, \emph{\textsf{c-}}\mathsf{P}^{\mathrm{seq}}_{\eta_{V},\eta_P}\emph{\text{ in round }}i ] \leq \mathbb P[\mathrm{attack } \, \mathsf P_{\eta_{P}}] + \frac{1}{k}
    \end{align}
    for all $i \in \mathcal R$.
\end{corollary}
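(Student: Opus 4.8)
The plan is to collect the ingredients assembled in Section~\ref{subsect:adaptative} and to instantiate the free parameters. Fix $k\ge 2$ and set $\delta = e^{-20}$, $r = 320k^3$, $q = 1-1/k$; the sequentially repeated protocol \textsf{c-}$\mathsf{P}^{\mathrm{seq}}$ is run until $kr = 320k^4$ rounds with commitments $(c_A,c_B)\neq(0,0)$ have been collected, which by a geometric-waiting-time argument takes an expected number $320k^4/p_{\mathrm{commit}}$ of rounds. For every such round $i$, Corollary~\ref{cor:main_adaptive} already supplies the per-round estimate $\mathbb P[\mathrm{attack}\ \textsf{c-}\mathsf{P}_{\eta_V,\eta_P}\text{ in round }i]\le \mathbb P[\mathrm{attack}\ \mathsf{P}_{\eta_P}] + (1-2\tilde c_i)\,8\sqrt{\varepsilon_i} + 2\tilde c_i$, where $\varepsilon_i$ is the history-conditioned commit-disagreement probability of~\eqref{equ:max_adaptive} and $\tilde c_i$ is fixed by $\abs{\Sigma^c_{\varepsilon_i}}\le \tilde c_i 2^{2n}$. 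Hence it suffices to exhibit a subset $\mathcal R$ of the committed rounds of relative size $1-1/k$ on which simultaneously $8\sqrt{\varepsilon_i}\le 2/k$ and $\tilde c_i\le 2/k$, since then the right-hand side is at most $\mathbb P[\mathrm{attack}\ \mathsf{P}_{\eta_P}] + (1-4/k)(2/k) + 4/k \le \mathbb P[\mathrm{attack}\ \mathsf{P}_{\eta_P}] + 6/k$.

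Controlling the $\varepsilon_i$ is the dichotomy already set up before the statement. Starting from the detection identity~\eqref{equ:pdet_adaptive}, I would apply the chain rule~\eqref{equ:pcond_rule} to factorize $\mathbb P[\forall i\ C^i_{(1,1)}\mid \forall i\ C^i_{\neq 0}]$ into one factor per round, and bound the $i$-th factor by $1-\varepsilon_i$ using the elementary inequality $\tfrac{c_i}{c_i+d_i}\le\tfrac{c_i+b_i}{p_i-a_i}$ established in the text. Thus either the verifiers see a differing commitment with probability at least $1-\delta = 1-e^{-20} > 1-3\cdot10^{-9}$ (the first alternative of the corollary), or $\prod_i(1-\varepsilon_i)\ge\delta$, i.e.\ $\sum_i\varepsilon_i\le\ln(1/\delta)=20$. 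Feeding this budget into Lemma~\ref{lem:eps_upper_adaptive} with $\alpha = 20$ and $q = 1-1/k$ yields a set $\mathcal R$ of relative size $1-1/k$ among the rounds run on which $\varepsilon_i\le \tfrac{20k}{r} = \tfrac{1}{16k^2}\le \tfrac{1}{64}$, so $8\sqrt{\varepsilon_i}\le 2/k$ and the hypothesis $\varepsilon_i\le 1/64$ of Corollary~\ref{cor:main_adaptive} holds on $\mathcal R$.

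The remaining half — bounding $\tilde c_i$ — transplants the averaging argument of Corollary~\ref{cor:final_corollary_upperbound} to the adaptive setting: running a factor $k$ more committed rounds forces, again with high probability via never seeing a differing commit, the two history-conditioned averages $\mathbb P[c_A=0\mid c_B=1]$ and $\mathbb P[c_B=0\mid c_A=1]$ down by a factor $k$, and the Markov-type estimate on $\Sigma^c_{\varepsilon_i}$ from the non-adaptive proof then gives $\tilde c_i\le 2/k$; this is exactly why the total number of committed rounds becomes $kr = 320k^4$ rather than $320k^3$. Plugging $\varepsilon_i\le 1/(16k^2)$ and $\tilde c_i\le 2/k$ into the bound of Corollary~\ref{cor:main_adaptive}, and using $U^{xy}=U^x\otimes U^y$ together with the state-independence of $\mathsf P$ (so the fixed post-commit reference state may be handed to $\mathsf{P}_{\eta_P}$ as an adversarial input), closes the argument. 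I expect the main obstacle to be the adaptive bookkeeping: every conditional probability in the detection bound and in the $\tilde c_i$ estimate is conditioned on the full past $C^1_{(1,1)},\dots,C^{i-1}_{(1,1)}$ of equal commitments together with the future constraint $C^{i+1}_{\neq 0},\dots,C^r_{\neq 0}$, and one must verify that both the chain-rule factorization and the inequality $\tfrac{c_i}{c_i+d_i}\le\tfrac{c_i+b_i}{p_i-a_i}$ survive verbatim under that conditioning — which they do, precisely because Corollary~\ref{cor:main_adaptive} was already proved for arbitrary commitment distributions.
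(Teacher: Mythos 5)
Your proposal is correct and follows the paper's own derivation essentially verbatim: the same dichotomy on the detection probability (either $\geq 1-e^{-20}>1-3\cdot 10^{-9}$ or $\sum_i\varepsilon_i\leq 20$), the same application of Lemma~\ref{lem:eps_upper_adaptive} with $\alpha=20$, $q=1-1/k$, $r=320k^3$ to get $\varepsilon_i\leq 1/(16k^2)$ on $\mathcal R$, the same transplanted averaging argument giving $\tilde c_i\leq 2/k$ at the cost of the factor $k$ in the round count, and the same final arithmetic $(1-4/k)(2/k)+4/k\leq 6/k$ via Corollary~\ref{cor:main_adaptive}. No gaps relative to the paper's argument.
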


\section{Sequential repetition}\label{sec:SeqRep}

Throughout this section, we  consider a quantum position verification protocol $\mathsf P$ that fulfills the conditions of Theorem \ref{thm: main theorem upper bounding commit attack}. We present a binary test for the verifiers to either \emph{accept} or \emph{reject} the location. An honest prover will pass the test with high probability, showing \emph{completeness}, whereas attackers will fail it with exponentially high probability in $r$, showing \emph{soundness}. By proving both properties, we show \emph{security}. 

We will prove security for sequential repetition of \textsf{c-}$\mathsf{P}$, showing that after $r$ sequential repetitions, the probability that attackers break the protocol decays exponentially in $r$; in each repetition, we assume that constraints in the model we are analyzing apply in each round. We will analyze  the above studied security models: for $\varepsilon=\Tilde{c}=0$, non-adaptive strategies (corresponding to Section~\ref{subsect:non_adaptative}), and any adaptive strategies (corresponding to Section~\ref{subsect:adaptative}), which we will shortly denote by \textbf{S1}, \textbf{S2}, and \textbf{S3}, respectively.

For a random variable $X$, taking values on a finite set $\mathfrak{X}=\{\texttt{x}_1,...,\texttt{x}_d\}$, a probability distribution $p$ 
is specified by  $p_{\texttt{x}_i}=\Pr[X=\texttt{x}_i], \texttt{x}_i\in \mathfrak{X},$ and $p$ can be represented by a probability vector $\textbf{p}=(p_{\texttt{x}_1},...,p_{\texttt{x}_d})$. The set of all probability distributions $\textbf{p}$ over $\mathfrak X$ is $\Delta_{d-1}=\{\textbf{p}\in\mathbb R^d\mid \sum_{\texttt{x}_i\in \mathfrak{X}}p_{\texttt{x}_i}=1, p_i\geq0\}$, which is known as the probability simplex, and it is a ($d-1$)-dimensional manifold. In this section, we will make use of the Hoeffding's and Azuma's inequality; we state them below for completeness. 

\begin{lemma} 
\label{lemma:Hoeffding} \emph{(Hoeffding's inequality \cite{hoeffding1963})} Let $T_1,\ldots,T_r$ be independent bounded random variables with $T_i\in[x_a,x_b]$, for all $i\in\{1,\ldots,r\}$, with $-\infty< x_a\leq x_b<\infty$. Then, for all $\delta\geq 0$, the following holds:
\begin{equation}
    \Pr\left[\frac{1}{r}\sum_{i=1}^r(T_i-\mathbb E[T_i])\geq \delta\right]\leq e^{-\frac{2 r \delta^2 }{(x_b-x_a)^2}}.
\end{equation}
\end{lemma}

\begin{lemma}\label{Lemma:Azuma} \emph{(Azuma's inequality \cite{Azuma1967})}. Suppose $\{X_k\}_{k\geq 0}$ is a martingale or a super-martingale,  and $\abs{X_k-X_{k-1}}\leq \beta_k$ almost surely. Then, for all $N\in\mathbb N$ and all $\beta\in\mathbb R^+$,
\begin{equation}
    \Pr[X_N-X_0\geq \beta]\leq e^{-\frac{\beta^2}{2\sum_{k=1}^N\beta_k^2}}.
\end{equation}
\end{lemma}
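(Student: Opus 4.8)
The plan is to prove Azuma's inequality by the exponential moment (Chernoff) method applied directly to the martingale, reducing everything to a single Hoeffding-type bound on the conditional moment generating function of one bounded increment.

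First I would fix the filtration $(\mathcal{F}_k)_{k\ge 0}$ with respect to which $\{X_k\}$ is a (super-)martingale and write the increments $D_k := X_k - X_{k-1}$, so that $X_N - X_0 = \sum_{k=1}^N D_k$ with $|D_k|\le\beta_k$ almost surely and $\mathbb{E}[D_k\mid\mathcal{F}_{k-1}]\le 0$ (with equality in the martingale case). Fixing $t>0$ and applying Markov's inequality to the nonnegative random variable $e^{t(X_N-X_0)}$,
\begin{equation}
\Pr[X_N - X_0 \ge \beta] = \Pr\!\left[e^{t(X_N-X_0)} \ge e^{t\beta}\right] \le e^{-t\beta}\,\mathbb{E}\!\left[e^{t(X_N-X_0)}\right].
\end{equation}

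Next I would peel off the increments one at a time using the tower property: since $X_{N-1}-X_0$ is $\mathcal{F}_{N-1}$-measurable,
\begin{equation}
\mathbb{E}\!\left[e^{t(X_N-X_0)}\right] = \mathbb{E}\!\left[e^{t(X_{N-1}-X_0)}\,\mathbb{E}\!\left[e^{tD_N}\,\big|\,\mathcal{F}_{N-1}\right]\right].
\end{equation}
The key estimate is the conditional Hoeffding bound: since $x\mapsto e^{tx}$ is convex, $e^{tx}\le \tfrac{\beta_k-x}{2\beta_k}e^{-t\beta_k}+\tfrac{\beta_k+x}{2\beta_k}e^{t\beta_k}$ on $[-\beta_k,\beta_k]$; taking $\mathbb{E}[\,\cdot\mid\mathcal{F}_{k-1}]$ and using both $\mathbb{E}[D_k\mid\mathcal{F}_{k-1}]\le 0$ and $e^{t\beta_k}\ge e^{-t\beta_k}$ gives $\mathbb{E}[e^{tD_k}\mid\mathcal{F}_{k-1}]\le \cosh(t\beta_k)\le e^{t^2\beta_k^2/2}$. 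Plugging this in for $k=N$ and iterating down to $k=1$ yields $\mathbb{E}[e^{t(X_N-X_0)}]\le \exp\!\big(\tfrac{t^2}{2}\sum_{k=1}^N\beta_k^2\big)$, hence $\Pr[X_N-X_0\ge\beta]\le \exp\!\big(-t\beta+\tfrac{t^2}{2}\sum_{k=1}^N\beta_k^2\big)$. Finally I would optimize over the free parameter $t$, choosing $t=\beta/\sum_{k=1}^N\beta_k^2$, which gives exactly the claimed bound $e^{-\beta^2/(2\sum_{k=1}^N\beta_k^2)}$.

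The main obstacle — essentially the only nontrivial point — is the conditional Hoeffding step: establishing $\mathbb{E}[e^{tD_k}\mid\mathcal{F}_{k-1}]\le e^{t^2\beta_k^2/2}$ for an increment that is merely bounded and has \emph{nonpositive} conditional mean rather than mean zero. The convexity argument above handles this, and the inequality $e^{t\beta_k}-e^{-t\beta_k}\ge 0$ for $t>0$ is precisely what lets the super-martingale case go through at no extra cost. I would also dispatch the degenerate case $\sum_{k=1}^N\beta_k^2=0$ separately, where every $D_k=0$ almost surely and the statement is immediate.
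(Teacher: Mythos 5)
Your proof is correct: the paper states Azuma's inequality as a cited classical result (\cite{Azuma1967}) and gives no proof of its own, so there is nothing to compare against. Your argument is the standard Chernoff/exponential-moment proof — Markov's inequality, the tower property to peel off increments, the conditional Hoeffding bound $\mathbb{E}[e^{tD_k}\mid\mathcal{F}_{k-1}]\le\cosh(t\beta_k)\le e^{t^2\beta_k^2/2}$ via convexity (correctly noting that a nonpositive conditional mean suffices because $e^{t\beta_k}-e^{-t\beta_k}\ge 0$ for $t>0$), and optimization at $t=\beta/\sum_k\beta_k^2$ — and all steps check out, including your handling of the degenerate case.
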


Consider the realistic situation where $\eta_P<1$ and the verifiers are expected to receive a `photon loss' answer with probability $1-\eta_P$. Given a probability of error $p_{\mathrm{err}}$, an honest prover is expected to reproduce $\textbf{p}_{hp}=(p_\textsc{c},p_{\perp},p_{\textsc{i}})=(p_{x_1},p_{x_2},p_{x_3})$, depending on $\eta_P$ and $p_{\mathrm{err}}$, where $p_\textsc{c},p_{\perp},p_{\textsc{i}}$ denote the probability of being correct, answering `photon loss' and answering incorrectly, respectively. For example, if the error is independent of the loss, $\textbf{p}_{hp}=(\eta_P(1-p_{\mathrm{err}}),1-\eta_P,\eta_Pp_{\mathrm{err}})$.  Bounds on \textsf{c-}$\mathsf{P}_{\eta_V, \eta_{P}}$ characterize a (secure) subset $\mathcal{S}\subsetneq \Delta_2$ such the attackers do not have access to strategies reproducing probabilities in $\mathcal{S}$. Let $\mathcal{A}=\Delta_2\setminus \mathcal{S}$, which is the set that the attackers potentially have access to. In particular, it contains the set of all probabilities that the attackers have access to, which is convex, since given any two strategies, they are allowed to play their convex combination.  If the bounds on the probabilities are tight, $\mathcal{A}$ corresponds to the set of all probabilities that the attackers have access to. Security for \textsf{c-}$\mathsf{P}_{\eta_V, \eta_{P}}$ implies, in particular, that $(1,0,0)\notin \mathcal{A}$. Let $\gamma\subset\Delta_2$ be the curve that, together with the boundary of $\Delta_2$, describes $\mathcal{S}$ (cf. Figure~\ref{Fig_probability_simplex}) and assume $\gamma$ is differentiable (otherwise take an approximation of $\gamma$ contained in $\mathcal{S}$ that is differentiable). Consider the ruled surface  $F(p_\textsc{c},p_{\perp},p_{\textsc{i}})=0$ defined by the straight lines connecting every point in $\gamma$ with the origin $(0,0,0)$, see Figure~\ref{Fig_probability_simplex}. Then, we have that, with the corresponding choice of sign for $F$, 
\begin{equation}\label{eq:secure_region}
    \textbf{q}\cdot \nabla F\vert_{\textbf{q}} \leq 0 \hspace{2mm} \forall \textbf{q}\in \mathcal{A} \textrm{ and }  \textbf{p}\cdot \nabla F\vert_{\textbf{p}}> 0 \hspace{2mm} \forall \textbf{p}\in \mathcal{S}, 
\end{equation}
where $\nabla F=(\nabla F_{x_1},\nabla F_{x_2},\nabla F_{x_3})$ denotes the normalized gradient of $F$. 

Denote by $\textsc{ans}_i\in\{\textsc{c},\perp,\textsc{i}\}$ whether the answer they recorded in round $i$ was correct ($\textsc{c}$), `photon loss' ($\perp$),  or incorrect ($\textsc{i}$). Let
\begin{equation}
   T_i(\textbf{p}_i,\textsc{ans}_i):= \nabla F_{x_1}\vert_{\textbf{p}_i}\textbf{1}_{\textsc{c}}(\textsc{ans}_i)+\nabla F_{x_2}\vert_{\textbf{p}_i}\textbf{1}_{\perp}(\textsc{ans}_i)+\nabla F_{x_3}\vert_{\textbf{p}_i}\textbf{1}_{\textsc{I}}(\textsc{ans}_i)\text{ for all }i\in[r]. 
\end{equation}

For an honest prover ($hp$), the $T_i$'s are expected to be independent identically distributed, and thus, for every $i$, 
\begin{equation}
    \mathbb E[T_i^{hp}]=p_{\textsc{c}} \nabla F_x\vert_{\textbf{p}_{hp}}+p_{\perp} \nabla F_y\vert_{\textbf{p}_{hp}}+p_{\textsc{i}} \nabla F_z\vert_{\textbf{p}_{hp}}=\textbf{p}_{hp}\cdot \nabla F\vert_{\textbf{p}_{hp}}=: \mu>0,
\end{equation}
and defining  a \emph{total score}
\begin{equation}
    \Gamma_r^{hp}:=\sum_{i=1}^r T_i^{hp},
\end{equation}
we have that $\mathbb E [{\Gamma}_r^{hp}]=r\mu$.

Fix a parameter $\varepsilon_{\textrm{h}} > 0$, which determines the confidence level of the test that we introduce. That is, it will succeed with probability at least $1 - \varepsilon_{\textrm{h}}$ when interacting with an honest prover. 

\begin{definition}\label{def:test} Let $\varepsilon_{\textrm{h}}>0$. For the c-$\mathsf P$~protocol executed sequentially $r$ times, we define the acceptance test $\mathsf{T}^{r}_{\varepsilon_{\textrm{h}}}$, also referred to as the decision criterion, as follows:     the verifiers \emph{accept} the prover’s location if
\begin{equation}\label{eq:test}
    \Gamma_r \geq r (\mu-\delta),
\end{equation}
where $\delta=\sqrt{\frac{4\ln(1/\varepsilon_{\textrm{h}})}{r}}$. 
Otherwise, they \emph{reject}. 
\end{definition}

By Hoeffding's inequality~\cite{hoeffding1963}, see Lemma~\ref{lemma:Hoeffding}, an honest prover will be accepted except with probability at most $\varepsilon_{\textrm{h}}$, and therefore the test is \emph{complete}. We now show that $\mathsf{T}^{r}_{\varepsilon_{\textrm{h}}}$ is \emph{sound}, in particular, attackers will be rejected with exponentially high probability in $r$.

\begin{prop} \label{prop:seq_rep_loss} Let $\mathsf P$ be as in Theorem \ref{thm: main theorem upper bounding commit attack}, and secure against sequential repetition. Let $F(p_{c},p_{\perp},p_{{i}})=0$ be the ruled surface that separates the region of $\Delta_2$ where the attackers do not have access to, as defined above, for \emph{\textsf{c-}}$\mathsf{P}_{\eta_{V},\eta_{P}}$. Let $\eta_P$ and $p_{\mathrm{err}}$ and $\delta$ be such that $\textbf{p}_{hp}\cdot \nabla F\vert_{\textbf{p}_{hp}}=\mu>0$, where $\textbf{p}_{hp}$ is the probability vector expected from the honest party. Then, after $r$-sequential repetitions of the protocol, either the attackers are caught with different commitment with probability bigger than $1-3\cdot10^{-9}$, or the probability that the attackers are accepted in the test $\mathsf T^r_{\varepsilon_{\textrm{h}}}$ after $r$-sequential repetitions of \emph{\textsf{c-}}$\mathsf{P}$ is exponentially small:
    
 \begin{itemize}
        \item For security models \emph{\textbf{S1}} and  \emph{\textbf{S2}} (with $r$ as in Lemma~\ref{cor:final_corollary_upperbound}),    
   \begin{equation}\label{eq:sequential_rep_loss}
        \Pr\left[\Gamma_{r}^{\mathrm{att}}\geq r\mu(1-\delta)\right]\leq e^{-\frac{r}{2} \left(\mu(1-\delta)\right)^2},
    \end{equation} 
    \item for security model \emph{\textbf{S3}} (with $r$ as in Lemma~\ref{cor:final_corollary_upperbound_adaptive}),
     \begin{equation}\label{eq:sequential_rep_loss_ScM3}
        \Pr\left[\Gamma_{r}^{\mathrm{att}}\geq r\mu(1-\delta)\right]\leq e^{-\frac{r}{2} \left(\mu(1-\delta)-\frac{1}{k}\right)^2}.
    \end{equation}
    \end{itemize}
\end{prop}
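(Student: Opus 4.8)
The plan is to mirror the supermartingale argument of Proposition~\ref{prop:seq_rep_error}, now carried out in the three-outcome simplex $\Delta_2$ with the separating line replaced by the ruled surface $F$. The geometric input, already assembled in the discussion above, is that security of \textsf{c-}$\mathsf{P}_{\eta_V,\eta_P}$ forces every per-round attack distribution $\mathbf{q}\in\Delta_2$ to lie in the accessible region $\mathcal{A}=\Delta_2\setminus\mathcal{S}$, hence to satisfy $\mathbf{q}\cdot\nabla F|_{\mathbf{q}}\leq 0$ by eq.~\eqref{eq:secure_region}; for \textbf{S1} this is Theorem~\ref{thm: main theorem upper bounding commit attack}, for \textbf{S2} and \textbf{S3} it is Corollaries~\ref{cor:final_corollary_upperbound} and~\ref{cor:final_corollary_upperbound_adaptive}, whose additive $6/k$ overhead is absorbed into the definition of $\mathcal{S}$ (equivalently into $F$). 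The one point that needs care is that this per-round bound must survive \emph{adaptive} conditioning on the transcript of the previous rounds: this is exactly where the state-independence of $\mathsf{P}$ and its security under sequential repetition are used, since the attackers' conditional state entering round $i$ is some state that does not depend on $(x_i,y_i)$, so the committing-protocol reduction applies round by round. I expect this to be the only delicate step; everything else is a routine supermartingale/Azuma computation.

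First I would record the honest-prover side: the $\tilde T_i^{hp}$ are i.i.d.\ with $\mathbb{E}[\tilde T_i^{hp}]=\mathbf{p}_{hp}\cdot\nabla F|_{\mathbf{p}_{hp}}=\tilde\mu>0$, so $\mathbb{E}[\tilde\Gamma_r^{hp}]=r\tilde\mu$, and since $\|\nabla F\|=1$ we have $-1\leq\tilde T_i^{hp}\leq 1$; the Chernoff bound (Lemma~\ref{lemma:chernoff}) then yields $\Pr[\tilde\Gamma_r^{hp}>r\tilde\mu(1-\delta)]\geq 1-e^{-r\delta^2\tilde\mu^2/4}$, as stated. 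For the attackers I would set $\tilde\Gamma_0^{\mathrm{att}}=0$ and verify that $(\tilde\Gamma_r^{\mathrm{att}})_{r\geq 0}$ is a supermartingale for $\mathcal{F}_r=\sigma(\tilde T_1^{\mathrm{att}},\dots,\tilde T_r^{\mathrm{att}})$: by linearity of conditional expectation, $\mathbb{E}[\tilde\Gamma_r^{\mathrm{att}}\mid\mathcal{F}_{r-1}]=\mathbb{E}[\tilde T_r^{\mathrm{att}}\mid\mathcal{F}_{r-1}]+\tilde\Gamma_{r-1}^{\mathrm{att}}$, and the first term equals $\mathbf{q}\cdot\nabla F|_{\mathbf{q}}\leq 0$ for whichever $\mathbf{q}\in\mathcal{A}$ the attackers play in round $r$, by the previous paragraph. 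Since $|\tilde T_i^{\mathrm{att}}|\leq 1$, Azuma's inequality (Lemma~\ref{Lemma:Azuma}) with $\beta_k=1$ gives $\Pr[\tilde\Gamma_r^{\mathrm{att}}\geq r\tilde\mu(1-\delta)]\leq e^{-\frac{r}{2}(\tilde\mu(1-\delta))^2}$, which is eq.~\eqref{eq:sequential_rep_loss} for \textbf{S1} and \textbf{S2}.

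For \textbf{S3}, the bound only holds on the set $\mathcal{R}\subset[r]$ of size $(1-1/k)r$ supplied by Corollary~\ref{cor:final_corollary_upperbound_adaptive}. I would split $\tilde\Gamma_r^{\mathrm{att}}=\sum_{i\in\mathcal{R}}\tilde T_i^{\mathrm{att}}+\sum_{i\notin\mathcal{R}}\tilde T_i^{\mathrm{att}}$ and bound the second sum by $|\mathcal{R}^c|\leq r/k$ using $\tilde T_i^{\mathrm{att}}\leq 1$, so that $\Pr[\tilde\Gamma_r^{\mathrm{att}}\geq r\tilde\mu(1-\delta)]\leq\Pr[\sum_{i\in\mathcal{R}}\tilde T_i^{\mathrm{att}}\geq r(\tilde\mu(1-\delta)-1/k)]$. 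The process $\tilde\Gamma_{\mathcal{R}}:=\sum_{i\in\mathcal{R}}\tilde T_i^{\mathrm{att}}$ is a supermartingale for the same reason as before, with at most $r$ increments bounded by $1$, so Azuma's inequality delivers $e^{-\frac{r}{2}(\tilde\mu(1-\delta)-1/k)^2}$, i.e.\ eq.~\eqref{eq:sequential_rep_loss_ScM3}. Finally, the alternative conclusion that differing commitments are detected with probability at least $1-3\cdot10^{-9}$ is inherited directly from Corollaries~\ref{cor:final_corollary_upperbound} and~\ref{cor:final_corollary_upperbound_adaptive}.
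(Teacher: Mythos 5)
Your proposal is correct and follows essentially the same route as the paper: the same payoff $\tilde T_i$ built from $\nabla F$, the same supermartingale construction with $\mathbb{E}[\tilde T_i^{\mathrm{att}}\mid\mathcal{F}_{i-1}]=\mathbf{q}\cdot\nabla F|_{\mathbf{q}}\leq 0$, Azuma with $\beta_k=1$, and the restriction to the good set $\mathcal{R}$ with the $r/k$ correction for \textbf{S3}. The honest-prover Chernoff bound and the inherited detection probability also match the paper's argument.
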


\begin{proof}   
For any attackers, consider their corresponding score $\Gamma_r^{att}$. Due to \eqref{eq:secure_region}, ${\mathbb E[T_{i}^{\mathrm{att}}]=\textbf{q}\cdot \nabla F\vert_{\textbf{q}}\leq 0}$ for all $i\in[r]$ in the \textbf{S1} and \textbf{S2}.  Define $\Gamma^{\mathrm{att}}_0=0$. The process $\Gamma=(\Gamma^{\mathrm{att}}_r:r\geq 0)$ is a supermartingale relative to the filtration  $\mathcal{F}_r= \sigma( T_1^{\mathrm{att}},..., T_{r}^{\mathrm{att}})$. In fact, 
\begin{equation}
    \mathbb E[\Gamma_{r}^{\mathrm{att}}\mid \mathcal{F}_{r-1}]= \mathbb E[ T_{r}^{\mathrm{att}}\mid  \mathcal{F}_{r-1}]+\mathbb E[ \Gamma_{r-1}^{\mathrm{att}}\mid  \mathcal{F}_{r-1}]\leq  \Gamma_{r-1}^{\mathrm{att}},
\end{equation}
which is the definition of a supermartingale. The first equality is due to the linearity of the conditional expectation and the inequality is due to the fact that $\mathbb E[ T_{r}^{\mathrm{att}}\mid  \mathcal{F}_{r-1}]=\textbf{q}\cdot \nabla F\vert_{\textbf{q}} \leq 0$ for any $\textbf{q}\in\mathcal{A}$ the attackers chose at the round $r$ if it depends on the previous rounds in any way, and $ \Gamma_{r-1}^{\mathrm{att}}$ is $\mathcal{F}_{r-1}$-measurable. Since $||\nabla F||=1$, $\abs{ T^{\mathrm{att}}_i}=\max_{j\in\{1,2,3\}}\abs{\nabla F_{x_j} }\leq 1$, then, an immediate application of Azuma's inequality (Lemma \ref{Lemma:Azuma}) with $\beta_k=1$ leads to \eqref{eq:sequential_rep_loss}. 

For \textbf{S3}, let $\mathcal{R}$ be the set of indices $i\in[r]$ such that we have a bound (see Corollary \ref{cor:final_corollary_upperbound_adaptive}), which, by construction, is of size $\left(1-\frac{1}{k}\right)r$. Then, 
\begin{equation*}
    \Pr\left[\Gamma_{r}^{\mathrm{att}}\geq r\mu(1-\delta)\right]=\Pr\left[\sum_{i\in\mathcal{R}}T_i^{\mathrm{att}}\geq r\mu(1-\delta)-\sum_{i\notin\mathcal{R}}T_i^{\mathrm{att}}\right]\leq\Pr\left[\sum_{i\in\mathcal{R}}T_i^{\mathrm{att}}\geq r\big(\mu(1-\delta)-\frac{1}{k}\big)\right],
\end{equation*}
where the inequality follows from  using  $T_i^{\mathrm{att}}\leq 1$  for all $i\notin\mathcal{R}$. Then, the bound \eqref{eq:sequential_rep_loss_ScM3} follows analogously by considering the supermartingale $\Gamma_{\mathcal{R}}^{att}:=\sum_{i\in\mathcal{R}}T_i^{\mathrm{att}}$.

\end{proof}

\begin{figure}[htbp]
    \centering
\begin{tikzpicture}[scale=0.75,x=0.75pt,y=0.75pt,yscale=-1,xscale=1]

\draw    (241,312) -- (240.01,30) ;
\draw [shift={(240,27)}, rotate = 89.8] [fill={rgb, 255:red, 0; green, 0; blue, 0 }  ][line width=0.08]  [draw opacity=0] (8.93,-4.29) -- (0,0) -- (8.93,4.29) -- cycle    ;
\draw    (241,312) -- (526,312) ;
\draw [shift={(529,312)}, rotate = 180] [fill={rgb, 255:red, 0; green, 0; blue, 0 }  ][line width=0.08]  [draw opacity=0] (8.93,-4.29) -- (0,0) -- (8.93,4.29) -- cycle    ;
\draw    (241,312) -- (45.61,423.51) ;
\draw [shift={(43,425)}, rotate = 330.29] [fill={rgb, 255:red, 0; green, 0; blue, 0 }  ][line width=0.08]  [draw opacity=0] (8.93,-4.29) -- (0,0) -- (8.93,4.29) -- cycle    ;
\draw [line width=1.5]    (241,97) -- (95,395) ;
\draw [line width=1.5]    (241,97) -- (447,313) ;
\draw [line width=1.5]    (95,395) -- (447,313) ;
\draw [color={rgb, 255:red, 0; green, 0; blue, 0 }  ,draw opacity=1 ]   (131,321) .. controls (170,316) and (228,336) .. (235,364) ;
\draw [fill={rgb, 255:red, 223; green, 223; blue, 223 }  ,fill opacity=1 ]   (131,321) -- (200.06,315.35) -- (241,312) ;
\draw    (235,364) -- (241,312) ;
\draw    (168.2,322.8) -- (241,312) ;
\draw [fill={rgb, 255:red, 222; green, 212; blue, 212 }  ,fill opacity=1 ]   (184,326) -- (200.02,322.07) -- (241,312) ;
\draw    (195.2,329.8) -- (241,312) ;
\draw    (211.2,337.8) -- (241,312) ;
\draw    (219.2,343.8) -- (241,312) ;
\draw    (226.2,348.8) -- (241,312) ;
\draw    (232.2,357.8) -- (241,312) ;
\draw    (184,326) -- (209.57,263.85) ;
\draw [shift={(210.33,262)}, rotate = 112.37] [color={rgb, 255:red, 0; green, 0; blue, 0 }  ][line width=0.75]    (10.93,-4.9) .. controls (6.95,-2.3) and (3.31,-0.67) .. (0,0) .. controls (3.31,0.67) and (6.95,2.3) .. (10.93,4.9)   ;

\draw (159.9,269.54) node [anchor=north west][inner sep=0.75pt]  [font=\large,rotate=-359.26]  {$\nabla F$};
\draw (307,233.4) node [anchor=north west][inner sep=0.75pt]  [font=\LARGE]  {$\mathcal{A}$};
\draw (134,333.4) node [anchor=north west][inner sep=0.75pt]  [font=\LARGE]  {$\mathcal S$};
\draw (196.2,339.2) node [anchor=north west][inner sep=0.75pt]  [font=\Large]  {$\gamma $};
\draw (18,419.4) node [anchor=north west][inner sep=0.75pt]  [font=\large]  {$p_\textsc{c}$};
\draw (543,303.4) node [anchor=north west][inner sep=0.75pt]  [font=\large]  {$p_{\perp }$};
\draw (216,24.4) node [anchor=north west][inner sep=0.75pt]  [font=\large]  {$p_{\textsc{i}}$};
\end{tikzpicture}
\caption{2-dimensional probability simplex $\Delta_2$ with secure subset $\mathcal{S}$ defined by the curve $\gamma$ for a protocol  \textsf{c-}$\mathsf{P}_{\eta_{V},\eta_{P}}$.}\label{Fig_probability_simplex}
\end{figure}
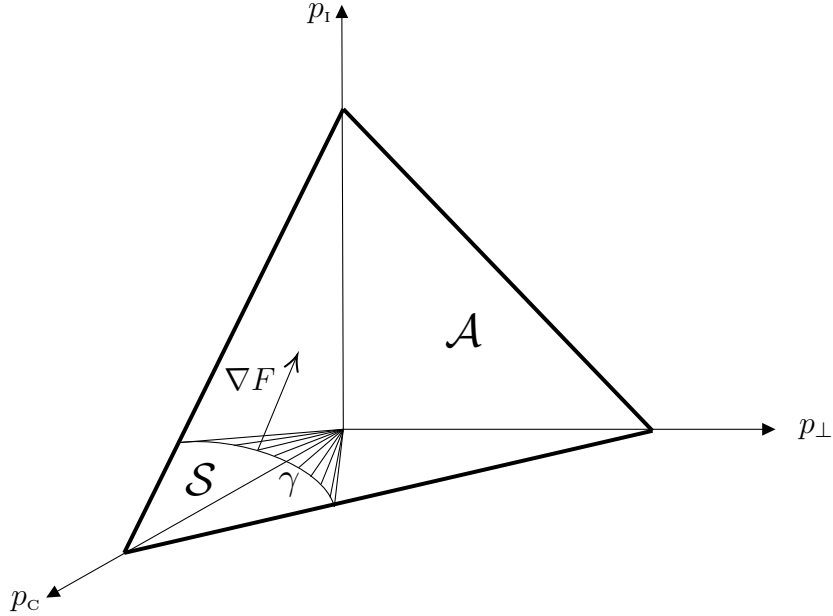

\section{Possible experimental realization of QPV with commitment}\label{sec:Practice}
For our protocol with commitment, the honest prover needs a device detecting the presence of the input quantum state\footnote{We will focus on photonic qubits.} without destroying it, i.e.~a photon presence detector, also known as quantum non-demolition (QND) measurement. We will consider two feasible ways to meet this requirement. What's important for the security of c-QPV is how much loss and error this operation introduces in the prover's setup. The main goal of c-QPV is to make the (large) transmission loss between the verifiers and the prover irrelevant for security. 

\subsubsection*{Transmission in the prover's laboratory}\label{sec:etaPsec}
The relevant transmission rate for security is the one in the prover's laboratory ($\eta_P$). It strongly depends on the actual setup used, so we will only give rough estimates of $\eta_P$. Note that
\begin{align}\label{equ:etaP1}
    \eta_P = \mathbb{P}[ \text{photon measured} \,|\, \text{presence detected}] = \frac{ \mathbb{P}[ \text{photon measured} \wedge \text{presence detected} ]}{ \mathbb{P}[ \text{presence detected} ] }.
\end{align}
The presence of a photon is concluded either due to the photon being present and detected $(\eta_V \eta_{\mathrm{det}}^{\mathrm{QND}})$ \textit{or} due to a dark count in the presence detection $(p_\mathrm{dc}^\mathrm{QND})$. Given the photon is heralded, successful measurement happens if
\begin{itemize}
    \item either the photon survived the presence detection $(\eta_\mathrm{surv})$ \textit{and} was not lost before measuring it $(\eta_\mathrm{equip})$ \textit{and} the measurement detector registered it $(\eta_\mathrm{det})$ \textit{or}
    \item (the measurement detector registered a dark count $(p_\mathrm{dc})$ when the photon did not survive the presence detection \textit{or} was lost before measurement) \textit{or} (the measurement detector registered a dark count when the presence detection also registered a dark count).
\end{itemize}
We absorb all losses after the presence detection into one term denoting the efficiency of the photon measurement $\eta_\mathrm{meas} = \eta_\mathrm{det} \eta_\mathrm{equip} \eta_\mathrm{surv}$. Using the above reasoning we can write out the probabilities in eq.~\eqref{equ:etaP1} as\footnote{For the event of a dark count it is implicit that the input photon was not detected. In our notation factors of $1-\eta_\mathrm{meas}$ or $1-\eta_V \eta_\mathrm{det}^\mathrm{QND}$ are included in the corresponding dark count variable.}
\begin{align}\label{equ:etaP2}
    \eta_P = \frac{ \left( \eta_\mathrm{meas} + p_\mathrm{dc} \right) \eta_V \eta_{\mathrm{det}}^{\mathrm{QND}} + p_\mathrm{dc} p_\mathrm{dc}^\mathrm{QND}}{\eta_V \eta_{\mathrm{det}}^{\mathrm{QND}} + p_\mathrm{dc}^\mathrm{QND}}.
\end{align}
Notice that\footnote{$p_\mathrm{dc}$ is negligible compared to the other term, so we neglect the second term in the bracket of eq.~\eqref{equ:etaP2} for eq.~\eqref{equ:etaPcases}.}
\begin{align}\label{equ:etaPcases}
    \text{if } \eta_V \ll p_\mathrm{dc}^\mathrm{QND}&: \qquad \eta_P \sim p_\mathrm{dc}.
\end{align}
If the probability that a photon enters the presence detector ($\eta_V$) is much smaller than the dark count rate $p_\mathrm{dc}^\mathrm{QND}$ then most photon presence detection events, and thus $c=1$ commitments, will be due to dark counts! Then the (e.g.\ polarization) measurement on the photon will not give a click most of the time, making $\eta_P$ very small. In the limit $\eta_V \to 0$ we obtain $\eta_P \to p_\mathrm{dc}$ as expected. Single photon detectors routinely achieve $p_\mathrm{dc} \sim 10^{-7}$ or similar per detection window \cite{hadfield2009single}. For such small $\eta_P$ the usual lossy attack of guessing the provers' measurement setting (with probability $1/m$) still works because in practice we would not be able to use a high enough number of measurement settings $m$ such that $\eta_P > 1/m$. So introducing the commitment step would not help when $\eta_V \ll p_\mathrm{dc}^\mathrm{QND}$.

Let us write $\eta_V = \gamma p_\mathrm{dc}^\mathrm{QND}$ for some constant factor $\gamma$. We define the signal-to-noise ratio of the presence detection as
\begin{align} \label{equ:SNR_QND}
    \mathtt{SNR}_{\mathrm{QND}}(\gamma) = \frac{\eta_V \eta_\mathrm{det}^\mathrm{QND}}{\eta_V \eta_\mathrm{det}^\mathrm{QND} + p_\mathrm{dc}^\mathrm{QND}} = \frac{\gamma \eta_\mathrm{det}^\mathrm{QND}}{\gamma \eta_\mathrm{det}^\mathrm{QND} + 1}.
\end{align}
We have already argued that in the case $\eta_V \ll p_\mathrm{dc}^\mathrm{QND}$ our proposal is not suitable. Let's therefore focus on the case where $\eta_V$ is at least the same order of magnitude of $p_\mathrm{dc}^\mathrm{QND}$, corresponding to $\gamma \geq 1$. Then, using that $p_\mathrm{dc}$ usually is negligibly small compared to the other quantities, we can simplify $\eta_P$ as follows,
\begin{align}
    \eta_P \sim \mathtt{SNR}_\mathrm{QND}(\gamma) \eta_\mathrm{meas}.
\end{align}
The condition that the input transmission needs to be larger than $p_\mathrm{dc}^\mathrm{QND}$ will limit the distance between the verifiers and the prover. This, however, is not a characteristic of our protocol -- it is an issue for any quantum communication protocol, as any protocol fails if most signals are noise originating from dark counts.

\subsubsection*{Distance between verifiers and prover}
The transmission law for optical fibers reads $\eta = 10^{-\alpha L/10}$ \cite{senior2009optical}, where $\alpha$ is the attenuation of the fiber in dB/km and $L$ is the fiber length in km. A standard value for current telecom-wavelength optical fibers is around $\alpha = 0.2\,$dB/km \cite{senior2009optical}, with the most sophisticated hollow-core ones achieving $\alpha = 0.09\,$dB/km \cite{petrovich2025broadband}. We can solve for $L$ and insert $\eta_V$ in terms of the presence-detection dark count rate to obtain
\begin{align}\label{equ:Lfiber}
    L = -\frac{10}{\alpha} \log_{10} \left( \gamma p_\mathrm{dc}^\mathrm{QND} \right).
\end{align}

\subsubsection*{Rate of the protocol}
There are several processes that we'd like to do at a high rate in our protocol: generating single photons, modulating their polarization state, generating EPR pairs, fast switching between measurement settings depending on $f(x,y)$, and detecting single photons. State-of-the-art equipment is able to achieve the following rates (order of magnitude) today or in the near future:

\begin{itemize}
    \item Single photon generation: MHz, in principle up to GHz \cite{migdall2020}
    \item Polarization modulation: up to GHz \cite{Li2019}
    \item EPR state generation: up to GHz, depending on pump laser power \cite{lohrmann2018, anwar2021},
    \item Switching: up to THz \cite{chai2017}
    \item Single photon detector count rate: up to GHz \cite{hadfield2009single}
\end{itemize}

Therefore, we expect the components included in our protocol can be run at least at MHz rate, and potentially at GHz rate with top equipment, albeit we acknowledge that it may be challenging to run all these processes at high rates simultaneously. The achievable rate of a setup will strongly depend on the equipment and architectures used, thus we only state current maximally achievable values here and refer to the cited articles and reviews for more details. The rate of the protocol will determine the time that is needed to reach the required number of rounds, as stated in Corollary~\ref{cor:final_corollary_upperbound_adaptive}. 

The total number of rounds $R$ that we expect to run to get $640 k \cdot (k^2(4+3g(\eta_P))^2) = O(k^7)$ rounds with commitment to play $(c=1)$ is $R = O(k^7)/p_\mathrm{commit}$. The probability $p_\mathrm{commit}$ is the probability of a successful QND measurement, which in a loss-only model would just be the transmission rate $\eta_V$. For a distance of $L = 100\,$km and attenuation of around $\alpha = 0.1\,$dB/km for hollow-core fibers (resp. $0.2\,$dB/km for standard ones), the fiber transmission law yields a rate of $\eta_V = 10\,\%$ (resp. $1\,\%$). Therefore, we can estimate $R \sim 10 \cdot  O(k^7)$ (resp. $100 \cdot  O(k^7)$). With $k$ being chosen to be around $10$ in order to keep the additive error in $\mathbb P[\mathrm{attack } \, \emph{\textsf{c-}}\mathsf{P}^{\mathrm{seq}}_{\eta_{V},\eta_P}]$ low, that means $R \sim 10^8$ (resp. $10^9$) for a single successful round of quantum position verification.

Thus, at $L=100\,$km, we would need a component rate of $100\,$MHz to $1\,$GHz to achieve a single successful round of quantum position verification per second. To suppress the probability that we accidentally accept attackers, we further need to sequentially repeat the QPV protocol $r$ times. As discussed in Section \ref{sec:SeqRep}, this suppression is exponential, so a modest amount of repetitions $r$ should suffice. Consider $r \sim 10$, which would lead to a total number of rounds $R_{\mathrm{\mathsf{c}-QPV}} \sim 10^9$ (resp. $10^{10}$) for a decisive position verification of the honest prover.

Hence, for a component rate of $10^a\,$Hz, it would take $10^{9-a}$ (resp. $10^{10-a}$) seconds to achieve the required number of rounds $R_{\mathrm{\mathsf{c}-QPV}}$ to have confidently verified the position of the honest prover against attackers. Therefore, if the components can be run at $100\,$MHz for example, 10 (resp. 100) seconds would suffice. Such a slow rate of $0.1\,$Hz (resp. $0.01\,$Hz) may be sufficient for position verification, however, as realistically speaking an object cannot move far in a few seconds.

In general, if the protocol is run at component frequency $\nu$, then the expected protocol duration $t_\mathsf{P^c}$ in seconds is therefore
\begin{align}
    t_{\mathsf{c}\text{-}\mathsf{P}}(\eta_V, \eta_P, p_\mathrm{err}, \nu, k, r) = \frac{640 k \cdot (k^2(4+3g(\eta_P))^2)}{p_\mathrm{commit}(\eta_V, p_\mathrm{err}) \cdot \nu} \cdot r.
\end{align}
Given a choice of security parameter $k$, a probability to commit $p_\mathrm{commit}(\eta_V, p_\mathrm{err})$ from the prover\footnote{Which would just be $\eta_V$ in a loss-only model and if the prover had perfect equipment.}, a loss rate $\eta_P$ in the prover lab, an achievable protocol frequency $\nu$, and a choice of sequential repetitions $r$, one can then estimate how long it takes to run the protocol with the security guarantee given in Corollary~\ref{cor:final_corollary_upperbound_adaptive}.

\subsection{True photon presence detection}
Recently, a breakthrough paper \cite{niemietz2021nondestructive} demonstrated true non-destructive detection of photonic qubits. To do so, they prepare a $^{87}$Rb atom in an optical cavity in the superposition state $\ket{+} = \left( \ket{0} + \ket{1} \right)/\sqrt{2}$, where $\ket{0}$ and $\ket{1}$ denote certain energetic states of the atom. The optical cavity is tuned such that a photon cannot enter the cavity if the atom is in state $\ket{0}$, but is allowed to enter if the state is $\ket{1}$. In that case it gets reflected from one wall before leaving the cavity again, acquiring a $\pi/2$ phase shift. This interaction adds a phase to the combined photon-atom state, i.e.\ $\ket{\psi_{\mathrm{photon}}}\ket{1} \mapsto - \ket{\psi_{\mathrm{photon}}}\ket{1}$, changing the atom state from $\ket{+}$ to $\ket{-}$. Then a rotation is applied, mapping the atomic state $\ket{+} \mapsto \ket{1}$ and $\ket{-} \mapsto \ket{0}$, after which it is measured. If the result is $0$ there was a photon interacting with the atom, if the result is $1$ there was not. This measurement thus heralds the presence of a photon in the output mode of the optical cavity, which can be sent to a polarization measurement for example. \cite{niemietz2021nondestructive} achieves the following relevant experimental parameters for their photon presence detector, which we can expect to improve in the future:

\begin{align}\label{equ:QND_params}
\begin{split}
    \text{Photon in output mode given heralding $(\eta_\mathrm{surv})$: } &\sim \text{25-55}\%, \\
    \text{Dark count rate $(p_\mathrm{dc}^\mathrm{QND})$: } &\sim 3\%, \\
    \text{Fidelity of photon in output mode: } &\sim 96\%.
\end{split}
\end{align}

\noindent Note that $\eta_\mathrm{surv}$ depends on the dark count rate and was measured using weak coherent light in \cite{niemietz2021nondestructive} rather than true single photons. We take the stated range from their Figure 3b.

Even though this technology is currently unusable for \textsf{c-}QPV due to the high dark count rate (relative to realistic $\eta_V$ over longer distances), we can expect the parameters to improve significantly in the future. A true photon presence detector such as this could therefore be a clean and viable long-term solution for \textsf{c-}QPV. 

\subsection{Simplified presence detection via partial Bell measurement}
For the near term, we consider a simplified photon presence detection based on a partial linear-optical Bell measurement. Essentially, the prover has to prepare a Bell state and teleport the input state to himself when it arrives. A conclusive\footnote{We will define which click patterns count as successful further in Figure \ref{fig:BSM}.} Bell measurement (BSM) heralds the presence of the input state, after which the prover briefly stores it until she receives the classical information $x, y$ and measures it with the appropriate setting based on $x, y$. Note that we do not require a full Bell measurement. Even just discriminating 1 out of 4 Bell states via interference at one beam splitter would be enough. The scheme in Figure~\ref{fig:BSM} \cite{weinfurter1994experimental,braunstein1995,michler1996interferometric} can distinguish 2 out of 4 Bell states, doubling the efficiency, while just using linear-optical equipment. Importantly, this scheme has first been demonstrated a long time ago \cite{michler1996interferometric} and is experimentally feasible today.

\begin{figure}[htbp]
    \centering
    \includegraphics[width=0.45\linewidth]{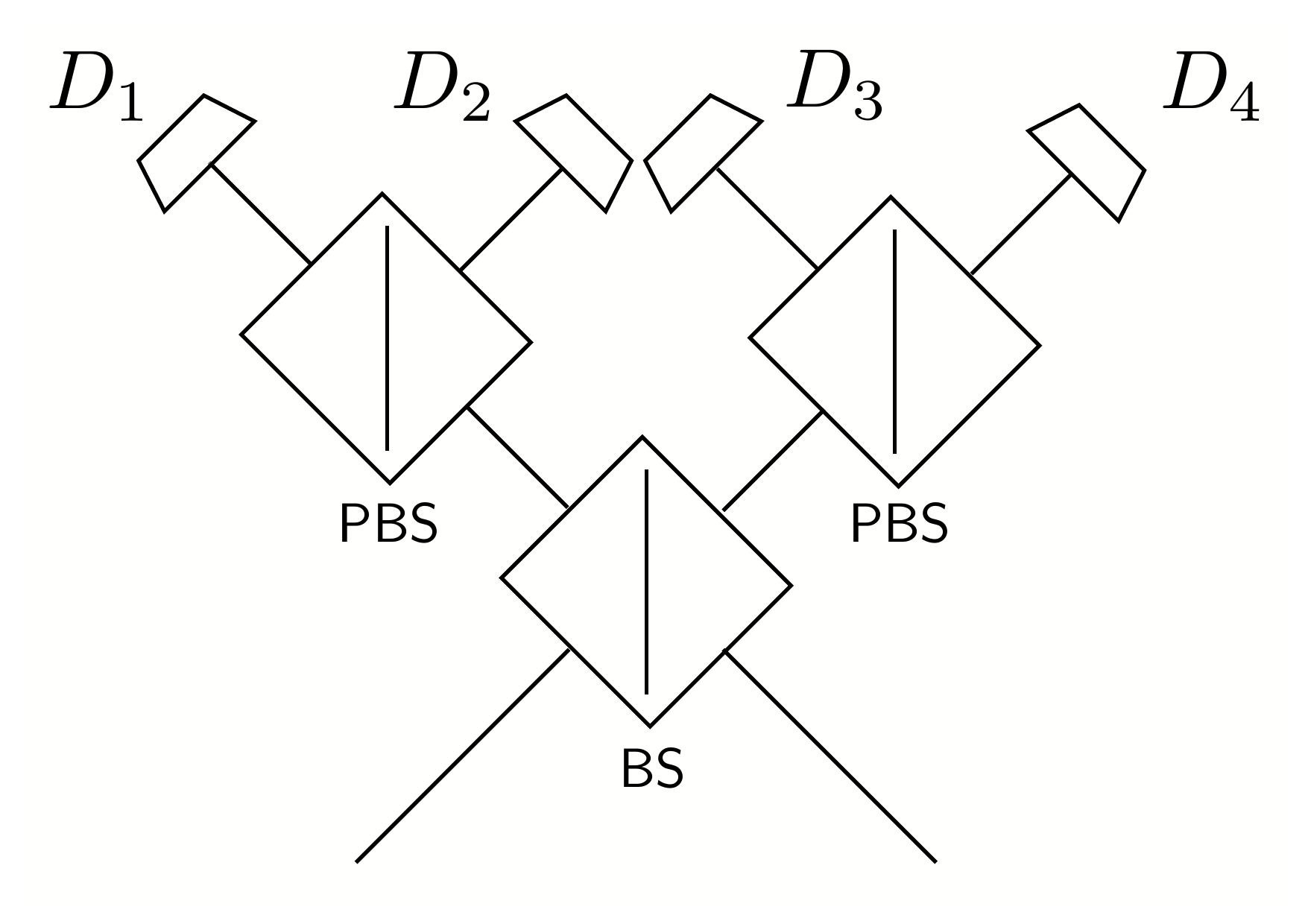}
    \caption{Schematically a partial Bell measurement can be implemented via a 50/50 beam splitter (\textsf{BS}), two polarization beam splitters (\textsf{PBS}) and four single photon detectors ($D_i$). An input state $\ket{\Psi_-}$ triggers one detector in each arm ($D_1, D_3$ or $D_2, D_4$), $\ket{\Psi_+}$ triggers two detectors in one arm ($D_1, D_2$ or $D_3, D_4$) and the states $\ket{\Phi_+}$, $\ket{\Phi_-}$ could trigger any, but just one, detector. So one can only conclusively distinguish $\ket{\Psi_-}$ and $\ket{\Psi_+}$, giving an efficiency of at most $50\%$, which is optimal for linear optics \cite{calsamiglia_maximum_2001}, but can be improved using extra auxiliary photons \cite{Ewert2014betterBSM,Bayerbach2023betterBSM} . Any click patterns other than the ones corresponding to $\ket{\Psi_\pm}$ are deemed as ``no-detection'' events.}
    \label{fig:BSM}
\end{figure}

First, note that any losses or inconclusive click patterns in the BSM itself will simply reduce the transmission $\eta_V$. This will jeopardize security only if it makes $\eta_V$ so small that dark counts take over. Moreover, it may be that the teleportation corrections do not need to be actively applied but can be classically calculated and corrected, as is the case when they just flip the measurement result predictably like in \textsf{c-}\QPVBBf~for example. So then only a partial, linear-optical BSM and (very short) storage of the other EPR qubit would be required experimentally.

If we assume that the honest prover can generate entanglement when she expects the verifiers' input to arrive, then most of the time there will be one photon (the one from the EPR pair) going into the BSM setup, and only one dark count is needed for a false positive event. The relevant photon presence detection dark count rate would then be just the one of a conventional single photon detector, i.e.~$p_\mathrm{dc}^\mathrm{QND} \sim p_\mathrm{dc}$. The presence-detection efficiency $\eta_\mathrm{det}^\mathrm{QND}$ for such a BSM would be the efficiency of detecting both photons if they are present, i.e.\ $\eta_\mathrm{det}^\mathrm{QND} = \eta_\mathrm{det}^2$. Moreover, the value of $\eta_\mathrm{meas} = \eta_\mathrm{det} \eta_\mathrm{equip} \eta_\mathrm{surv}$ depends on the equipment post-presence-detection, but is certainly upper bounded by $\eta_\mathrm{det}$. So we have an upper bound of
\begin{align}\label{equ:etaP_est}
    \eta_P \sim \mathtt{SNR}_{\mathrm{QND}}(\gamma) \eta_\mathrm{meas} \leq \frac{ \gamma \eta_\mathrm{det}^3}{\gamma 
    \eta_\mathrm{det}^2+1}.
\end{align}
Easy-to-use single photon detectors have detection efficiencies of up to $20$-$65\%$ \cite{hadfield2009single}, and the most sophisticated detectors reach up to $98\%$\footnote{Note that detection efficiencies always depend on the wavelength of the photons used.} \cite{Reddy:20}. In reality there will also be losses pre-measurement, making the true value in eq.~\eqref{equ:etaP_est} smaller than the upper bound. If these can be kept small enough, however, the true value of $\eta_P$ will be close to the upper bound in eq.~\eqref{equ:etaP_est} and secure \textsf{c-}QPV becomes possible if this value is large enough to prevent lossy attacks\footnote{Meaning higher than the basis guessing probability $1/m$ or higher than the values obtained in \cite{escolafarras2022singlequbit} for \textsf{c-}\QPVBBf, for example.}.

We summarize our findings in the following remark.
\begin{remark}\label{rem:nec_cond_cqpv}
    \textsf{c-}QPV makes a class of previously not loss-tolerant QPV protocols, with \QPVBBf~as a prime example, loss-tolerant even in practice as long as both the signal-to-noise ratio of the photon presence detection $\mathtt{SNR}_{\mathrm{QND}}$ and the efficiency of the prover measurement $\eta_\mathrm{meas}$ are sufficiently high such that $\eta_P$ is high enough to prevent lossy attacks\footnote{For example as studied in \cite{escolafarras2022singlequbit} for \QPVBBf, which carries over to our \textsf{c-}\QPVBBf.}. The signal-to-noise ratio $\mathtt{SNR}_{\mathrm{QND}}$ depends on the transmission $\eta_V$ between the verifiers and the prover, the dark count rate $p_\mathrm{dc}^\mathrm{QND}$, and the detection efficiency $\eta_\mathrm{det}^\mathrm{QND}$. This ultimately limits the maximal distance between the verifiers and the prover\footnote{To much larger distances than previously possible for QPV, however.}. The experimental requirements of our proposal in the prover laboratory are:
    \begin{itemize}
        \item The prover needs to be able to generate an EPR pair when she expects the input qubit to arrive, but this process is allowed to only work probabilistically
        \item Photon presence detection, e.g.~via a partial BSM (like the scheme in Figure~\ref{fig:BSM})
        \item A short delay loop so the prover can store the teleported qubit until the classical information $x,y$ arrives. This time delay shall be made as short as possible.
        \item The prover needs to be able to do the measurement depending on $x,y$ and should be able to quickly switch between different measurements based on the value of $f(x,y)$.
    \end{itemize}
    The verifiers need to be able to generate and modulate single photon states (e.g.\ polarization) with high frequency.
    
    All requirements are practically feasible, or within reach, with state-of-the-art equipment.
\end{remark}

\section{Discussion}
The three major roadblocks for practically implementable and secure QPV are: entangled attackers, slow honest quantum communication and signal loss. On top of that, the honest protocol must be experimentally feasible. So far, no QPV protocol was able to deal with all of those issues. Our work presents the first such protocol: \textsf{c-}\QPVBBf. This opens up a feasible route to the first experimental demonstration of a QPV protocol that remains secure in a practical setting over long distances. We propose two options to do the required non-demolition photon presence detection: a clean and viable long-term solution \cite{niemietz2021nondestructive}, assuming the non-destructive detector parameters will improve in the future, and a simpler near-term solution via a partial Bell state measurement \cite{michler1996interferometric} that can be implemented with just a few linear-optical components and conventional click/no-click single photon detectors. Given a sufficiently low dark-count rate in the photon presence detection and sufficiently low loss in the prover's laboratory, secure QPV can be achieved in principle. \textsf{c-}\QPVBBf~has two further major advantages: the quantum resources required for an attack scale in the classical input size (which can easily be made very large) and in case the prover uses the partial Bell measurement for photon presence detection, he does not need to actively apply any teleportation corrections, but can passively calculate and correct them instead, as they predictably flip the measurement outcome. By analyzing the rounds in which both attackers commit we find that when we run enough rounds attacking the committing version of the protocols becomes as hard as the underlying protocol. It would be interesting if we can use the fact that it is also difficult for attackers to always answer equally on `no commit' rounds in the analysis to get better bounds on the number of rounds we have to run. We argue that all the experimental requirements are in principle feasible and that in principle our protocol can be run at high rates. These properties taken together make \textsf{c-}\QPVBBf~the first QPV protocol that can successfully deal with all the major practical issues of QPV. 

Our result is not limited to \QPVBBf~per se, but can be applied to any QPV protocol that shares the same structure as \QPVBBf~and remains secure if the input state is replaced by any adversarial input state not depending on the classical input information $x,y$. It would be interesting to investigate whether our modification, introducing a prover commitment to play, can find application for other types of QPV protocols, or whether it can make other security models, like the random oracle model \cite{unruh_quantum_2014}, loss-tolerant.

\subsubsection*{Acknowledgments}
We thank Adrian Kent for an interesting initial discussion, pointing out photon presence detection to us. We further thank Wolfgang Löffler and Kirsten Kanneworff for helpful discussions. RA and HB were supported by the Dutch Research Council (NWO/OCW), as part of the Quantum Software Consortium programme (project number 024.003.037). AB is supported by the French National Research Agency in the framework of the ``France 2030” program (ANR-11-LABX-0025-01) for the LabEx PERSYVAL. MC acknowledges financial support from the Novo Nordisk Foundation (Grant No. NNF20OC0059939 ‘Quantum for Life’), the European Research Council (ERC Grant Agreement No. 81876) and VILLUM FONDEN via the QMATH Centre of Excellence (Grant No.10059). This research was conducted when RA and PVL were affiliated with CWI Amsterdam and QuSoft. PVL and HB were supported by the Dutch Research Council (NWO/OCW), as part of the NWO Gravitation Programme Networks (project number 024.002.003). PVL is also supported by France 2030 under the French National Research Agency award number ANR-22-PETQ-0007. LEF and FS were supported by the Dutch Ministry of Economic Affairs and Climate Policy (EZK), as part of the Quantum Delta NL programme. Part of this work was completed while MC was Turing Chair for Quantum Software, associated to the QuSoft research center in Amsterdam, acknowledging financial support by the Dutch National Growth Fund (NGF), as part of the Quantum Delta NL visitor programme.

\bibliographystyle{alphaurl}
\bibliography{biblio.bib}

\end{document}